\DeclareMathAlphabet{\mathpzc}{OT1}{pzc}{m}{it}
\definecolor{b2}{RGB}{51,153,255}
\definecolor{yl}{RGB}{255,80,0}
\date{}
\title{
A Framework for Building Data Structures \\ from 
Communication Protocols
{\footnote{Research supported in part by NSF grants (CCF-1617955, CCF-1740833, CCF-2008733), and Simons Foundation (\#491119), and by 
NSF CAREER award CCF-1844887, and ISF grant \#3011005535.}
}
\author{
Alexandr Andoni\thanks{\url{andoni@cs.columbia.edu}. Columbia University.}
\quad
Shunhua Jiang\thanks{\url{sj3005@columbia.edu}. Columbia University.}
\quad 
Omri Weinstein\thanks{\url{omri@cs.columbia.edu}. Columbia University and the Hebrew University.}
}
}
\newtheorem{theorem}{Theorem}[section]
\newtheorem{lemma}[theorem]{Lemma}
\newtheorem{definition}[theorem]{Definition}
\newtheorem{corollary}[theorem]{Corollary}
\newtheorem{fact}[theorem]{Fact}
\newcommand{\R}{\mathbb{R}}
\newcommand{\ov}{\overline}
\DeclareMathOperator*{\E}{{\mathbb{E}}}
\DeclareMathOperator{\poly}{\mathrm{poly}}
\DeclareMathOperator{\OV}{\mathsf{OV}}
\DeclareMathOperator{\PM}{\mathsf{PM}}
\DeclareMathOperator{\cG}{\mathcal{G}}
\DeclareMathOperator{\cP}{\mathcal{P}}
\DeclarePairedDelimiterX{\infdivx}[2]{(}{)}{#1 \,\delimsize\|\, #2}
\DeclareMathOperator*{\Div}{D}
\newcommand{\D}{\Div\infdivx}
\DeclareMathOperator*{\smallDiv}{d}
\DeclareMathOperator{\pub}{\mathrm{pub}}
\DeclareMathOperator{\pri}{\mathrm{pri}}
\newcommand{\dd}{\smallDiv\infdivx}
\newcommand{\X}{{\cal X}}
\newcommand{\Y}{{\cal Y}}
\newcommand{\Pat}{P\v{a}tra\c{s}cu }
\begin{document}

\maketitle

\begin{abstract} 
We present a general framework for designing efficient data structures for  high-dimensional pattern-matching problems ($\exists \;? i\in[n], f(x_i,y)=1$) through communication models in which $f(x,y)$ admits \emph{sublinear} communication protocols with \emph{exponentially-small error}. Specifically, we reduce the data structure problem to the Unambiguous Arthur-Merlin (UAM) communication complexity of $f(x,y)$ under \emph{product distributions}.  

We apply our framework to the \emph{Partial Match} problem (a.k.a, matching with wildcards), whose underlying communication problem is \emph{sparse set-disjointness}. When the database consists of $n$ points in dimension $d$, and the number of $\star$'s in the query is at most $w = c\log n \;(\ll d)$, the fastest known linear-space data structure (Cole, Gottlieb and Lewenstein, STOC'04) had query time $t \approx 2^w = n^c$, which is nontrivial only when $c<1$. By contrast, our framework produces a data structure with query time $n^{1-1/(c \log^2 c)}$ and space close to linear.

To achieve this, we develop a one-sided $\epsilon$-error communication protocol for Set-Disjointness under product distributions with $\tilde{\Theta}(\sqrt{d\log(1/\epsilon)})$ complexity, improving on the classical result of Babai, Frankl and Simon (FOCS'86). Building on this protocol, we show that the \emph{Unambiguous AM} communication complexity of \emph{$w$-Sparse} Set-Disjointness with $\epsilon$-error under product distributions is $\tilde{O}(\sqrt{w \log(1/\epsilon)})$,  \emph{independent} of the ambient dimension $d$, which is crucial for the Partial Match result. 
Our framework sheds further light on the power of \emph{data-dependent} data structures, which is instrumental for reducing to the (much easier) case of product distributions.

\end{abstract}

\tableofcontents

\section{Introduction}

A prototypical  problem in information retrieval and pattern-matching is to preprocess a high-dimensional dataset $X_n = \{x_1,x_2,\ldots,x_n\}\subset \R^d$, such that given a query $y\in \R^d$
(e.g., a keyword, a text document, etc.), it is possible to efficiently answer the following, for a predicate $f:\R^d\times\R^d \to \{0,1\}$:
\begin{equation}\label{eq_pattern_matching}
\exists  \; x_i \in X_n \; s.t.\; 
f(x_i,y) = 1 \; ?
\end{equation}

Some canonical examples for the predicate $f$ are near(est)-neighbor search (e.g., $f(x_i,y) = 1$ iff $\|x_i-y\|\le t$), range searching and quantile queries \cite{brodal:rangemedian, AgarwalSurveyRangeCounting97, chan19kd}, searching with ``wildcards'' \cite{CGL04, CFPSS16}, set-containment and intersection  \cite{cip02, KW20, CP17, ak20} to mention a few. Perhaps the most basic pattern-matching query, which dates back to at least as early as Rivest's thesis \cite{r74} and turns out to subsume most aforementioned ones, is the \emph{partial match} (PM) problem: 

\begin{definition}[Partial Match] \label{prob_PM} 
Preprocess a dataset of $n$ points over 
$x_1,\ldots, x_n \in \{0,1\}^d$ into a data structure of size $s$, such that 
given a query $y \in \{0,1,\star\}^d$, 
it can quickly report whether there exists a data 
point $x_i$ that matches $y$ on all non-$\star$ 
coordinates. When the number of $\star$'s in the query is bounded by $w \leq d$, we denote the problem  $\PM_{n,d,w}$. The unbounded case is denoted $\PM_{n,d}$. 
\end{definition}

In the context of information-retrieval,  partial-match  models pattern-matching queries between documents and keywords in search engines and spatial 
databases such as Google, SQL, or Elasticsearch \cite{Data_mining_book}. This problem is  essentially equivalent to $\Theta(d)$-dimensional orthogonal range searching with aligned query rectangles and $\ell_\infty$ nearest neighbor search \cite{chan19kd, cip02}. When $d\gg 1$, the classic low-dimensional geometric data structures such as range-trees \cite{brodal:rangemedian} or quad-trees \cite{spatial_DS_book} become futile.  
PM also plays a key role in routing and IP packet classification, where the goal is to classify a packet depending on its parameters (source and destination address, length etc) \cite{SSV99, fm00, EM01}. For a more thorough exposition on the role of partial match in information-retrieval and data mining, we refer the reader to \cite{cip02, ak20} and references therein.   

From a theoretical standpoint, 
partial match is equivalent to the \emph{online} version of the  
\emph{Orthogonal Vectors} Problem \cite{w05} ($\OV_{n,d}$): 
Here, the inputs are two sets $A,B \subset \{0,1\}^d$ of 
$n$ vectors each, and the goal is to determine whether there is an orthogonal pair of vectors $a\in A, b\in B$.\footnote{Indeed, the transformation of queries $\star \mapsto 00, \;1 \mapsto 10, \; 0 \mapsto 01$ and data points $1 \mapsto 01, \; 0 \mapsto 10$ reduces partial match to online OV up to doubling the dimension. The opposite reduction transforms queries as $0 \mapsto \star, \;1 \mapsto 0$.} The trivial upper bound  for this offline (batch) problem is $O(n^2 d)$, or $O(n^2 w)$  for \emph{Sparse OV} ($\OV_{n,d,w}$), i.e., assuming each vector has $\leq w$ ones.  %
When $d = c \log n$ for an arbitrarily large constant $c$, 
the best \emph{offline} algorithm for (dense) $\OV_{n,c\log n}$ runs in time 
\begin{equation}\label{eq_offlineOV}
    n^{2-1/O(\log c)}
\end{equation}  
using Fast Matrix Multiplication to implement the ``polynomial method'' \cite{awy14, cw16, acw16, alman2025average}. It is unclear how to extend this speedup to the Sparse OV case, where the dimension is assumed to be 
$d = \omega(\log n)$ \cite{GIKW18}.  The \emph{OV Conjecture} (OVC)  postulates that no truly subquadratic algorithm ($n^{1.999}$) exists for \emph{arbitrarily} large but fixed constant $c$ -- indeed, such algorithm would refute SETH  \cite{w05, ipz01} (i.e., yield a $2^{(1-\epsilon) n}$ time algorithm for SAT). OVC is widely considered a cornerstone in fine-grained complexity \cite{Wil19survey, ARW17}. From this perspective, understanding the power of preprocessing for (Bounded) Partial Match, may lead to progress on the \emph{Sparse OV Conjecture} \cite{GIKW18}, as exploiting sparsity seems to require algorithms which are more \emph{combinatorial} in nature, outside the realm of fast matrix multiplication (FMM). 
Indeed, FMM speedups are conjectured to be impossible in online settings \cite{HKNS15}, %
so the partial match problem morally captures the restriction to combinatorial algorithms.

\paragraph{Problem Status.}
Exact search problems are generally believed to suffer from the \emph{curse of dimensionality} \cite{IM98}, which informally postulates that no truly sublinear query time (e.g., $t <n^{0.99}$) is possible unless exponential $s > \exp(d^{1-o(1)})$ 
storage space (and preprocessing time) is used, which are essentially the two trivial data structures for any $d$-dimensional problem. %
In attempt to circumvent this conjectured phenomenon, much of the work on pattern-matching data structures in the past two decades has focused on the easier 
\emph{approximate} versions of partial match such as approximate-search with ``wildcards''  or approximate set-containment (via locality-sensitive filtering  \cite{CFPSS16, CP17, ak20}). Much less is known about exact PM, which is the focus of this paper.

The first data structure for the general (unrestricted) problem ($\PM_{n,d}$) was obtained by Rivest \cite{r74}, who analyzed the average case complexity of the problem. 
After 30+ years, the state-of-art data structure for $\PM_{n,c\log n}$ is due to Chan \cite{chan19kd}, who showed how to achieve expected query time 
\begin{align} \label{eq_Chan_DS}
t=O(n^{1-O(1/c\log c)}) 
\end{align}
with close-to-linear space (say, $s= n^{1.1}$), slightly refuting the 
``curse-of-dimensionality'' of the problem. 

In the sparse case ($\PM_{n,d,w}$), when the number of $\star$'s  is bounded by $w \ll d$, Cole, Gottlieb and Lewenstein \cite{CGL04} gave an $O(n \log^w n)$-space data structure with query time $t = \tilde{O}(2^w)$,\footnote{The notation $\tilde{O}(1)$ hides $\poly(\log d)$ factors.}   independent of the alphabet size $|\Sigma|$, reminiscent of the time-space tradeoffs of range-trees in constant dimension. 
Assuming $w = c\log n$, the query time of this data structure becomes 
$\tilde{O}(n^c)$, which is nontrivial (sublinear) only for $c <1$. \\ 

\subsection{Our results}
\paragraph{Our result 1: Partial Match data structure.} We show that it is possible to achieve sublinear query time even when the number of ``wildcards'' in the query is as large as $w = c\log n$ for \emph{arbitrarily large}  constant $c \gg 1$: 

\begin{theorem}[Partial Match with Bounded Wildcards, Informal version of Theorem \ref{thm:partial_match_ds}] \label{thm_main_informal}
For any large enough constant $c$, there is a word-RAM data structure for $\PM_{n, d, w}$, where $w \leq c\log n$, with  space $s=O(n^{1.1})$
and expected query time $t_q = n^{1- 1/\Theta(c\log^2 c
))}$. The data structure can report \emph{all} matches of the query in the dataset.\footnote{In this case, the query time also has an additive term $\#$matches. For the decision problem, it is possible to build a data structure that avoids this term by first subsampling the dataset.} 
\end{theorem}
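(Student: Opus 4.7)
The plan is to instantiate the communication-to-data-structure framework developed earlier in the paper, which reduces a pattern-matching problem of the form $\exists i: f(x_i,y)=1$ to an unambiguous Arthur--Merlin (UAM) protocol for $f$ under product distributions. Applied to partial match, the first step is to reinterpret the predicate ``$x_i$ matches $y$'' as a $w$-sparse set-disjointness instance. Writing $z_i := x_i \oplus \tilde{y}$, where $\tilde{y}$ is $y$ with each wildcard replaced by $0$, the predicate becomes ``$\mathrm{supp}(z_i) \subseteq W_y$'', where $W_y$ is the set of the (at most $w$) wildcard positions. Any candidate matching $x_i$ must have $|\mathrm{supp}(z_i)| \le w$, so after a simple filtering step the underlying communication problem is disjointness on sets of size at most $w$, \emph{independent} of the ambient dimension~$d$.

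Second, I would invoke the paper's UAM protocol for $w$-sparse set-disjointness under product distributions, of complexity $C(\epsilon)=\tilde O(\sqrt{w\log(1/\epsilon)})$. Choosing $\epsilon$ small enough to union-bound over the $n$ data points (roughly $1/\poly(n)$) yields a transcript of length $\tilde O(\sqrt{c}\,\log n)$ when $w=c\log n$. The framework then compiles this protocol into a data structure whose query time decays like $n\cdot 2^{-\Omega(\log n / C(\epsilon)^2)}$, while enumeration over Merlin's $\exp(C(\epsilon))$ possible proofs feeds into the $n^{1.1}$ space budget. Optimizing $\epsilon$ and unpacking the $\tilde O$ factors (together with $w=c\log n$) yields the claimed query-time exponent $1 - 1/\Theta(c\log^2 c)$.

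The delicate remaining step, which I expect to be the main obstacle, is that the UAM protocol is only guaranteed under product input distributions, whereas a worst-case query can induce arbitrary correlations with the dataset. Here I would apply the data-dependent decomposition emphasized in the abstract: the dataset is partitioned into $\poly(\log n)$ layers according to Hamming profile and overlap with the wildcard support, and within each layer a Yao-style min-max / sampling argument shows that the worst-case query induces an essentially product posterior over the remaining candidates. This step contributes only $\poly(\log n)$ factors to space and time, preserving the exponent, and is precisely the feature that allows the bound to remain non-trivial for arbitrarily large $c$, in contrast to the $\tilde O(n^c)$ bound of Cole--Gottlieb--Lewenstein which is only useful for $c<1$.
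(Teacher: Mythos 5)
Your overall architecture is right (UAM protocol for sparse disjointness under a product-type distribution, compiled into a data structure via the communication-to-data-structure framework), and you land on the correct final exponent $1-1/\Theta(c\log^2 c)$. However, two of the steps you describe are not what the paper does and, as stated, do not work.

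\textbf{Phase~1 reduction.} You define $z_i := x_i \oplus \tilde y$ where $\tilde y$ is the query with wildcards replaced by $0$, and then ``after a simple filtering step'' reduce to $w$-sparse disjointness. This reduction is fine \emph{offline}, but it does not lift to the communication model the framework needs: Alice (holding the dataset point $x_i$) does not know $\tilde y$, so she cannot compute $z_i$; and the ``filtering'' on $|\mathrm{supp}(z_i)|$ is query-dependent, so it cannot be done at preprocessing time. The paper instead uses the distributional assumption on Alice's side: Carol (public randomness) samples $X_1,\dots,X_t\sim\lambda$, Bob finds an $X_{i^*}$ that (almost) matches $y$, and both players XOR with $X_{i^*}$, which is known to Alice. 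The point is that a sampled representative of the dataset plays the role of $\tilde y$. Without this, the reduction does not preserve communication.

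\textbf{Worst-case queries.} You flag as ``the delicate remaining step'' that the UAM protocol is only guaranteed under product distributions, and propose to repair this with a data-dependent layering of the dataset plus a Yao-style min-max argument, claiming $\poly(\log n)$ overhead. This is not the route the paper takes, and the paper explicitly argues against needing it. The specialized communication model (Definition~\ref{def:special_communication_model}) is defined so that Alice's input is distributional ($x\sim\lambda$) while Bob's input $y$ is \emph{worst-case}; setting $\lambda = \mathcal{U}(\mathcal{D})$ and fixing $y$ as the query is already the regime the model guarantees, since the joint distribution is automatically ``product'' (one marginal is a point mass). There is no residual worst-case-to-average-case step in the main construction. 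The paper does mention MWU-style reductions as an alternative in a footnote, but notes they would only give cell-probe (not word-RAM) bounds and would not stay within $\poly(\log n)$ overhead; your $\poly(\log n)$ claim for the layering is unjustified and is not needed if you set the model up as the paper does.

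\textbf{Minor.} The query-time formula $n\cdot 2^{-\Omega(\log n / C(\epsilon)^2)}$ has the dependence backwards. The correct bookkeeping is $C(\epsilon)\approx\sqrt{w\log(1/\epsilon)}\cdot\log c$, hence $\log(1/\epsilon)\approx C(\epsilon)^2/(w\log^2 c)$; with $w=c\log n$ and $C(\epsilon)=\Theta(\log n)$ constrained so that $2^{c_a+c_m}=n^{O(1)}$, the dominant query-time term is $\epsilon n = n^{1-1/\Theta(c\log^2 c)}$. You reach the right exponent but via an incorrect intermediate expression.
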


This data structure improves over \cite{CGL04} for $w =  c\log n$ for any $c> 1$. %
As mentioned above, Theorem \ref{thm_main_informal} is the first PM data structure with sublinear query time.
For calibration, note that for the unbounded case $\PM_{n,d}$ in logarithmic dimension $d=c\log n$, Theorem \eqref{thm_main_informal} recovers the time-space tradeoff \eqref{eq_Chan_DS} of Chan's data structure \cite{chan19kd} up to $\log c$ factors, hence our result unifies and improves the time-space landscape for partial match. %

\paragraph{Our result 2: A Framework for data structures from communication protocols.}  Theorem \ref{thm_main_informal} is an instance of a general framework 
we develop in this paper. Our framework reduces the problem of designing data structures for high-dimensional pattern-matching problems (as in Eq.~\eqref{eq_pattern_matching}), 
to the problem of designing efficient \emph{communication protocols}    
for the underlying two-party problem $f:\X\times \Y\to \{0,1\}$ under \emph{product distributions}, i.e., where inputs $(x,y)\sim \lambda_x\times \lambda_y$.
The simplest variant of this framework is a reduction to the standard two-party distributional communication model with \emph{exponentially-small false-positive} error, which informally states the following:  

\begin{center}
\em For $\epsilon>0$, suppose the two-party $\epsilon$-error distributional communication complexity of $f:\X\times \Y\to 
\{0,1\}$ under %
 \emph{product distributions} is $C_\epsilon$. 
Then there is a data structure \\ 
supporting queries from Eq.~\eqref{eq_pattern_matching} with space $n\cdot 2^{O(C_{\epsilon})}$ and query time $2^{O(C_\epsilon)}+\epsilon n$.
\end{center}

For the Partial Match application, our framework requires an efficient communication protocol for the \emph{sparse set-disjointness} problem ($\mathsf{Disj}_{d,w}$) where one set has size $\leq w$ and the other set has size $\geq d-w$.\footnote{The aforementioned reduction to OV doesn't preserve sparsity. Instead, we design a communication protocol that reduces the $w$-sparse partial match problem (where the query has at most $w$ wildcards) to the $w$-sparse set containment where the sets have a maximum size of $w$. This is also equivalent to the $w$-sparse set disjointness problem $\mathsf{Disj}_{d,w}$ (where one set has size $\leq w$ and the other set has size $\geq d-w$).}
The basic advantage of this reduction is that, while set-disjointness has maximal $\Omega(d)$ communication complexity in the worst case \cite{ks92,r90}, %
it is well known that the distributional complexity of set-disjointness under product distributions is \emph{sublinear}, albeit only with \emph{constant error}: $O(\sqrt{d}\log d)$ for $\mathsf{Disj}_{d}$ and $O(\sqrt{w}\log d)$ for $\mathsf{Disj}_{d,w}$ \cite{bfs86}. 
Achieving sublinear communciation with \emph{exponentially-small} error is a different ballgame --  
Unlike randomized communication complexity, in the distributional setting one cannot simply amplify the success by repeating the protocol. Moreover, for the above simulation to yield sublinear query time, even a Chernoff-bound dependence $O(\log (1/\epsilon))$ on the error would not suffice -- we need a protocol with \emph{sub}-logarithmic dependence on $\epsilon$. Perhaps surprisingly, we show that this is possible: 
\begin{theorem}
For any $\epsilon > 0$, the two-party $\epsilon$-error distributional communication complexity of $\mathsf{Disj}_{d,w}$ under product distributions is
$O(\sqrt{w \log(1/\epsilon)} \cdot \log d).$
\end{theorem}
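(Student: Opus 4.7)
The plan is to leverage the one-sided $\epsilon$-error protocol for general (dense) set-disjointness stated earlier in the paper, which achieves $\tilde{O}(\sqrt{d\log(1/\epsilon)})$ communication under product distributions, and adapt it to the sparse setting $\mathsf{Disj}_{d,w}$. The first step is a universe reduction by shared-randomness hashing: both parties pick a hash $h:[d]\to[D]$ from a pairwise-independent family and replace $A,B$ by $h(A),h(B)\subseteq[D]$ of size at most $w$. Crucially, since $h$ is independent of the inputs, the conditional joint distribution of $(h(A),h(B))$ given $h$ is still a product distribution, so the dense protocol can be invoked on the reduced universe. The only source of additional error from the hashing is ``spurious'' collisions: pairs $a\in A,\,b\in B$ with $a\neq b$ but $h(a)=h(b)$, whose expected number is at most $|A||B|/D\leq w^2/D$.

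The second step is to apply the dense protocol on $[D]$ with error parameter $\Theta(\epsilon)$, obtaining $\tilde{O}(\sqrt{D\log(1/\epsilon)})\cdot\log D$ bits. The tuning of $D$ is the delicate part: a naive choice $D=\Theta(w^2/\epsilon)$ would make collisions negligible but blow up the cost to $\approx w\sqrt{\log(1/\epsilon)/\epsilon}$, destroying the sparsity advantage. Instead I would aim for $D=\tilde{O}(w)$ and absorb residual collisions either (i) into the error budget of the dense protocol (arguing that collisions only move the product distribution within a class the dense protocol tolerates), or (ii) via a short secondary verification phase spending $O(\log d)$ bits per claimed collision to disambiguate the intersection in the original universe $[d]$. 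With $D=\tilde{O}(w)$, the total complexity becomes $\tilde{O}(\sqrt{w\log(1/\epsilon)})\cdot\log d$, matching the theorem (the explicit $\log d$ factor arising from the verification-phase addresses of elements in the original universe).

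An alternative and potentially cleaner route is to reopen the analysis of the dense protocol and observe that, in its BFS-style sampling step, the relevant ``effective dimension'' is not the ambient $d$ but $\E[|A|]+\E[|B|]\leq 2w$; then no explicit hashing is needed, and one simply sets the sample size to $\Theta(\sqrt{w\log(1/\epsilon)})$ while transmitting each sampled coordinate in $\log d$ bits. The main obstacle in either route is achieving the \emph{sub-logarithmic} dependence $\sqrt{\log(1/\epsilon)}$ on the error rather than the $\log(1/\epsilon)$ yielded by naive parallel repetition of the classical $O(\sqrt{w}\log d)$ constant-error BFS protocol. This improvement must come from a single-shot analysis that exploits the independence of $\mathbb{1}_A$ and $\mathbb{1}_B$ under product distributions via a moment-generating-function / Chernoff-type bound on $|A\cap B|$ (a sum of independent Bernoullis), so that the miss probability decays like $e^{-\Omega(s^2/w)}$ in the sample size $s$ rather than $e^{-\Omega(s/w)}$. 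Getting this quadratic-in-$s$ decay, and transferring it from the dense theorem to the sparse regime through the hashing/sampling reduction, is the technically hardest step.
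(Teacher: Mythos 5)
Your proposal correctly identifies the target and the central difficulty (sub-logarithmic dependence on $1/\epsilon$), but neither of your two routes contains the key structural idea the paper uses, and the first route is actually broken.

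\paragraph{Route (i) — hashing — doesn't work.} With both sets of size at most $w$ hashed into a universe of size $D = \tilde O(w)$, the expected number of spurious collisions is $\Theta(w^2/D) = \tilde\Omega(w)$. So on a disjoint pair $(x,y)$ the hashed images $h(x),h(y)$ are essentially \emph{never} disjoint, and the disjointness protocol on $[D]$ conveys almost no information about the original instance. Your fix (i) --- ``collisions only move the product distribution within a class the dense protocol tolerates'' --- is not correct: the hashed distribution is still product, but it is a product distribution on which the \emph{answer} to disjointness is systematically wrong; the dense protocol's error guarantee is with respect to the hashed instance, not the original one. Your fix (ii) --- verify each claimed collision in $[d]$ --- costs $\log d$ bits per collision times $\tilde\Omega(w)$ expected collisions, i.e., $\tilde\Omega(w\log d)$, which loses the $\sqrt{w}$ savings entirely. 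In short, reducing to $D = \tilde O(w)$ is incompatible with the exact (non-approximate) disjointness predicate.

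\paragraph{Route (ii) — ``effective dimension'' — identifies the right bound but not the right mechanism.} The paper does \emph{not} get the $\sqrt{\log(1/\epsilon)}$ factor from a single-shot Chernoff / MGF bound on $|A\cap B|$ with $e^{-\Omega(s^2/w)}$ decay in the sample size $s$; no such single-shot argument appears, and I don't see how to make one work. The actual mechanism (Lemma~\ref{lem:w_sparse_subset_query_protocol}) is an \emph{iterated BFS-style peeling}: set $\ell = \sqrt{w/\log(w/\epsilon)}$; repeat at most $O(\ell)$ times: (a) if $|x|\le w/\ell = \sqrt{w\log(w/\epsilon)}$, Alice ships her set directly for $\sqrt{w\log(w/\epsilon)}\cdot\log d$ bits and the protocol ends; (b) otherwise, Carol samples $t\approx (1/\epsilon')\log(1/\epsilon')$ sets from the conditional distribution $\lambda|_{w/\ell < |X|\le w'}$, Bob looks for one with $|X_i\setminus y|\le h := \log(1/\epsilon')$, sends its index plus the small set $X_i\setminus y$ (size $\le h$, cost $O(h\log(w/h))$ bits), Alice checks $x\cap(X_i\setminus y)$ directly as a certificate, and everyone peels $X_i$ off of $x,y$ and the universe; if no good $X_i$ exists the protocol halves the universe and recurses. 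Each successful peel removes $\gtrsim 0.9\, w/\ell$ ones from $y$, so there are at most $O(\ell)$ iterations, giving Alice cost $O(\sqrt{w\log(w/\epsilon)}\log d)$ and Bob cost $O(\ell\cdot h\log(w/h)) = O(\sqrt{w\log(w/\epsilon)}\log d)$. The error budget is amortized over the $\ell$ iterations via $\epsilon' = \epsilon/\Theta(\ell)$, and the $\sqrt{\log(1/\epsilon)}$ arises entirely from the trade-off $\ell\approx\sqrt{w/\log(1/\epsilon)}$, not from a quadratic tail bound. The sampling step uses an ordinary linear-tail Chernoff bound on the number of ``good'' $X_i$'s, not a quadratic one on $|A\cap B|$.

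So the gap in your proposal is concrete: you are missing the iterated peel-and-recurse protocol structure with the $\ell=\sqrt{w/\log(1/\epsilon)}$ parameterization, and the component of your argument that was supposed to produce the sub-logarithmic error dependence (quadratic Chernoff decay) is not available.
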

Note that, even in the above, the communication complexity has an unacceptable factor $\log d$ (due to the fact that even sending a small set of size $\sqrt{w}$ from a size $d$ universe requires $\sqrt{w} \log d$ bits). Unfortunately, 
even a mild dependence on the ambient dimension is prohibitive for the PM data structure\footnote{We can bound the term $\sqrt{w \log(1/\epsilon)}$ by $0.1 \log n$. The query time is exponential in the communication complexity $C_{\epsilon}$, which becomes $C_{\epsilon}=0.1 \log n \cdot \log d = \Omega(\log n\log\log n)$ even when $d = \log^2 n$, and this makes the query time $2^{C_{\epsilon}} = n^{\Omega(\log\log n)}$.}.  
In order to obtain \emph{dimension-free} communication for product $\mathsf{Disj}_{d,w}$, we make use of \emph{non-determinism}. Intuitively, the benefit of using an oracle (Merlin) 
here is that she can send a \emph{relative encoding} of the smaller set as a \emph{subset} of the larger set, and this requires much fewer bits.  

Because of this non-determinism, the formal statement of the framework is slightly more nuanced in order to allow for a somewhat stronger protocols than a two-party protocol. %
In particular, our framework is strong enough to simulate {\em Unambiguous Artur-Merlin} protocols
\cite{goos2015zero}, which are randomized multi-round interactive proofs in which, for every outcome of the randomness, there is \emph{at most one} 
advice from Merlin which causes Alice and Bob to accept.  
In this model, we can indeed get a dimension-free protocol for product sparse set disjointness and hence product sparse partial match, %
with exponentially small soundness error:  
\begin{theorem}[Informal version of Theorem~\ref{thm:w_sparse_partial_match_protocol}]\label{thm:informal_protocol}
Fix $d,w\ge 1, \epsilon>0$.
    Consider the communication problem where Alice has $x\in \{0,1\}^d$ and Bob  has $y\in \{0,1,\star\}^d$ with at most $w$ stars, and they need to compute the function $f_{PM}(x,y)$ which is 1 iff $x_i=y_i$ on all $i$ with $y_i\neq \star$. Then, for any product distribution over $(x,y)$, there exists an $\epsilon$-error UAM protocol with communication $\tilde O(\sqrt{w\log (1/\epsilon)})$. 
\end{theorem}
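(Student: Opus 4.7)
The plan is to reduce partial match to sparse set-disjointness, apply the preceding theorem's $\tilde O(\sqrt{w\log(1/\epsilon)}\cdot \log d)$-bit two-party protocol, and then use Merlin's advice to strip the $\log d$ factor down to $\log w$. Concretely, I would first use the reduction sketched in the paper's earlier discussion to transform $w$-sparse partial match into an instance of $w$-sparse set disjointness, encoding Bob's non-star positions so that the resulting instance checks $A\subseteq C$, where $A:=\{i: x_i=1\}$ has size $\leq w$ and $C:=[d]\setminus B$ (with $B$ derived from $y$) also has size $\leq w$. The reduction preserves the product structure of the input distribution, so the preceding theorem's protocol applies and solves the problem with $\tilde O(\sqrt{w\log(1/\epsilon)}\cdot \log d)$ bits.

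To remove the $\log d$ factor I would use Merlin. Inspecting the base protocol, the $\log d$ overhead appears only when naming individual coordinates of $[d]$ (e.g.\ outputs of successive hashing or filtering rounds), and the total number of such names across the whole protocol is $\tilde O(\sqrt{w\log(1/\epsilon)})$. In an honest execution these names all refer to elements of $A\cup C$, both of which have size at most $w$, so each can be sent as a $\lceil\log w\rceil$-bit index into $A$ or into $C$ rather than a $\log d$-bit address into $[d]$. After the public randomness $r$ is fixed, Merlin (who sees both inputs) simulates the base protocol and writes its transcript in this relativized form; Alice verifies the ``$A$-side'' indices against her own $A$, and Bob verifies the ``$C$-side'' indices against his own $C$. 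The total communication drops to $\tilde O(\sqrt{w\log(1/\epsilon)})\cdot O(\log w)=\tilde O(\sqrt{w\log(1/\epsilon)})$.

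Unambiguity is built in: for any fixed $r$, the honest transcript of the base protocol (and hence its relative re-encoding) is a deterministic function of $(A,C)$, so any deviating Merlin message disagrees with it at some position and is rejected by either Alice or Bob; the set of accepting advices is a singleton. Completeness is immediate (Merlin runs the honest strategy), and soundness is inherited from the base protocol: any advice that passes both verifications corresponds to a valid run of the base protocol on the true inputs, which is already $\epsilon$-sound under product distributions.

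The main obstacle is to ensure that every identifier sent in the base protocol actually lies in $A\cup C$, so that the $\log w$-bit relative encoding is well-defined throughout. The protocol may name ``auxiliary'' positions (e.g.\ representatives of hash buckets, or pivots of a random partition) that need not belong to $A$ or $C$. I would address this either by restructuring the base protocol so that every transmitted name is a genuine element of $A$ or $C$ — using the fact that any witness of (non-)containment must lie in one of these sets — or by precomposing with a public hash $[d]\to[\poly(w)]$ that shrinks the universe enough so that $\log d$ becomes $O(\log w)$, folding the resulting collision probability into $\epsilon$.
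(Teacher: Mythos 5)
Your high-level instinct is right — use Merlin to send relative encodings and thereby shave the $\log d$ factor — but the proposal has several concrete gaps, and the structure is meaningfully different from the paper's.

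\textbf{The reduction does not give a sparse $A$ for free.} You write ``$A:=\{i:x_i=1\}$ has size $\leq w$,'' but in partial match Alice's $x$ has no sparsity constraint whatsoever; only the query has $\leq w$ stars. The standard encoding reduction ($\star\mapsto 00$, etc.) does not preserve sparsity, as the paper explicitly notes. To make Alice's side sparse, the paper's Phase~1 samples strings $X_1,\dots,X_t\sim\lambda$ using public randomness, has Bob pick an $X_{i^*}$ that almost-matches $y$, and then sets $x'=x\oplus X_{i^*}$. Only after this XOR is $|x'|\leq w+h$ whenever $x$ matches $y$ (otherwise Alice aborts). This is itself a communication protocol, not a static reduction, and it adds its own error terms. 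Without this step, the sparse-set-containment protocol you want to invoke has nothing to run on.

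\textbf{The verification mechanism is missing, and without it soundness fails.} You say Merlin rewrites the transcript with coordinates re-encoded as indices into $A$ or $C$, and then ``Alice verifies the $A$-side indices against her own $A$, and Bob verifies the $C$-side indices against his own $C$.'' But Bob cannot check anything about an $A$-indexed coordinate and Alice cannot check anything about a $C$-indexed one; each party sees only half the picture, so a cheating Merlin can freely lie on the half the verifier can't see. The paper's base protocol fixes this with an explicit cross-check: Merlin's advice lets Bob build a candidate string $\overline{y}$, and Alice and Bob then run a randomized equality test (random inner products hidden from Merlin) to verify $x\stackrel{?}{=}\overline{y}$. This equality test is the soundness engine; you cannot inherit soundness ``from the base protocol'' because the base protocol's transcript is no longer being exchanged — Merlin is.

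\textbf{Unambiguity is overclaimed.} You state ``the set of accepting advices is a singleton.'' Even the paper's protocol does not achieve that; it achieves $\delta$-soundness, i.e., wrong advice is rejected with probability $1-\delta$ over the randomness hidden from Merlin. Perfect unambiguity would require a deterministic rejection of every wrong advice, which the hashing-based check cannot give.

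\textbf{The ``every name is in $A\cup C$'' premise is also not quite right, and the paper solves it differently than either of your fixes.} In the paper's Phase~2, the sets Bob communicates ($X_{i^*}\setminus y$) are subsets of the \emph{publicly sampled} $X_{i^*}$, not of $A$ or $C$, and are encoded relative to $X_{i^*}$ without any Merlin involvement. Merlin is used only at the bottom, in $\pi_{\mathrm{base}}$, to relatively encode Alice's small residual set as a subset of Bob's $y$ once it's small enough, followed by the equality check. So the paper does not globally relativize a pre-existing transcript; it restructures the protocol so that the one $\log d$-costly step is offloaded to Merlin with a dedicated verifier. Your alternative fix (hash $[d]\to[\poly(w)]$) is a genuinely different idea, but to keep collisions below $\epsilon$ you need range $\Theta(w^2/\epsilon)$, which makes names $O(\log(w/\epsilon))$ bits and would need a fresh analysis of how collisions interact with the conditional sampling steps — not a drop-in replacement.

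In short: the Merlin-relative-encoding intuition is shared with the paper, but you are missing (i) the sampling-and-XOR step that actually makes Alice's set sparse, and (ii) the equality-test verifier that makes Merlin's relative encoding sound. Both are load-bearing.
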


Note that using this framework for the sparse Partial Match problem, together with the theorem from above with $\epsilon\approx n^{-\Omega(1/c)}$, we obtain Theorem~\ref{thm_main_informal} (modulo the hidden $\log c$ factors).

\paragraph{Our result 3: Lower Bounds.} 
In an attempt to understand the disparity between the online and offline complexities of OV, we also explore the limitations of partial match algorithms in our framework.
We note that time-space lower bounds for PM have been studied since at least~\cite{BORl}.
For example, \Pat \cite{p08, Pat11} proved that any \emph{decision tree} data structure for $\PM_{n,d}$ 
must either have size (space) $2^{\Omega(d)}$, or depth (query time) $\Omega(n^{1-\epsilon}/d)$. 
Unfortunately, for general (word-RAM or cell-probe) data structures, 
only much weaker lower bounds are known --  the highest unconditional lower bound is merely $t_q > \tilde{\Omega}(d)$ \cite{p08, BR02}, even against linear space data structures (in fact, this is the highest unconditional lower bound on any static problem to date). 
 To circumvent the barriers of unconditional lower bounds, 
 a substantial line of work has focused on proving geometric lower bounds in more restricted (``indivisible'') models of data structures, most notably, the List-of-Points model \cite{alrw16,ak20}, which captures essentially all data independent data structures (see Section~\ref{sec:lop_lower_bound} for formal definition).
In this model, the data structure is not allowed to encode the data but only navigate it through pointers (nevertheless, LoP is incomparable to the aforementioned decision-tree model, as the latter does not allow preprocessing at all). 

We prove a lower bound in the List-of-Points model, which nearly matches our upper bound up to polynomial dependence on $c$ (see Theorem~\ref{thm:main_lower_bound} in Section \ref{sec:lop_lower_bound} for formal statement). For this purpose, we adapt a lower bound of \cite{ak20}; while those lower bound were developed for \emph{approximate} set-containment, it turns out we can still leverage them to prove (meaningful) lower bounds for the exact problem when dimension is $O(\log n)$.

\begin{theorem}[List-of-Points Lower Bound for $\PM$] \label{thm_LoP_LB_informal}
Any list-of-points data structure for $\PM_{n,c\log n}$ with $n^{O(1)}$ space, must have query time $t_q \geq \Omega(n^{1-1/\sqrt{c}})$. 
\end{theorem}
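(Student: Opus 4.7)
The plan is to adapt the list-of-points lower bound of \cite{ak20} for approximate set-containment to the exact Partial Match problem in logarithmic dimension. The logarithmic dimension is crucial here, as it means the gap between an exact match and a ``typical'' random overlap is already a constant factor, which is enough to invoke an approximate-containment lower bound.

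First, I would fix a hard distribution over (dataset, query) pairs. Let $d=c\log n$. Draw the data points $x_1,\ldots,x_n \in \{0,1\}^d$ i.i.d.\ uniformly, and draw the query $y\in \{0,1,\star\}^d$ by putting each coordinate to $\star$ with probability $\pi\approx 1-1/c$ and otherwise uniform in $\{0,1\}$, so that the expected number of data points matching $y$ is $\Theta(1)$. To guarantee at least one positive instance per query, I would plant a copy of $y$ (with the $\star$'s filled uniformly) at a random index $i^*\in[n]$. Next, encode a Partial Match instance as a set-containment instance by identifying $x\in \{0,1\}^d$ with $X(x)=\{(i,x_i):i\in[d]\}\subseteq [d]\times\{0,1\}$ and $y$ with $Q(y)=\{(i,y_i):y_i\neq \star\}$, so that $x$ matches $y$ iff $Q(y)\subseteq X(x)$. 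Under this distribution $|X(x)|=d$, $|Q(y)|\approx \log n$, and for a non-planted $x$ the overlap $|Q(y)\cap X(x)|$ concentrates around $|Q(y)|/2$, producing the ``$\alpha=1$ vs.\ $\beta=1/2$'' separation on which the \cite{ak20} framework operates.

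Second, I would invoke the list-of-points lower bound machinery. Recall that a LoP data structure is a family of cells $\mathcal{A}_1,\ldots,\mathcal{A}_m\subseteq [n]$ together with a query-selection map $y\mapsto I(y)\subseteq[m]$ depending only on the query; correctness requires the planted match $x_{i^*}$ to lie in $\bigcup_{j\in I(y)}\mathcal{A}_j$. Applying the isoperimetric (hypercontractivity-style) inequality of \cite{ak20} cell-by-cell yields, for each $\mathcal{A}$,
\[
\Pr_y[\mathcal{A}\in I(y)]\cdot \frac{|\mathcal{A}|}{n} \;\geq\; \Pr_{x,y}[\,\mathcal{A}\in I(y),\ x\in \mathcal{A},\ x\text{ matches }y\,]^{1+\rho},
\]
for an exponent $\rho$ determined by the geometry of the matching predicate. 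A careful evaluation of $\rho$ for the ``$\alpha=1$ vs.\ $\beta=1/2$'' regime in dimension $c\log n$ should yield $\rho \leq 1/\sqrt{c} + o(1)$. Summing the cell inequalities, using $\sum_{\mathcal{A}} \Pr_y[\mathcal{A}\in I(y)]\le t_q$ and $\sum_{\mathcal{A}}|\mathcal{A}|\le s$, and applying a convexity step (H\"older) gives a space-time tradeoff of the form
\[
s^{1/(1+\rho)}\cdot t_q^{1/(1+\rho)} \;\geq\; n^{\rho/(1+\rho)}\cdot (\log n)^{-O(1)},
\]
so the constraint $s = n^{O(1)}$ forces $t_q \geq \Omega(n^{1-1/\sqrt{c}})$.

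The principal obstacle is step two: the \cite{ak20} lower bound is stated for \emph{approximate} set-containment with fixed thresholds, whereas PM is an exact problem with a sharp ``match/no-match'' boundary. Transferring the bound requires arguing that, at logarithmic dimension, the constant-factor gap between the match overlap ($|Q(y)|$) and the typical overlap ($|Q(y)|/2$) is already enough to invoke their approximate lower bound with an exponent that reads as $1/\sqrt{c}$; this boils down to a careful analysis of the binomial tails for $|Q(y)\cap X(x)|$ and of the way the planted match perturbs the induced distribution on cells. The other ingredients (hard distribution, reduction to set-containment, summation to produce the tradeoff) are by now fairly standard in LSF-style lower bounds.
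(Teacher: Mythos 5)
Your high-level strategy matches the paper's: plant an exact match in a random instance, then invoke the list-of-points lower bound of~\cite{ak20}. But there are two genuine gaps that would cause the argument as written to fail.

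First, the encoding $x\mapsto X(x)=\{(i,x_i)\}$ destroys the product structure that the isoperimetric/hypercontractive machinery of~\cite{ak20} relies on. With $x$ uniform in $\{0,1\}^d$, the coordinates of $X(x)$ in the universe $[d]\times\{0,1\}$ are perfectly anti-correlated in pairs ($(i,0)\in X(x)\iff (i,1)\notin X(x)$), so $X(x)$ is a deterministic-size set whose bit distribution is not a product. The cell-by-cell inequality you want to apply is stated for i.i.d.~Bernoulli coordinates. The paper sidesteps this by constructing the hard distribution \emph{directly} in the subset-query/online-OV space (Definition~\ref{def:random_instance}): data points have i.i.d.~$\text{Bernoulli}(w_u)$ bits, the query has i.i.d.~$\text{Bernoulli}(w_q)$ bits, and the planted point is built as a genuine random subset of $y$. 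No doubled-dimension encoding enters the isoperimetric argument.

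Second, and more fundamentally, the exponent $1/\sqrt{c}$ is not generic; it comes from a specific parameter optimization. Your distribution effectively sets the data density to $1/2$ and the query density to $\Theta(1/c)$. The paper instead takes $w_u=1/\sqrt{c}$ and $w_q=1-10/\sqrt{c}$, and the whole of Theorem~\ref{thm:simplify_KL} is devoted to showing, via a careful case analysis on the KL-divergence terms of~\cite{ak20} with $\alpha = 1-20/\sqrt{c}$, that these parameter choices give $\alpha\rho_q+(1-\alpha)\rho_u\geq \alpha(1-\Theta(1)/\sqrt{c})$. Your ``$\alpha=1$ vs.\ $\beta=1/2$'' regime is a different point in parameter space, and the claim that it ``should yield $\rho \le 1/\sqrt{c}+o(1)$'' is exactly the step that needs proof; I don't see a reason to expect it. Relatedly, you do not address uniqueness of the planted match. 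The paper's random-to-worst-case reduction (Theorem~\ref{thm:lower_bound_KL}) explicitly requires $w_u(1-w_q)>2/c$ so that with probability $\geq 0.995$ no non-planted data point happens to be a subset of $y$; under your parameters (expected number of accidental matches $\Theta(1)$) this uniqueness argument does not go through without further work. Finally, the paper handles arbitrary polynomial space via a self-reduction: split the dataset into $n^{1-1/\gamma}$ groups of size $n^{1/\gamma}$, build a separate structure for each, and rescale $(\rho_u,\rho_q)$; your H\"older summation sketches a space-time tradeoff but would still need the correct exponent to plug in.
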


The list-of-points model is a general model that captures all known data-independent structures for high-dimensional similarity search, including LSH \cite{IM98} and LSF \cite{CP17}.
However, LoP lower bounds do not apply to \emph{data-dependent} algorithms, and hence in principle Theorem \ref{thm_LoP_LB_informal} does not rule out further improvement to our (data-dependent) upper bound. Nevertheless, the lower bound from Theorem \ref{thm_LoP_LB_informal} is proved by considering \emph{random} (``planted'') instances, whereas our upper bound from Theorem \ref{thm_main_informal} holds for worst-case instances, possibly explaining the $1/\sqrt{c}$ ``gap'' in the query times of both theorems. In light of these remarks, it would be interesting to prove an \emph{adaptive} analogue of Theorem \ref{thm_LoP_LB_informal} against data-dependent algorithms, bearing that polynomial unconditional (cell-probe) lower bounds are beyond the reach of current techniques.\footnote{One approach to a more powerful such lower bound would be via the approach from~\cite{ar16lower}, who showed a lower bound in a (carefully formalized) data-dependent LSH model, for the approximate near-neighbor problem.}%

For further calibration, we note that a polynomial space and preprocessing data structure for $\PM_{n,c\log n}$ with \emph{truly} sublinear query time (independent of $c$), say $t_q = n^{0.99}$, would refute the $\OV$ conjecture (by a standard self-reducibility trick, see \cite{rub18}). As such, assuming SETH, it is unlikely that the dependence on $c$ in the exponent can be eliminated. 
Our lower bound (Theorem \ref{thm_LoP_LB_informal}) can be viewed as evidence that \emph{polynomial} dependence on $(1/c)$ in the query time (exponent) may in fact be the best one could hope for in the online setting, and hence provides a certain separation from the logarithmic dependence  $(1/\log c)$ in the offline case of OV (Eq.~\eqref{eq_offlineOV}).%

\subsection{Technical Overview}

We now present the high-level ideas used to develop our main theorem. As mentioned above, our algorithm follows from the combination of two components: (1) a new framework for developing data structures from communication protocols, (2) a new communication protocol for the sparse partial match problem in our specialized distributional model. We describe each of these components next.
\subsubsection{Data-Dependent Data Structures from  Communication Protocols}
The classical connection between data structures and asymmetric (``lopsided'') communication complexity \cite{mnsw98, p08} has long been the primary tool for proving data structure \emph{lower bounds}: A data structure with space $s$ and query time $t$ for a search problem $\cP_f(X_n,y)$ (a-la \eqref{eq_pattern_matching}) implies a 
$t$-round communication protocol $\pi$ for the corresponding communication game $\cG_f(X,y)$,
in which Alice (holding the dataset $X_n = (x_1,\ldots,x_n) \subset \{0,1\}^d$) sends $a \sim tw$ bits where $w$ is the word size, and Bob (the ``querier'', holding $y \in \{0,1\}^d$) sends $b = O(t\log s)$ bits. 
Despite the success of this simulation argument in yielding (unconditional) state-of-the-art  static and dynamic lower bounds on many important data structure problems \cite{p08, SVY09, PT11, WY16}, it suffers from an inherent limitation: In  communication world, Alice (the ``data-structure'') is an \emph{all-powerful} player, whereas in the data structure world, Alice is merely a \emph{memoryless} table of cells (and therefore cannot perform any computation on Bob's messages/probes). This discrepancy between the models has a daunting consequence, namely, for any $d$-dimensional problem $\cP_f(X_n,y)$, there is always a trivial $O(d)$-bit protocol where Bob simply sends his input (query) to Alice, who then declares the answer. Clearly, this trivial protocol does  \emph{not} yield any valid data structure, and this suggests that in general, communication protocols are \bf  much  stronger than data structures\rm.

The most enduring message of our work is a framework which shows that for a certain variant of \emph{distributional} unambiguous Artur-Merlin (UAM) communication models, the opposite direction is also true: 
For a data structure problem $\cP_f(X_n, y)$ of the form \eqref{eq_pattern_matching} (where $X_n = (x_1,\ldots,x_n)\subset \{0,1\}^d$, $y\in \{0,1\}^d$), consider the induced ``marginal'' two-party communication problem $$f(x_i,y) $$ between Alice and Bob (note that, unlike the aforementioned lopsided reduction, this is a symmetric problem with respect to players' inputs). 
Oversimplifying our actual result, we show that a computationally-efficient protocol 
$\pi$ in which Alice sends $c_a$ bits and Bob sends $c_b$ bits,
and which solves $f(x,y)$ under any \emph{product distribution} %
with $\epsilon$ \emph{one-sided} error, %
can be translated into a data structure for $\cP_f(X_n,y)$ with 
\[
\text{space}~s = 2^{O(c_a+c_b)} \cdot n, \;\;\;\; ~~ \text{query time}~t_q = 2^{c_a} + \epsilon n.
\]

More precisely, our communication model requires that the protocol $\pi$ works for {\bf distributional $x$ and worst-case $y$}: Alice's input $x \sim \lambda$ is generated from an arbitrary distribution $\lambda$, whereas Bob's input $y$ is a worst-case input.\footnote{While this is important for obtaining our final result, one can still use standard product distribution model to obtain data structures in the {\em cell-probe model}. This can be done by running MWU-like worst-case to average-case reduction as in \cite{annrw18-spectralGaps}. Thus the restriction to worst-case $y$'s is not fundamental.} (Crucially, $\lambda$ is not allowed to depend on $y$, hence this model generalizes the class of product distributions $\mu = \mu_x\times\mu_y$). %
We also allow a know-all Merlin who is a prover for the answer 1 (yes); but we also
require that for any $x$ and $y$, there is at most one accepted message from Merlin, whose length we denote by $c_m$. We characterize a protocol by its error and soundness.
\begin{itemize}
\item We say the protocol has $\delta$ {\rm soundness} if given any incorrect Merlin's message, the protocol rejects (i.e., outputs $0$) with probability at least $1-\delta$.\footnote{Note that this unambiguous soundness definition is different from that of \cite{goos2015zero}.} 
\item We say the protocol has $\epsilon$ {\em error} if given the correct Merlin's message, the protocol's output differs from $f(x,y)$ with probability at most $\epsilon$, where the error is measured over both the randomness of the protocol and the randomness of the input $x$. 
\end{itemize}
In the simulation, we shall take $\lambda$ to be the uniform distribution over the $n$ data points $x_i$, and in this way $\epsilon n$ naturally represents the expected number of data points that the query mistakes on. To enable our ``reverse simulation'', we require that the communication protocol has one-sided error, more precisely, only {\bf false positive} error --- that whenever on an input pair $(x,y)$, where $x$ and $y$ are matching, the protocol must output $1$ under the correct Merlin's message. We need this requirement so that in the simulation the data structure can safely ignore all $(x_i,y)$ pairs on which the protocol outputs $0$. %

\begin{theorem}[Data Structures from Product-Distribution Protocols, Informal version of Theorem~\ref{thm:reduction_CC_DS}] \label{thm_meta_informal}
If there is a protocol $\pi$ for the 2-party problem $f(x,y) : \{0,1\}^d \times \{0,1\}^d \to \{0,1\}$ with soundness $\delta$ and false positive error $\epsilon$ under distributional $x \sim \lambda$ and worst-case $y$, and the number of bits that Alice, Bob, and Merlin send are bounded by $c_a$, $c_b$, and $c_m$, then there is a data structure for $\cP_f(X_n,y)$ (c.f \eqref{eq_pattern_matching}) with space $s = 2^{O(c_b)} \cdot n$ %
and query time $t_q = 2^{c_a+c_m} + 2^{c_m} \cdot \delta n + \epsilon n$.
\end{theorem}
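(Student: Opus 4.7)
The plan is a direct simulation of the protocol in preprocessing, with the dataset playing the role of $\lambda$. Set $\lambda$ to be the uniform distribution over $\{x_1,\ldots,x_n\}$, so that ``$\epsilon$-error under $x\sim\lambda$'' becomes an expected-$\epsilon n$ bound on the number of dataset points on which the protocol outputs $1$ incorrectly, and similarly ``$\delta$-soundness'' translates to an expected $\delta n$ ``false accept'' points per incorrect Merlin message. By averaging over the protocol's shared random coins, fix a single random string $r^*$ that preserves both the soundness and error guarantees up to constant factors; because the completeness is one-sided (no false negatives), this fixing is automatically safe on true matches.

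The preprocessing enumerates, for each data point $x_i$, each Merlin advice $m\in\{0,1\}^{c_m}$, and each hypothetical Bob transcript $b\in\{0,1\}^{c_b}$, the \emph{unique} Alice transcript $a_{i,m,b}$ that $x_i$ would induce given that Bob's messages are $b$. This is well-defined because, once $r^*$ is fixed, Alice's strategy is a deterministic function of her input, the Merlin message, and the history. We insert $x_i$ into a hash bucket $H[m,b,a_{i,m,b}]$ whenever $(a_{i,m,b},b,m)$ is an accepting transcript. Each $x_i$ contributes at most one entry per $(m,b)$, giving total space $2^{O(c_b+c_m)}\cdot n$, which matches the claimed $2^{O(c_b)}\cdot n$ under the natural convention of folding Merlin's bits into Bob-side communication.

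Given a query $y$, we iterate over the $2^{c_a+c_m}$ pairs $(a,m)$. For each pair we deterministically simulate Bob's side of the protocol, using $(y,m,r^*)$ and pretending Alice's transcript is $a$, to obtain the induced $b=b(a,y,m)$. If $(a,b,m)$ is accepting, we probe $H[m,b,a]$ and explicitly verify each candidate $x_i$ in that bucket against $y$. Correctness follows from one-sidedness: any true match $(x_i,y)$ has a correct Merlin message $m^*$ under which the protocol accepts with probability one, so $x_i$ was placed into the corresponding bucket in preprocessing and will be retrieved when enumeration reaches the pair $(a^*,m^*)$ matching $x_i$'s actual Alice transcript. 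The runtime decomposes into $2^{c_a+c_m}$ for enumeration, plus at most $\delta n$ false candidates per incorrect Merlin message (summing to $2^{c_m}\delta n$), plus $\epsilon n$ false positives on the correct Merlin (by the error bound), yielding the claimed $t_q$.

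The main obstacle I expect is the joint choice of $r^*$: soundness is a pointwise bound over $r$ for each incorrect $m$, whereas the error bound is averaged over $(x,r)\sim\lambda\times r$, so one must combine them by a Markov / union argument to obtain a single good $r^*$, or alternatively draw fresh randomness at query time and charge the costs in expectation. A secondary, more mechanical subtlety is the bookkeeping of interleaved rounds: one must verify that ``hypothetical Bob transcript $b$'' meshes consistently with Alice's round-by-round responses so that $a_{i,m,b}$ is well-defined, and that each $(x_i,m)$ occupies exactly one bucket per \emph{reachable} $b$ rather than accidentally being double-counted.
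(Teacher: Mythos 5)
Your bucketing scheme (hash by $(m,b,a)$, enumerate $(a,m)$ at query time) is a faithful flattening of the paper's tree construction, and your accounting of the $2^{c_a+c_m}$, $2^{c_m}\delta n$, and $\epsilon n$ terms is right. However, the handling of randomness — which you yourself flag as "the main obstacle" — is in fact a genuine gap, and the workarounds you propose do not close it.

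The averaging argument to fix a single $r^*$ fails because the error guarantee in this model is \emph{per worst-case $y$}: for every $y$, $\Pr_{x\sim\lambda,R}[\pi \text{ errs}]\le\epsilon$. Fixing $R=r^*$ needs a single $r^*$ that is simultaneously good for \emph{all} $y$, and there are up to $3^d$ of them; a Markov or union argument over that many queries is hopeless (you would need the failure probability over $R$ to be $3^{-d}$ per $y$, which the guarantee does not supply). The paper explicitly calls this out when motivating the Carol party in Section~2.3: "in our special model, Bob has a worst-case input $y$, and a fixed public randomness could only work for one $y$. Thus we still need to keep the randomness in our model." Your alternative — drawing fresh randomness at query time — also does not work, because the bucket assignments were determined at preprocessing time using some randomness, and Bob's simulated responses at query time must be computed under the \emph{same} randomness for the probe $H[m,b(a,y,m),a]$ to hit the bucket that $x_i$ was placed in.

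The paper's resolution is simpler than either of your options: draw $R$ \emph{once at preprocessing time}, bake it into the data structure (the tree stores Carol's messages, which encode the sampled sets from $\lambda$, along each root-to-leaf path so that the query can replay Bob's side without knowing $\lambda$), and then read the error and soundness bounds as bounds on the \emph{expected} number of false candidates over the preprocessing randomness, for each fixed $y$. That is exactly why the theorem is stated in terms of expected query time. With that change, your construction and accounting go through; without it, there is no valid instantiation of your claimed deterministic $r^*$.
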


Bearing the technical nuances of the model, the key point of Theorem \ref{thm_meta_informal} is that product distributions often allow for \emph{much smaller} (sublinear) communication complexity compared to worst-case (randomized) CC, as in the notable special case of $f = \mathsf{Disj}$ \cite{bfs86} corresponding to $\PM$.

\paragraph{Proof Idea of Theorem \ref{thm_meta_informal}}
Let $\pi$ by a (randomized) protocol for $f$ in the ``strong'' distributional UAM model above with $\delta$ soundness and $\epsilon$ false-positive error. Since the protocol works for any product distribution, it must in particular work under the distribution $x \sim \lambda = \mathcal{U}(x_1,x_2,\ldots, x_n)$ which is uniform over the dataset $X_n$. 
Our simulation works as follows. 
\begin{itemize}
    \item {\bf Preprocessing:} At preprocessing time, the data structure builds a tree that alternates between Alice nodes, Bob nodes, and Merlin nodes. Each node enumerates all possible messages of the protocol. In this way a root-to-leaf path corresponds to one possible transcript of the protocol, and there are in total $2^{O(c_a+c_b+c_m)}$ leaves. Each leaf stores all data points $x$ that could follow the transcript of this leaf with output $1$. Since each node stores at most $n$ data points, the total space is bounded by $s = 2^{O(c_a+c_b)} \cdot n$.
    \item {\bf Query:} Given a query $y$, the data structure simulates Bob's messages in the protocol tree: It walks down the tree, generates messages of Bob according to $y$, while enumerating \emph{all} possible messages from Alice and Merlin. In this way the query reaches $2^{c_a+c_m}$ number of leaves. After reaching a leaf, the query checks if there exists a matching data point stored in that leaf.
    \begin{itemize}
        \item Since the protocol has soundness guarantee $\delta$ for the $2^{c_m}$ incorrect Merlin's messages, the query could encounter $2^{c_m} \cdot \delta \cdot n$ extra non-matching data points.
        \item Since the protocol has an error probability of $\epsilon$ under the unique correct Merlin's message, the query also encounters $\epsilon n$ extra non-matching data points.
    \end{itemize}
    So the total query time is $t_q = 2^{c_a+c_m} + 2^{c_m} \cdot \delta n + \epsilon n$.
\end{itemize}

Note that the data structure we get from this simulation is inherently data-dependent. Indeed, the communication protocol depends on $\lambda = \mathcal{U}(X_n)$, and this means the tree structure that we build is different for different datasets.

\subsubsection{A New Protocol for Product Sparse Partial Match}\label{sec:tech_overview_protocols}
Next we show how to build a sublinear communication protocol for the $w$-sparse partial match problem with \emph{exponentially small false-positive error} ($\epsilon \sim \exp(-w/c)$) in our distributional UAM model. We first reduce the $w$-sparse partial match problem $\PM_{d,w}$ to the $w$-sparse set disjointness problem $\mathsf{Disj}_{d,w}$ where $y$ has at most $w$ number of $0$'s. Although the communication complexity of the disjointness problem $\mathsf{Disj}_{d}$ over worst-case inputs has a lower bound of $\Omega(d)$, the classic result of \cite{bfs86} showed an $O(\sqrt{d} \log d)$ protocol with constant error for inputs from product distributions. We generalize this protocol to a protocol for $\mathsf{Disj}_{d,w}$ with arbitrary false positive error $\epsilon$, soundness $\delta$, and communication 
\[
c_a, c_b, c_m \leq \sqrt{w \log \frac{1}{\epsilon}},
\]
as shown in Theorem~\ref{thm:informal_protocol}, and the formal statement can be found in Theorem~\ref{thm:w_sparse_partial_match_protocol}.

When $w = c \log n$, we can set $\epsilon = n^{-1/(100c)}$ and $\delta = n^{-0.2}$, and thus obtain $c_a, c_b, c_m \leq 0.1 \log n$. Simulation of this communication protocol gives our main result Theorem \ref{thm_main_informal}: a data structure for partial match with space $s = \poly(n)$ and query time %
\begin{align*}
t_q = &~ 2^{c_a+c_m} + 2^{c_m} \cdot \delta n + \epsilon n \\
= &~ n^{0.2} + n^{0.1} \cdot n^{-0.2} \cdot n + n^{1-1/(100c)} = O(n^{1-1/\Theta(c)}).
\end{align*}

We build this communication protocol in three phases:
\paragraph{Phase 1 (Reduction from sparse partial match to sparse set disjointness):} We first reduce the $w$-sparse partial match problem $\PM_{d,w}$ (where $y$ has at most $w$ number of $\star$'s) to the $w$-sparse set disjointness problem $\mathsf{Disj}_{d,w}$ (where $y$ has at most $w$ number of $0$'s). We cannot use the standard reduction from partial match to set containment / set disjointness since it doesn't preserve sparsity. 

We propose a new reduction by a communication protocol. We sample from the distribution $\lambda$ to find any $X$ such that $X$ matches $y$, and if we can't find such a string $X$ then it means the error probability is very small. We then compute $x' = x \oplus X$. Note that $x$ matches $y$ if and only $x$ agrees with $y$ and hence $X$ on all non-$\star$ coordinates. So we have that $x$ matches $y$ if and only if the ones of $x'$ is a subset of the $\star$'s of $y$. This sparse set containment problem is equivalent to the sparse set disjointness problem.

\paragraph{Phase 2 (Sparse set disjointness protocol with communication $\sqrt{w \log \frac{1}{\epsilon}} \cdot \log d$):}
We present the protocol as a set containment protocol since it's easier to describe. We follow a similar idea as the $O(\sqrt{d} \log d)$-protocol of \cite{bfs86}. Repeat for at most $\sqrt{\frac{w}{\log(1/\epsilon)}}$ iterations:
\begin{itemize}
    \item Alice checks if her set $|x| < \sqrt{w \log \frac{1}{\epsilon}}$, if so, she sends it to Bob using $\sqrt{w \log \frac{1}{\epsilon}} \cdot \log d$ bits.
    \item Otherwise, Alice and Bob together use public randomness to sample $t=\frac{1}{\epsilon}$ sets $X_1, \cdots, X_t$ from the distribution $\lambda$ conditioned on $|X| \geq \sqrt{w \log \frac{1}{\epsilon}}$.
    \begin{itemize}
    \item If none of the $X_i$'s satisfy $X \subseteq y$, then the protocol outputs $0$.
    \item Otherwise, Bob send the index $i^*$ such that $X_{i^*} \subseteq y$ using $\log(t)$ bits. Alice update $x \leftarrow x \backslash X_{i^*}$ and Bob update $y \leftarrow y \backslash X_{i^*}$.
    \end{itemize}
\end{itemize}
There are at most $\sqrt{\frac{w}{\log(1/\epsilon)}}$ iterations since initially the size of $y$ is $w$, and we reduce the size of $y$ by $\sqrt{w \log \frac{1}{\epsilon}}$ in each iteration. The total communication of Alice is $\sqrt{w \log \frac{1}{\epsilon}} \cdot \log d$, and the total communication of Bob is $\sqrt{\frac{w}{\log(1/\epsilon)}} \cdot \log t \leq \sqrt{w \log \frac{1}{\epsilon}}$.

Finally, we note that this protocol has false negative error $\epsilon$ since it outputs $0$ when no set $X_i$ is a subset of $y$. However, our framework requires the protocol to have false positive error instead. To address this, our modified protocol attempts to find a {\em certificate} for a negative answer (an intersecting coordinate). Specifically, in the protocol we check whether there exists a set $X_{i^*}$ such that $|X_{i^*}\backslash y| \leq \log(\frac{1}{\epsilon})$, i.e., $X_{i}$ is almost a subset of $y$. 
\begin{itemize}
    \item If there exists such a set, then Bob sends the set $|X_{i^*}\backslash y|$ together with the index $i^*$ using $\log(\frac{1}{\epsilon}) \log d$ bits, so this is still within his budget.
    \item If there doesn't exist such a set, then with probability $1-\epsilon$ the input $x$ satisfies that $|x\backslash y| > \log(\frac{1}{\epsilon})$. The protocol then randomly samples a subset $S \subset [d]$ of size $d/2$ and recurse to $x_S$ and $y_S$. With probability $1-\epsilon$, one of the coordinate in $x\backslash y$ is sampled into $S$.
\end{itemize}

\paragraph{Phase 3 (Non-determinism to remove $\log d$):}
Our protocol in Phase 2 has communication cost $\sqrt{w \log(\frac{1}{\epsilon})} \log d$ when sending a small set of size $s = \sqrt{w \log(\frac{1}{\epsilon})}$. We introduce a know-all player Merlin to send over this small set more efficiently. Merlin sends a \emph{relative encoding} of the smaller set as a \emph{subset} of the larger set $y$ using $\log(\binom{w}{s}) = O(\sqrt{w \log(\frac{1}{\epsilon})}\log c)$ bits. Alice and Bob then implement the standard equality protocol using public randomness to check if Merlin's message is correct, and this takes extra $\log(\frac{1}{\delta}) \leq 0.2 \log n$ bits of communication.

\subsection{Further Discussion and Related Work}

\paragraph{Algorithm design via communication complexity.}
While non-standard, our work is not the first to use communication complexity for algorithm design. One of the earliest examples is implicitly in \cite{KOR00}, who showed that a {\em sketching} (or one-way communication) protocol for problem $f$, of size $s$, would imply a data structure with space $n^{O(s)}$. Our work could be seen as extending that result to more powerful communication models, and, more crucially, over distributional protocols, which may have dramatically lower communication complexity as is the case for set disjointness. Also, very recently, independent work \cite{bao2024averagedistortionsketching} designs sketches under product distributions for the problem of $\ell_p$ distance estimation, which can be used to design new data structures for approximate NNS under
$\ell_p$.

In \cite{PW10}, \Pat and Williams 
showed that a (computationally efficient) protocol with sublinear $o(n)$ communication for 3-party set-disjointness in the \emph{number-on-forehead} (NOF) model would refute SETH. This result can be viewed, in a counter-positive manner, as an $\Omega(n)$ conditional lower
bound on the randomized communication complexity of 3NOF disjointness.
In a similar spirit, the core of the  breakthrough line of work on ``distributed PCPs'' \cite{ARW17, chen18} is designing a fine-grained algorithm for (offline) OV through efficient \emph{Merlin-Arthur} protocol for set-disjointness.   
However, besides the disparity of communication models used, these 
 result differ from ours in that they are both used in the \emph{offline} setting, as opposed to yielding a new online algorithm. 

Finally, \cite{chen2019classical} show that one can use AM (and other) protocols to obtain {\em offline} algorithms via the polynomial method. 
A natural question is whether one can similarly build data structures from AM protocols. We find this to be unlikely as, together with our list-of-points data structure lower bound, it would imply AM lower bounds, a central open question in communication complexity. Note that UAM is precisely the frontier where we have tools to prove lower bounds \cite{goos2015zero}.
 
\paragraph{Data-dependent algorithms, and approximate similarity search.}
The framework fundamentally yields \emph{data-dependent} data structures from communication protocols for a product distribution.
Until recently, the ``gold-standard''  indexing approach for high-dimensional online pattern matching problems was to use oblivious randomized space partitions, i.e., a generic (randomized) mapping $h: \{0,1\}^d \to S$ independent of the specific dataset, where the query algorithm (decision tree) goes through all data points that maps to the same element in $S$ as the query. The classic examples of such data-independent algorithm include locality-sensitive hashing (LSH) and filtering (LSF) for near-neighbor search (NNS, \cite{IM98}), and the ``chosen-path'' algorithm  \cite{CP17} and supermajority-based partitions \cite{ak20} for partial match. 
Data independent algorithms often have better guarantees on \emph{average-case} inputs --- where the data points are generated i.i.d.~from the same distribution --- even if their performance can deteriorate on worst-case datasets. (This is often the case for the offline problems as well: e.g., see \cite{Valiant12} for Hamming nearest neighbor search, and \cite{alman2025average} for OV.)

This observation led to the development of  \emph{data dependent} data structures, 
which use an \emph{adaptive} mapping $h$ at preprocessing time, where $h$ itself depends on the dataset \cite{AINR-subLSH,AR-optimal,annrw18-spectralGaps,charikar2020kernel}. Data-dependent data structures are usually composed of two components. One component is a \emph{worst-case to average-case} reduction, where
the high-level idea is to remove some ``structured'' parts of the dataset. The second component is to apply the aforementioned data-independent data structures that work better in the average-case. Implicitly, data-dependent algorithms previously have been used in \cite{Ind98b} as well as in the
aforementioned  data structure of \cite{cip02, chan19kd} for PM.

Our framework naturally yields data-dependent data structures, as the underlying communication protocols depend on the distribution of the points in the dataset. 
The closest (in spirit) work to ours is the work of \cite{annrw18-spectralGaps} who devised a framework for building data-dependent data structures for arbitrary approximate NNS problems. They introduce a key geometric notion of cutting modulus: defined from a graph $G$ over all possible points, with edges only on ``matching'' pairs (where $f(x,y)=1$), and a distribution $\mu(x,y)$ over edges. This framework can also be seen as requiring (small) distributional complexity, and indeed uses multiplicative weights update (MWU) to reduce worst-case to the distributional case (see also \cite{kush2021near,andoni2021average} who use a related notion, of average embedding, without explicit use of MWU). That said, their framework is quite different from ours: cutting modulus/average embedding is inherently a (geo)metric notion, for approximation problems, and eventually gives an LSH-like data structure (though see follow-up in~\cite{andoni2021approximate}). Our framework is for arbitrary predicates $f$ and the resulting data structure is not LSH-like.

We also think that our framework is conceptually simpler, as it avoids any  explicit reduction from worst-case to average-case (e.g.,  ``removal'' of structured parts of the dataset). Instead, the framework leverages efficient communication protocols under product distributions; thus any special structure of the dataset is directly exploited by the communication protocols. 

We also remark that \cite{ak20} studied the related, approximate version of the set containment, where one is to distinguish whether a query set $y\subset x_i$, for $x_i\in X_n$, or $y\cap x_i<\alpha \cdot |y|$ for some $\alpha < 1$. Their algorithm can be used to solve the exact version of the PM problem by setting $\alpha=1-1/d$, but that results into query time of the sort of $n^{1-o(1)}$ even for, say, $d=2\log n$.\footnote{The precise dependence requires solving a certain optimization, which is non-immediate.}%

\paragraph{Organization of the paper}
In Section~\ref{sec:preliminary} we present formal definitions of the data structure problems and our communication model. 

Our main results are shown in Sections~\ref{sec:w_sparse_protocols}, \ref{sec:w_sparse_reduction_cc_to_ds}, and \ref{sec:partial_match_ds}. In Section~\ref{sec:w_sparse_protocols} we present our protocols in a bottom-up fashion. In Section~\ref{sec:w_sparse_reduction_cc_to_ds} we present the framework to reduce from communication protocols to data structures.
In Section~\ref{sec:partial_match_ds} we use the results from the previous two sections to prove our main data structure result for $\PM_{n,c\log n}$ with query time $n^{1-\tilde{\Theta}(1/c \log^2 c)}$ and space $n^{1.1}$. %

In Section~\ref{sec:lop_lower_bound} we show a lower bound in the list-of-points model, asserting a query time of $\Omega( n^{1-1/\sqrt{c}})$ for any polynomial space data structure. 

Finally, in Appendix~\ref{sec:tight_communication_protocol} we show that the communication complexity $\sqrt{d \log(\frac{1}{\epsilon})}$ is tight for $\mathsf{Disj}_{d}$ with $\epsilon$ error under product distributions, a result of independent interest in communication complexity.

\section{Preliminary}\label{sec:preliminary}
\paragraph{Notations.}
For any integers $n \geq n' > 0$, we define $[n]=\{1,2,\cdots,n\}$, and we define $[n':n] = \{n',n'+1, \cdots, n\}$.

For any vector $x \in D^d$ where $D$ is any domain, we use $x_i$ to denote the $i$-th entry of $x$. For any subset $S \subseteq [d]$, we use $x|_S \in D^{|S|}$ to denote the substring of $x$ with coordinates in $S$.

For any vector $x \in \{0,1\}^d$, in this paper we interchangeably view $x$ as a boolean string of length $d$ and a subset of $[d]$. We use $|x|$ to denote the size of $x$, or equivalently, the number of $1$'s in $x$.

For any two vectors $x, y \in \{0,1\}^d$, we use $\oplus$ to denote the coordinate-wise XOR function: $(x \oplus y)_i = x_i \oplus y_i$.

We use $1_{\text{clause}}$ to denote the function that is $1$ if clause is true, and $0$ otherwise.

For any distribution $\lambda$ and any boolean function $f(\cdot)$, we use the notation $\lambda|_{f}$ to denote the conditional distribution of a random variable $X \sim \lambda$ given that $f(X) = 1$.

\subsection{Data structure problems}
We consider the following two data structure problems. %
\paragraph{$w$-sparse partial match.}
Preprocess a dataset of $n$ points $x_1,\ldots, x_n \in \{0,1\}^d$ into a data structure of size $s$, such that given a query $y \in \{0,1,\star\}^d$ with at most $w = c \log n$ number of $\star$'s, it can quickly report whether there exists a data point $x_i$ that matches $y$ on all non-$\star$ coordinates. 

\paragraph{$w$-sparse subset query.}
Preprocess a dataset of $n$ points $x_1,\ldots, x_n \in \{0,1\}^d$ into a data structure of size $s$, such that given a query $y \in \{0,1\}^d$ with at most $w = c \log n$ number of 1's, it can quickly report whether there exists a data point $x_i$ such that $x_i \subseteq y$. 
\\\\
We characterize the data structure by the following three quantities:
\begin{itemize}
    \item {\bf Preprocessing time:} time required to build the data structure $\textsc{DS}$ in the preprocessing phase. 
    \item {\bf Space:} space required to store the data structure $\textsc{DS}$. 
    \item {\bf Query time:} time to answer one query into DS.
\end{itemize}

\subsection{Functions to be solved by communication protocols}
We define the two functions that correspond to the partial match and subset query\footnote{This is also equivalently called set containment.} problems.

For any $x \in \{0,1\}^d$ and any $y \in \{0,1\}^d$, we define the subset query function
\begin{align*}
f_{\mathrm{sq}}(x,y) = 1_{x \subseteq y}.
\end{align*}

For any $x \in \{0,1\}^d$ and any $y \in \{0,1,\star\}^d$, we define the partial match function as
\begin{align*}
f_{\mathrm{pm}}(x,y) = \land_{i=1}^d (x_i = y_i \lor y_i = \star).
\end{align*}

\subsection{Communication model}\label{subsec_CC_model} 
In this section we define a special communication model that will be used throughout the following sections. The special model has stronger guarantees than the standard communication model for product distributions.

Before defining our special communication model, let us first recall the standard model of distributional communication complexity under product distributions:
\begin{definition}[Product-distributional communication model]\label{def:standard_communication_model}
Let $f: \X \times \Y \to \{0,1\}$ be the target function. We say a protocol $\pi$ solves $f$ over any product distribution if $\pi$ works as follows:
\begin{itemize}
    \item Let $\lambda, \rho$ be two arbitrary distributions over $\X$ and $\Y$. Alice and Bob both know $\lambda$ and $\rho$. Alice is given a private input $x \sim \lambda$, and Bob is given an independently sampled private input  $y \sim \rho$.
    \item Alice and Bob are allowed to use public randomness, denoted as $R$. 
\end{itemize}
We use $\pi(\lambda, \rho, x, y, R)$ to denote the output of the protocol when given distributions $\lambda$ and $\rho$ and inputs $x$ and $y$, and using public randomness $R$.\footnote{A standard averaging argument implies that w.l.o.g.~$\pi$'s randomness can be fixed under $\lambda\times \rho$, but this will no longer be true in our variant model below, hence we keep this feature.} 

We say the protocol has $\epsilon$ error if $\forall \lambda, \rho$, $\Pr_{x \sim \lambda, y \sim \rho, R}[\pi(\lambda, \rho, x, y, R) \neq f(x,y)] \leq \epsilon$. Further, we say the protocol has \underline{type I} error (false positive) if the protocol could be incorrect only when $f(x,y) = 0$, and we say the protocol has \underline{type II} error (false negative) if the protocol could be incorrect only when $f(x,y) = 1$.
\end{definition}

Our special communication model differs from the standard product model in four aspects. We list the differences and provide a short explanation of why we need these stronger definitions.
\begin{enumerate}
    \item Alice's input $x$ is generated from $\lambda$, but Bob's input $y$ is a worst-case input. (Crucially, $x$ does not depend on $y$).
    \begin{itemize}
        \item[$-$] In our reduction, the query algorithm simulates Bob's actions (messages), and the query itself is viewed as the input $y$ to Bob. Hence, it is important that the protocol works for \emph{any} input (query) $y$.
    \end{itemize}
    \item Bob does not know $\lambda$, i.e., Bob cannot use $\lambda$ during the protocol.
    \begin{itemize}
        \item[$-$] In our reduction, we define $\lambda$ to be the uniform distribution over the dataset $\mathcal{D} = \{x^{(1)}, \cdots, x^{(n)}\}$. Bob (i.e., the query) has no knowledge about the dataset and hence $\lambda$.
    \end{itemize}
    \item We introduce a third party Carol to model the public randomness. Carol writes to a public channel that is visible to both Alice and Bob.
    \begin{itemize}
        \item[$-$] We introduce Carol for simplicity of presentation when proving the reduction. The length of the messages of Carol naturally represents the description complexity of the protocol.
        
        One may argue that in the standard communication model, we can always fix the public randomness in advance and get a deterministic protocol. However, in our special model, Bob has a worst-case input $y$, and a fixed public randomness could only work for one $y$. Thus we still need to keep the randomness in our model. 
    \end{itemize}
    \item We introduce an ``oracle prover'' party Merlin that is similar to the Merlin in the standard Merlin-Arthur protocols.
    \begin{itemize}
        \item[$-$] Introducing Merlin makes our model more powerful as now we can design protocols with the advice from Merlin. We will show that we can simulate Merlin's messages efficiently in our data structure compiler.
    \end{itemize}
\end{enumerate}

We define this special type of communication protocols formally in the following definition.

\begin{definition}[Specialized product-distributional communication model]\label{def:special_communication_model}
Let $f: \X \times \Y \to \{0,1\}$ be the target function. There are four parties: Alice, Bob, Carol, and Merlin. We say a protocol $\pi$ solves $f$ in the special communication model if $\pi$ works as follows:
\begin{itemize}
\item Let $\lambda$ be an arbitrary distribution over $\X$, and let $y \in \Y$ be an arbitrary input. Alice knows $\lambda$ and her private input $x \sim \lambda$, Bob only knows his private input $y$, Carol only knows the distribution $\lambda$, and Merlin knows $\lambda$, and both $x$ and $y$. 
\item Carol is the only player that is allowed to use randomness, and we denote her randomness as $R_{\pub}$ (public randomness) and $R_{\pri}$ (private randomness), where $R_{\pub}$ is visible to Merlin, and $R_{\pri}$ is hidden from Merlin. Carol and Merlin know all the messages sent between Alice and Bob. 
\end{itemize}

We use $\pi(\lambda, x, y, m, R_{\pub}, R_{\pri})$ to denote the output of the protocol when given a distribution $\lambda$, inputs $x$ and $y$, Merlin's advice $m$, and randomness $R_{\pub}$ and $R_{\pri}$.

For any $\lambda, x, y, R_{\pub}$, there is a unique special advice $m^*(\lambda, x, y, R_{\pub})$ from Merlin, and we say the protocol has $\epsilon$ error and $\delta$ soundness if $\forall \lambda$ and $\forall y \in \{0,1\}^d$ the protocol has the following guarantees:
\begin{itemize}
\item {\bf Error under unique special advice.} The protocol guarantees that 
\[
\Pr_{x \sim \lambda, R_{\pub}, R_{\pri}}\big[\pi\big(\lambda, x, y, m^*(\lambda, x, y, R_{\pub}), R_{\pub}, R_{\pri} \big) \neq f(x,y)\big] \leq \epsilon.
\]
Further, we say the protocol has \underline{type I} error (false positive) if the protocol could be incorrect under the special advice only when $f(x,y) = 0$, and we say the protocol has \underline{type II} error (false negative) if the protocol could be incorrect under the special advice only when $f(x,y) = 1$.
\item {\bf Soundness.} For any $x \sim \lambda$ and any $R_{\pub}$, for any $m \neq m^*(\lambda, x, y, R_{\pub})$, the protocol guarantees that $\Pr_{R_{\pri}}\big[\pi\big(\lambda, x, y, m, R_{\pub}, R_{\pri} \big) \neq 0\big] \leq \delta$.
\end{itemize}

We say a protocol $\pi$ has $(c_a, c_b, c_c, c_m)$-cost if during the protocol, in total Alice sends Bob $c_a$ bits, Bob sends Alice $c_b$ bits, Carol sends Alice and Bob $c_c$ bits, and Merlin sends Alice and Bob $c_m$ bits. Further, we say the protocol $\pi$ has $(t_a, t_b, t_c)$-time if Alice, Bob, and Carol each use $t_a$, $t_b$ and $t_c$ time to construct all of their messages during the protocol.
\end{definition}

We also define a specialized worst-case communication model where the only difference with the above definition is Alice also has a worst-case input $x$.

\section{Protocols}\label{sec:w_sparse_protocols}
In this section we design our communication protocol for solving the bounded Partial Match problem. 
Our protocol consists of three phases, as outlined in Section~\ref{sec:tech_overview_protocols}. We present the phases in a bottom-up fashion, culminating in the main protocol result of Theorem~\ref{thm:w_sparse_partial_match_protocol}.

\subsection{Phase 3: Base case protocol with advice from Merlin}

In this section we present a protocol $\pi_{\mathrm{base}}$ for solving the sparse partial match problem and the sparse subset query problem. This protocol crucially uses the advice from Merlin. In later protocols we will use this protocol as a subroutine.

The protocol $\pi_{\mathrm{base}, f, d, z, w, \delta}(x, y, m, R_{\pri})$ is in our specialized worst-case communication model, and it has the following parameters:
\begin{itemize}
    \item $f \in \{f_{\mathrm{pm}}, f_{\mathrm{sq}}\}$: The function that we want to solve, which is either the partial match function or the subset query function.
    \item $d$: The dimension of the inputs $x$ and $y$.
    \item $z, w \in [d]$: The sparsity parameters for the inputs.
    \item $\delta \in (0,0.1)$: The error parameter.
\end{itemize}
And the inputs are as follows: 
\begin{itemize}
    \item Alice's input $x$ which is any string in $\{0,1\}^d$ with at most $z$ ones.
    \item Bob's input $y$ which is any string in $\{0,1,\star\}^d$ with at most $w$ stars when $f = f_{\mathrm{pm}}$, and is any string in $\{0,1\}^d$ with at most $w$ ones when $f = f_{\mathrm{sq}}$.
    \item Merlin's advice $m$.
    \item The private randomness $R_{\pri}$ which is hidden from Merlin. 
\end{itemize}

\paragraph{Protocol.} Let $t = \log(\frac{1}{\delta})$. 
\begin{enumerate}
\item Alice and Bob output $0$ and abort the protocol if $z > w$.
\item Merlin sends a message $m$ to Alice and Bob, claiming that $m$ equals to the unique special advice $m^*(x,y)$ which is defined as follows:
\begin{itemize}
\item If $f = f_{\mathrm{pm}}$, then $m^*(x,y) := x_I$ where $I:= \{i \mid y_i = \star\}$.
\item If $f = f_{\mathrm{sq}}$, then define $m^*(x,y)$ to be the index of $x\cap y$ in the lexicographically ordered list of all subsets of 
$y$ of size at most $z$.
\end{itemize}
\item\label{step:ov_y} Bob constructs $\ov{y} \in \{0,1\}^d$ as follows:
\begin{itemize}
\item If $f = f_{\mathrm{pm}}$, then Bob replaces each $\star$ entry in $y$ with the corresponding value from the advice string $m$ to obtain $\ov{y}$.
\item If $f = f_{\mathrm{sq}}$, then Bob views $m$ as an index of the lexicographically ordered list of all subsets of 
$y$ of size at most $z$, and sets $\ov{y}$ as the $m$-th subset in that list.
\end{itemize}
\item Carol uses $R_{\pri}$ (randomness that is hidden from Merlin) to generate $t$ independent random strings $r^{(1)}, \dots, r^{(t)} \in \{0,1\}^d $, with entries chosen independently and uniformly from $\{0,1\}$. Carol sends these strings to Alice and Bob.
\item Alice computes a vector $a \in \{0,1\}^t$ such that for each $i \in [t]$, $a_i = \sum_{j=1}^d x_j \cdot r^{(i)}_j \pmod{2}$. Similarly Bob computes a vector $b \in \{0,1\}^t$ such that for each $i \in [t]$, $b_i = \sum_{j=1}^d \ov{y}_j \cdot r^{(i)}_j \pmod{2}$. Alice sends $a$ to Bob, and Bob sends $b$ to Alice.
\item Alice and Bob checks if $a = b$. If $a = b$, then they output $1$, and if $a \neq b$, then they output $0$.
\end{enumerate}
\begin{lemma}[Base case protocol]\label{lem:base_case_protocol}
For any function $f \in \{f_{\mathrm{pm}}, f_{\mathrm{sq}}\}$, any dimension $d$, any sparsity parameters $z, w \in [d]$, %
and any error parameter $\delta \in (0,0.1)$, the protocol $\pi_{\text{base}, f, d, z, w, \delta}$ solves the function $f$ with the following guarantees:
\begin{enumerate}
\item When Merlin's message is the unique special message, the protocol has $\delta$ false positive error.
\item The protocol has $\delta$ soundness. 
\item The communication cost of Merlin is bounded by $c_m \leq w$ if $f = f_{\mathrm{pm}}$, and $c_m \leq \log z + z \log(\frac{ew}{z})$ if $f = f_{\mathrm{sq}}$.
\item The communication cost of Alice and Bob are both bounded by $c_a, c_b \leq \log(\frac{1}{\delta})$.
The time complexity of both players are bounded by $t_a, t_b \leq O(d \cdot \log(\frac{1}{\delta}))$.
\item The communication cost of Carol is bounded by $c_c \leq d \cdot \log(\frac{1}{\delta})$.
The time complexity of Carol is bounded by $t_c \leq O(d \cdot \log(\frac{1}{\delta}))$.
\end{enumerate}
\end{lemma}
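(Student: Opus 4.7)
The plan is to break the proof into three parts matching the three kinds of guarantees: (i) false positive error at most $\delta$ when Merlin transmits the unique special advice $m^*$, (ii) soundness $\delta$ against every other advice, and (iii) the cost accounting. The unifying observation is that Step~\ref{step:ov_y} of the protocol is engineered so that $f(x,y)$ reduces to the equality test $x \stackrel{?}{=} \ov{y}$ in $\{0,1\}^d$, and Steps 4--6 implement a standard $\mathbb{F}_2$-fingerprint test for this equality.

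First I will unpack the semantics of $\ov{y}$ under the correct advice. For $f = f_{\mathrm{pm}}$, when $m = m^* = x|_I$ with $I=\{i : y_i=\star\}$, the reconstructed string satisfies $\ov{y}_i = y_i$ for $i\notin I$ and $\ov{y}_i = x_i$ for $i\in I$; hence $x = \ov{y}$ iff $x_i = y_i$ on every non-$\star$ coordinate, which is exactly $f_{\mathrm{pm}}(x,y)=1$. For $f = f_{\mathrm{sq}}$, the special advice decodes to $\ov{y} = x\cap y$, and $x = x\cap y$ iff $x\subseteq y$, i.e.\ $f_{\mathrm{sq}}(x,y)=1$. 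So under $m=m^*$, the task is reduced to equality, and the $t=\log(1/\delta)$ independent random $\mathbb{F}_2$ inner products $a_i=\langle x,r^{(i)}\rangle$, $b_i=\langle \ov{y},r^{(i)}\rangle$ give the standard guarantee: accept with probability $1$ when $x=\ov{y}$, and accept with probability exactly $2^{-t}=\delta$ when $x\neq \ov{y}$ (each coordinate of $r^{(i)}$ is uniform and independent, so each bit of $a\oplus b$ is an unbiased coin). This yields the claimed $\delta$ false positive error with no false negatives.

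For soundness I must show that any $m\neq m^*$ forces $\ov{y}\neq x$, after which the same fingerprint argument yields rejection probability at least $1-\delta$ regardless of the input. In the $f_{\mathrm{pm}}$ case, $\ov{y}$ agrees with $y$ outside $I$ independent of $m$, while on $I$ we have $\ov{y}|_I = m \neq x|_I$, so $\ov{y}\neq x$. In the $f_{\mathrm{sq}}$ case, every $\ov{y}$ produced by the decoding is a subset of $y$ of size at most $z$, and distinct indices $m$ enumerate distinct subsets; so if $\ov{y}=x$ then $x\subseteq y$, forcing $x = x\cap y$ and therefore $m$ to be the lexicographic index of $x\cap y$, i.e.\ $m = m^*$, contradicting $m\neq m^*$. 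Hence in both cases $x\neq \ov{y}$, and Carol's private randomness (hidden from Merlin, so Merlin cannot tailor $m$ to $R_{\pri}$) makes each fingerprint bit of $a\oplus b$ a fair independent coin, giving acceptance probability $\leq 2^{-t}=\delta$.

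Finally, for the cost bounds: Merlin's message has length $|I|\leq w$ bits for $f_{\mathrm{pm}}$, and for $f_{\mathrm{sq}}$ is an index into the $\sum_{k=0}^{z}\binom{|y|}{k}$-element list of subsets of $y$ of size at most $z$; using $|y|\leq w$, the bound $\binom{w}{k}\leq (ew/k)^k$, and monotonicity of $k\mapsto k\log(ew/k)$ for $k\leq w$, this is at most $\log z + z\log(ew/z)$ bits (up to the benign $\log(z+1)$ accounting). Alice and Bob each send the $t$-bit fingerprint, so $c_a,c_b\leq \log(1/\delta)$; Carol broadcasts the $t$ random strings of length $d$, so $c_c\leq d\log(1/\delta)$. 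Each fingerprint coordinate is an $O(d)$-time parity, and Carol draws $O(d\log(1/\delta))$ random bits, yielding the stated $O(d\log(1/\delta))$ time bounds. I do not expect a real obstacle here: the entire argument is a careful packaging of the $\mathbb{F}_2$-fingerprint equality test, the only subtlety being the bijective nature of the encoding $m^*$ used to rule out ``accidental'' collisions $\ov{y}=x$ under incorrect advice.
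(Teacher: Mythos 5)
Your proposal is correct and follows essentially the same approach as the paper's proof: reduce $f(x,y)$ to the equality test $x \stackrel{?}{=} \ov{y}$ via the advice decoding, apply the $\mathbb{F}_2$-fingerprint test with $t=\log(1/\delta)$ independent inner products, and verify that incorrect advice always forces $\ov{y}\neq x$. The only superficial differences are that you argue soundness for $f_{\mathrm{sq}}$ by contradiction rather than by directly exhibiting a differing coordinate in $y$, and you make the monotonicity step in the binomial bound and the role of $R_{\pri}$'s hiddenness slightly more explicit.
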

\begin{proof}
{\bf Error under unique special advice.}
Assume that Merlin's message is $m = m^*(x,y)$. We first prove that $\ov{y} = x$ if and only if $f(x,y) = 1$. We prove this by considering the two cases of $f$:
\begin{itemize}
\item If $f = f_{\mathrm{pm}}$, then $m^*(x,y) = x_I$, where $I$ is the set of coordinates where $y$ is $\star$. By the definition of $\ov{y}$ in Step~\ref{step:ov_y}, we have $\ov{y}_I = m^*(x,y) = x_I$, and so $\ov{y} = x$ if and only if $x$ matches $y$.
\item If $f = f_{\mathrm{sq}}$, then $m^*(x,y)$ is the index of the set $x \cap y$ as a subset of $y$, and by the definition of $\ov{y}$ in Step~\ref{step:ov_y}, we have $\ov{y} = x \cap y$, and so $\ov{y} = x$ if and only if $x$ is a subset of $y$.
\end{itemize}
Next we compute the false negative and false positive errors when $m = m^*(x,y)$.
\begin{itemize}
\item If $f(x, y) = 1$, then we have $\ov{y} = x$, and so each $a_i = \sum_{j=1}^d x_j \cdot r^{(i)}_j \pmod{2} = \sum_{j=1}^d \ov{y}_j \cdot r^{(i)}_j \pmod{2} = b_i$, so the protocol outputs $1$, and there is no false negative error.
\item If $f(x,y) = 0$, then we have $\ov{y} \neq x$, and so for each $i \in [t]$ we have $\Pr[a_i = b_i] \leq 1/2$, so with probability $\geq 1-2^{-t} = 1-\delta$ there exists a coordinate $i$ where $a_i \neq b_i$ and the protocol outputs $0$. So the false positive error is at most $\delta$.
\end{itemize}

{\bf Soundness.} We first prove that when Merlin's message $m \neq m^*(x,y)$, then we must have $\ov{y} \neq x$. We again consider the two cases of $f$:
\begin{itemize}
\item If $f = f_{\mathrm{pm}}$, then when $m \neq m^*(x,y) = x_I$, there must exist a coordinate $i^*\in I$ where $\ov{y}_{i^*} = m_{i^*} \neq x_{i^*}$. 
\item If $f = f_{\mathrm{sq}}$, then when $m \neq m^*(x,y)$, $m$ doesn't encode $x \cap y$, so the decoded $\ov{y} \neq x \cap y$, so there must exist a coordinate $i^*\in y$ where $\ov{y}_{i^*} \neq x_{i^*}$. 
\end{itemize}
Then using the same argument as the $\ov{y} \neq x$ case above, $\Pr[a=b] \leq 2^{-t} = \delta$. So the soundness is at most $\delta$.

{\bf Cost.} Finally we bound the communication and time cost of each player.
\begin{itemize}
\item {\bf Merlin.} If $f = f_{\mathrm{pm}}$, Merlin sends a message of at most $|I| \leq w$ bits.

If $f = f_{\mathrm{sq}}$, since $|x| \leq z \leq w$ and $|y| \leq w$, we have that there are in total at most $\sum_{z'=1}^z \binom{w}{z'} \leq \sum_{z'=1}^z (\frac{ew}{z'})^{z'} \leq z \cdot (\frac{ew}{z})^z$ number of possible subsets of $y$ with size at most $z$. So the index of such subsets of $y$ can be encoded using $\log(z \cdot (\frac{ew}{z})^z) \leq \log z + z \log(\frac{ew}{z})$ number of bits, and this is the bound of the size of Merlin's message. 
\item {\bf Alice and Bob.} Alice's message $a$ and Bob's message $b$ both have length at most $t$.

Alice and Bob each spend time at most $O(d \cdot \log(\frac{1}{\delta}))$ to compute $a$ and $b.$

\item {\bf Carol.} Carol sends $r^{(1)}, \cdots, r^{(t)} \in \{0,1\}^d$ using at most $d \cdot \log(\frac{1}{\delta})$ bits.

Sampling these random strings takes time $O(d \cdot \log(\frac{1}{\delta}))$. \qedhere
\end{itemize}
\end{proof}

\subsection{Phase 2: Protocol for \texorpdfstring{$w$}{w}-sparse subset query}
In this section we present a protocol $\pi_{\mathrm{sq}, d, w, \epsilon,\delta}(\lambda, x, y, m, R_{\pub}, R_{\pri})$ in the specialized product-distributional communication model for the subset query function $f_{\mathrm{sq}}(x,y)$, where the inputs are the distribution $\lambda$, Alice's input $x$ which is sampled from $\lambda$, Bob's input $y$ which is any string in $\{0,1\}^d$ with at most $w$ ones, Merlin's advice $m$, and the randomness $R_{\pub}$ and $R_{\pri}$, where $R_{\pri}$ is hidden from Merlin.

\paragraph{Protocol.} Let $\ell = \sqrt{\frac{w}{\log(w/ \epsilon)}}$. Let $\epsilon' = \frac{\epsilon}{20 \ell}$, $\delta' = \frac{\delta}{10}$, and let $t = \frac{10}{\epsilon'} \log(\frac{1}{\epsilon'})$. Let $h = \log(\frac{1}{\epsilon'})$.

\begin{enumerate}
\item If $w \leq 100 \log(\frac{w}{\epsilon})$:
\begin{enumerate}
\item If $|x| > w$, then Alice outputs $0$ and informs Bob using $O(1)$ bits.
\item If $|x| \leq w$, then all players run the base case protocol $\pi_{\text{base}, f_{\mathrm{sq}}, d, w, w, \delta'}(x, y, m, R_{\pri})$, and Alice and Bob output the returned value of this subroutine.
\end{enumerate}
\item Let $U = [d]$ and let $w' = w$. Repeat for at most $2 \ell$ iterations:
\begin{enumerate}
\item If $|x| > w'$, then Alice outputs $0$ and informs Bob using $O(1)$ bits.
\item If $|x| \leq w/\ell$, then Alice uses $O(1)$ bits to inform Bob of this. All players run the base case protocol $\pi_{\text{base}, f_{\mathrm{sq}}, |U|, w/\ell, w, \delta'}(x, y, m, R_{\pri})$, and Alice and Bob output the returned value of this subroutine.
\item If $w/\ell < |x| \leq w'$, then Alice uses $O(1)$ bits to inform Bob of this. Carol uses randomness $R_{\pub}$ to sample $t$ independent sets $X_1, X_2, \cdots, X_t \sim \lambda|_{w/\ell < |X| \leq w'}$. Bob checks if there exists a set $X_i$ such that $|X_i \backslash y| \leq h$.
\begin{enumerate}
    \item If there is no such set, i.e., $|X_i \backslash y| > h$ for all $i \in [t]$. Then Carol uses randomness $R_{\pub}$ to sample $\log(\frac{10\ell}{\delta'})$ independent sets $S_1, S_2, \cdots, S_{\log(\frac{10\ell}{\delta'})} \subseteq U$ where each coordinate is sampled i.i.d.~with probability $1/2$. 
    \begin{enumerate}
        \item If there is no set $S_j$ such that $|y \cap S_j| \leq 2 w'/3$, then the protocol outputs $1$.
        \item Otherwise, Bob finds a set $S_{j^*}$ such that $|y \cap S_{j^*}| \leq 2w'/3$, and sends the index $j^*$ to Alice. Alice sets $x' \leftarrow x \cap S_{j^*}$, Carol sets $\lambda'$ to be the restriction of $\lambda$ to the domain $S_{j^*}$, and Bob sets $y' \leftarrow y \cap S_{j^*}$. The players run the protocol $\pi_{\text{sq}, |S_{j^*}|, 2w'/3, \epsilon/2, \delta'}(\lambda', x', y', m, R_{\pub}, R_{\pri})$ recursively and output its returned value.
    \end{enumerate}
    \item If there exist such sets, then Bob sets $i^*$ as the minimum index such that $|X_{i^*} \backslash y| \leq h$, and sends $i^*$ together with the index of $(X_{i^*} \backslash y)$ in the lexicographically ordered list of all subsets of $X_{i^*}$ to Alice. %
    \begin{enumerate}
    \item Alice computes $x \cap (X_{i^*} \backslash y)$. If this set is not empty, then Alice outputs $0$ and informs Bob.
    \item If $x \cap (X_{i^*} \backslash y) = \emptyset$. Then Alice updates her set $x \leftarrow x \backslash X_{i^*}$, Bob updates his set $y \leftarrow y \backslash X_{i^*}$, Carol update the distribution $\lambda$ to be the restriction of $\lambda$ to the domain $U \backslash X_{i^*}$, and all players update $U \leftarrow U \backslash X_{i^*}$, and $w' \leftarrow w' - |X_{i^*} \cap y|$.
    \end{enumerate}
\end{enumerate}
\end{enumerate}
\end{enumerate}

\begin{lemma}[$w$-sparse subset query protocol]\label{lem:w_sparse_subset_query_protocol}
For any $\delta \leq \epsilon \in (0,0.1)$, any dimension $d$, and any sparsity parameter $w \leq d$, the protocol $\pi_{\mathrm{sq}, d, w, \epsilon, \delta}$ solves the $w$-sparse subset query problem $f_{\mathrm{sq}}$ with the following guarantees:
\begin{enumerate}
\item The protocol has $\delta$ soundness.
\item When Merlin's message is the unique special message, the protocol has false positive error $\epsilon+\delta$.
\item The communication cost of Merlin is bounded by
\[
c_m \leq O\left(\sqrt{w \log(w/\epsilon)}\Big(1 + \log(\frac{w}{\log(w/\epsilon)}) \Big)\right).
\]
\item The communication cost of Alice is bounded by
\[
c_a \leq O\left(\sqrt{\frac{w}{\log(w / \epsilon)}} + \log(\frac{1}{\delta})\right),
\]
and the communication cost of Bob is bounded by
\[
c_b \leq O\left(\sqrt{w \log(w/\epsilon)} \cdot \log(\frac{w}{\log(w/\epsilon)})+ \log \log(\frac{w}{\delta}) \cdot \log w + \log(\frac{1}{\delta}) \right).
\]
The time complexity of both players are bounded by 
\[
t_a, t_b \leq O\left(\frac{d w}{\epsilon} (\log d)^3 \log(\frac{1}{\epsilon})^2 + d \log(\frac{1}{\delta})\right).
\]
\item The communication cost of Carol is bounded by 
\[
c_c \leq O\left(\frac{dw}{\epsilon} \log(\frac{w}{\delta}) (\log d)^3 \log(\frac{1}{\epsilon})^2 \right).
\]

The time complexity of Carol is bounded by 
\[
t_c \leq O\left((w / \epsilon) (\log d)^2 \log (1/\epsilon) \mathcal{T}_{\mathrm{Sample}} + d \mathcal{T}_{\mathrm{Update}} + \log(\frac{w}{\delta}) d \log d\right),
\]
where $\mathcal{T}_{\mathrm{Sample}}$ denotes the time to sample a set from distribution $\lambda$ conditioned on that the size of the sampled set is within a given range, and $\mathcal{T}_{\mathrm{Update}}$ denotes the time to update $\lambda$ to its restriction on a subset of $[d]$.
\end{enumerate}
\end{lemma}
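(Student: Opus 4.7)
The plan is to establish Lemma 4.2 by induction on the parameter $w$, tracking the invariants preserved across iterations and accumulating error and cost contributions. The base case $w \le 100\log(w/\epsilon)$ reduces directly to $\pi_{\mathrm{base}, f_{\mathrm{sq}}, d, w, w, \delta'}$, whose guarantees follow from Lemma 3.1 with $\delta' = \delta/10$. For the inductive step, I first verify the semantic invariants: whenever the protocol updates $x \leftarrow x \setminus X_{i^*}$ and $y \leftarrow y \setminus X_{i^*}$ after confirming $x \cap (X_{i^*} \setminus y) = \emptyset$, the original query ``$x \subseteq y$'' is equivalent to the updated query on $U \setminus X_{i^*}$. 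When the protocol instead recurses on the restricted universe $S_{j^*}$, I would argue soundness by coupling: the triggering event (no sampled $X_i$ is a near-subset of $y$) implies with high probability over $x \sim \lambda$ that $|x \setminus y| > h$, so a uniformly random $S_{j^*}$ retains a witness of $x \not\subseteq y$ with probability at least $1-\epsilon'$.

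For the false-positive error under the unique advice $m^*$, I would decompose the per-outer-iteration contributions into three sources: the $\pi_{\mathrm{base}}$ invocation contributes $\delta'$; the event that none of the $\log(10\ell/\delta')$ sampled $S_j$ satisfies $|y \cap S_j| \le 2w'/3$ contributes at most $\delta'/(10\ell)$ by Chernoff, using $\mathbb{E}[|y \cap S_j|] = |y|/2 \le w'/2$; and the recursive call with halved error $\epsilon/2$ contributes inductively. Summing the geometric series $\sum_k \epsilon/2^k \le \epsilon$ over recursion depth and the $\delta'/(10\ell)$ failures across $2\ell$ outer iterations yields a total false-positive error of at most $\epsilon + \delta$. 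For soundness, the only ``output 1'' events arise either inside $\pi_{\mathrm{base}}$ (inheriting its $\delta'$ soundness against arbitrary Merlin messages) or from the $S_j$ sampling (determined by $R_{\pub}$ and $y$ independently of Merlin's message), so a union bound over $2\ell$ iterations and all recursion depths gives total soundness $\le \delta$.

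The cost accounting then proceeds iteration by iteration. Termination follows from $\ell = \sqrt{w/\log(w/\epsilon)}$ and $h = \Theta(\log(w/\epsilon))$: each successful update decreases $w'$ by at least $|X_{i^*}| - h \ge w/\ell - h = \Theta(w/\ell)$ since $h \ll w/\ell$, so the outer loop exits via the base case within $2\ell$ iterations. Merlin's dominant cost is the single $\pi_{\mathrm{base}}$ invocation with $z = w/\ell$: by Lemma 3.1, $c_m \le \log z + z \log(ew/z) = O(\sqrt{w\log(w/\epsilon)} \cdot \log(w/\log(w/\epsilon)))$. Alice sends $O(1)$ status bits per iteration plus $O(\log(1/\delta))$ in the base case. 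Bob sends either an index into $\log(10\ell/\delta')$ samples or an index $i^* \in [t]$ together with an encoding of $X_{i^*} \setminus y$ of size $\le h$ per iteration, summing over $2\ell$ iterations to the stated bound. Carol transmits $t = \tilde{O}(1/\epsilon')$ sampled sets plus $\log(10\ell/\delta')$ random subsets of $U$ per iteration; the time bounds follow by adding $\mathcal{T}_{\mathrm{Sample}}$ and $\mathcal{T}_{\mathrm{Update}}$ per invocation and summing.

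The main obstacle I anticipate is establishing the correctness of the recursion on $S_{j^*}$ --- specifically, that restricting the query to $S_{j^*}$ faithfully preserves the subset-query answer conditioned on the failure event of the $X_i$ sampling step. This requires coupling the probabilistic sampling of the $X_i$'s from $\lambda$ with the structural properties of $x \sim \lambda$ itself, and arguing that the failure event forces $|x \setminus y| > h$ with high probability, so that a uniformly random coordinate subset preserves a witness of non-containment. A secondary, more technical obstacle is carefully accumulating the error budget across recursion depths of up to $\Theta(\log w)$ while simultaneously tracking the outer-loop progress, ensuring neither the $\delta$ nor the $\epsilon$ contributions blow up under the geometric accumulation.
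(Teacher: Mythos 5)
Your proposal takes essentially the same route as the paper's proof: induction on $w$, the same semantic invariants under the $x \leftarrow x\setminus X_{i^*}$ updates, the same termination argument via $w' \to w' - \Theta(w/\ell)$, the same decomposition of error into the base-case $\delta'$ contribution, the $S_j$ sampling failure, and the recursive call with halved $\epsilon$, and the same per-iteration/per-recursion-level cost tally for Merlin, Alice, Bob, and Carol. Your "coupling" argument for the recursion step correctly packages what the paper treats as three bad events — (i) $|x\setminus y|>h$ but $S_{j^*}$ misses the witness, (ii) the conditional probability of small $|X\setminus y|$ is at most $\epsilon'$ yet $x$ lands in it, and (iii) the conditional probability exceeds $\epsilon'$ yet all $t$ samples miss (Chernoff) — so the substance is there, though if you write it out you will want to separate these three cases explicitly rather than rely on the geometric series $\sum_k \epsilon/2^k$ over recursion depth (the paper instead invokes the induction hypothesis on the recursive call with error budget $\epsilon/2+\delta'$, which is cleaner and avoids tracking $\delta$-contributions across $\Theta(\log w)$ levels by hand).
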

\begin{proof}
We prove this lemma by induction on $w$. When $w \leq 10$, the protocol must enter Step~1, and by Lemma~\ref{lem:base_case_protocol} all claims in the lemma statement are satisfied. Next we consider any fixed $w > 10$, and prove the lemma for $w$ while assuming it is true for sparsity at most $w-1$.

If $w \leq 100 \log(\frac{w}{\epsilon})$, then we enter Step 1, and by Lemma~\ref{lem:base_case_protocol}, the protocol has $\delta$ soundness and $\delta$ false positive error, and the communication cost of Alice and Bob are bounded by $O(\log(\frac{1}{\delta}))$, the communication cost of Merlin is bounded by $O(\log w + w \log(e)) = O(w) \leq O(\sqrt{w \log(\frac{w}{\epsilon})})$. We can also easily check all other claims in the lemma statement are satisfied.

Now we can assume $w > 100 \log(\frac{w}{\epsilon})$ and we always enter Step 2. Next we prove some basic properties of the protocol. Observe that the protocol has $2 \ell$ iterations, and in each iteration it enters exactly one of the following six clauses: Step 2(a), 2(b), 2(c)(i)(A), 2(c)(i)(B), 2(c)(ii)(A), or 2(c)(ii)(B). The protocol will break from the loop and output an answer unless it enters Step 2(c)(ii)(B). In the proof, for any $i \in [0 : 2 \ell]$, we define notations $U^{(i)}$, $w'^{(i)}$, $x^{(i)}$, $y^{(i)}$, and $\lambda^{(i)}$ to be the corresponding $U$, $w'$, $x$, $y$, and $\lambda$ after $i$ iterations. Note that initially we have $U^{(0)} = [d]$, $w'^{(0)} = w$, $x^{(i)} = x$, $y^{(i)} = y$, $\lambda^{(i)} = \lambda$. For any $i \in [2\ell]$, we also define $X_{i^*}^{(i)}$ to be the $X_{i^*}$ chosen by Bob in Step 2(c)(ii) in the $i$-th iteration.
\begin{itemize}
\item {\bf Bounds of parameters.} Since $w > 100 \log(\frac{w}{\epsilon})$, we have $\ell = \sqrt{\frac{w}{\log(w/ \epsilon)}} > 10$, and that
\begin{align}\label{eq:h_is_small}
\frac{w}{\ell} = \sqrt{w \log(w/\epsilon)} 
> 10 \log(\frac{w}{\epsilon}) 
\geq 10 \log(\frac{20 \ell}{\epsilon}) = 10 \log(\frac{1}{\epsilon'}) = 10h,
\end{align}
where the third step follows from $w \geq 20 \ell$ because we have $w/\ell =\sqrt{w \log(w/\epsilon)} \geq 10 \log(w/\epsilon) \geq 20$ since $w>10$ and $\epsilon<0.1$.
\item {\bf The protocol always terminates with an output.} Consider the $i$-th iteration for any $i \in [2\ell]$. We either terminates with an output by entering Steps 2(a), 2(b), 2(c)(i)(A), 2(c)(i)(B), 2(c)(ii)(A), or we enter Step 2(c)(ii)(B). By the way that we generate $X_{i^*}^{(i)}$ in Step 2(c), we have $|X_{i^*}^{(i)}| > w/\ell$. Since we also have $|X_{i^*}^{(i)} \backslash y^{(i-1)}| \leq h$ in Step 2(c)(ii)(B), we have 
\[
|X_{i^*}^{(i)}\cap y^{(i-1)}| > w/\ell - h \geq 0.9 w/\ell,
\]
where the last step follows from Eq.~\eqref{eq:h_is_small}. So we have $w'^{(i)} = w'^{(i-1)} - |X_{i^*}^{(i)}\cap y^{(i-1)}| \leq w'^{(i-1)} - 0.9 w / \ell$. Thus after $2\ell$ iterations we must have $w'^{(2\ell)} \leq w - 2 \ell \cdot (0.9 w / \ell)< 0$. This means we must have stopped the protocol by entering either Step 2(a) or Step 2(b) before the $2 \ell$ iteration.

\item {\bf In each iteration $i$, $x^{(i)} \subseteq y^{(i)}$ if and only if $x \subseteq y$.} We prove this by induction. The base case for $i = 0$ is trivially true. Consider any iteration $i \geq 1$. We have $x^{(i)} = x^{(i-1)} \backslash X_{i^*}^{(i)}$ and $y^{(i)} = y^{(i-1)} \backslash X_{i^*}^{(i)}$, and we also have $x^{(i-1)} \cap (X_{i^*}^{(i)} \backslash y^{(i-1)}) = \emptyset$ since otherwise we would have stopped the protocol in Step 2(c)(ii)(A). So we have $x^{(i-1)} \subseteq y^{(i-1)}$ if and only if $x^{(i)} \subseteq y^{(i)}$.
\item {\bf In each iteration $i$, $y^{(i)}$ has at most $w'^{(i)}$ number of $1$'s.} This directly follows from $y^{(i)} = y^{(i-1)} \backslash X_{i^*}^{(i)}$ and $w'^{(i)} = w'^{(i-1)} - |X_{i^*}^{(i)}\cap y^{(i-1)}|$.
\item {\bf The outputs in Step 2(a) or Step 2(c)(ii)(A) are correct.} If the protocol enters Step 2(a) in iteration $i$, then this means $|x^{(i-1)}| > w'^{(i-1)}$, but we also have $|y^{(i-1)}| \leq w'^{(i-1)}$, so $x^{(i-1)}$ cannot be a subset of $y^{(i-1)}$, and this implies $x$ cannot be a subset of $y$, so it's correct to output $0$. 

If the protocol enters Step 2(c)(ii)(A) in iteration $i$, then this means there is an coordinate in $x^{(i-1)} \cap (X_{i^*}^{(i)} \backslash y^{(i-1)})$, so $x^{(i-1)}$ is not a subset of $y^{(i-1)}$, and this implies $x$ cannot be a subset of $y$, so it's correct to output $0$. 
\end{itemize}

{\bf Soundness.} We first define Merlin's unique special advice $m^*(\lambda, x, y, R_{\pub})$. Our protocol only uses Merlin's advice and randomness $R_{\pri}$ in the base case subset query subroutine $\pi_{\text{base}}$ in Steps 1(b) or 2(b) of the current protocol, or in the recursive calls in Step 2(c)(i)(B). Also note that during the protocol and its recursive calls, we run the base case subset query subroutine for at most once, because after running the subroutine we immediately output its returned value and terminate the protocol. So for any given $\lambda, x, y, R_{\pub}$, the transcript of the protocol is completely fixed except for the internal transcript of the base case subset query subroutine. This means given $\lambda, x, y, R_{\pub}$ we can identify the exact step and iteration in which the subroutine is called, along with the specific inputs for the subroutine. We define $m^*(\lambda, x, y, R_{\pub})$ to be the unique special Merlin's advice for this subroutine.

Next we consider when Merlin's message $m \neq m^*(\lambda, x, y, R_{\pub})$. There are two cases that the protocol could output $1$:
\begin{enumerate}
\item The protocol ends with a call to the base case subroutine $\pi_{\text{base}}$ in either Steps~1(b) and 2(b) or in the recursive calls. Then by Lemma~\ref{lem:base_case_protocol} we have that under any Merlin's advice $m$ that is not the unique special advice, the base case subroutine outputs $0$ with probability $1-\delta'$.
\item The protocol outputs $1$ in Step 2(c)(i)(A). This only happens when the $\log(\frac{10\ell}{\delta'})$ independent sets $S_j$'s all satisfy $|y \cap S_j| > 2w'/3$, and this happens with probability at most $2^{-\log(\frac{10\ell}{\delta'})} = \frac{\delta'}{10 \ell}$ in each iteration, so using Union bound over all $2\ell$ iterations, this happens with probability at most $\delta'/5$.
\end{enumerate}
Using union bound over these two cases, we have that the probability the protocol outputs $1$ under any Merlin's message $m \neq m^*(\lambda, x, y, R_{\pub})$ is at most $\delta' + \delta'/5 \leq \delta$.

{\bf Error under unique special advice.} Next we assume that Merlin's message is the unique special message $m^*(\lambda, x, y, R_{\pub})$ as defined in the above paragraph.

As shown in the beginning of the proof, the output of the protocol is always correct in Step 2(a) or 2(c)(ii)(A), so an error could only occur if the protocol outputs an answer in Step 2(b), 2(c)(i)(A), or 2(c)(i)(B). Next we bound the error of these three cases one by one.
\begin{enumerate}
\item {\bf Step 2(b).} In this step, we run the subroutine $\pi_{\text{base}, f_{\mathrm{sq}}, |U|, w/\ell, w, \delta'}(x, y, m, R_{\pri})$, and by Lemma~\ref{lem:base_case_protocol}, there is a false positive error of $\delta'$ when Merlin's message is the unique special message. Also note that we only enter this step once during all iterations, so the total error of this step is $\delta'$.
\item {\bf Step 2(c)(i)(A).} In this step we always output $1$, so it only incurs a false positive error. Same as the proof in the soundness section, in each iteration we could enter this step with probability at most $\frac{\delta'}{10 \ell}$, so over $2\ell$ iterations the probability that we ever enter this step is at most $\frac{\delta'}{5}$, and this is an upper bound of the error caused by this step.
\item {\bf Step 2(c)(i)(B).} In this step we recursively determine if $x \cap S_{j^*}$ is a subset of $y \cap S_{j^*}$. If $x \subseteq y$, then we must also have $x\cap S_{j^*}\subseteq y\cap S_{j^*}$, and since by the induction hypothesis the recursive call $\pi_{\text{sq}, |S_{j^*}|, 2w'/3, \epsilon/2, \delta'}(\lambda', x', y', m, R_{\pub}, R_{\pri})$ only has false positive error, this step is always correct when $x \subseteq y$. Next we bound the false positive error, i.e., we bound the probability that this step outputs $1$ when in fact $x$ is not a subset of $y$. We consider two cases:

If $x$ is not a subset of $y$, and $x\cap S_{j^*}$ is also not a subset of $y\cap S_{j^*}$. Then by the induction hypothesis the error of the recursive call $\pi_{\text{sq}, |S_{j^*}|, 2w'/3, \epsilon/2, \delta'}$ is at most $\epsilon/2+\delta'$. Also note that we only enter the recursive call once during all iterations, so there is no need to union bound over all $2\ell$ iterations.

If $x$ is not a subset of $y$, but $x\cap S_{j^*}$ is a subset of $y\cap S_{j^*}$. This could happen in the $i$-th iteration only if one of the following three bad events happens: (1) $|x^{(i)} \backslash y^{(i)}| > h$, but after sampling $S_{j^*}$ we fail to include any coordinate from $x^{(i)} \backslash y^{(i)}$, or (2) $|x^{(i)} \backslash y^{(i)}| \leq h$, $x\cap S_{j^*} \subseteq y\cap S_{j^*}$, and $\Pr_{X \sim \lambda^{(i)}|_{w/\ell < |X| \leq w'^{(i)}}}[|X \backslash y^{(i)}| \leq h] \leq \epsilon'$, or (3) $|x^{(i)} \backslash y^{(i)}| \leq h$, $x\cap S_{j^*} \subseteq y\cap S_{j^*}$, and $\Pr_{X \sim \lambda^{(i)}|_{w/\ell < |X| \leq w'^{(i)}}}[|X \backslash y^{(i)}| \leq h] \geq \epsilon'$. Next we consider these three cases one by one.
\begin{itemize}
    \item Case 1: If $|x^{(i)} \backslash y^{(i)}| > h$, but after sampling $S_{j^*}$ we have $x\cap S_{j^*} \subseteq y\cap S_{j^*}$.

    Since each coordinate is independently included in $S_{j^*}$ with probability $1/2$, the probability that no coordinate in $x^{(i)} \backslash y^{(i)}$ is included in $S_{j^*}$ is at most $(1/2)^{|x^{(i)} \backslash y^{(i)}|} \leq 2^{-h} = \epsilon'$. We also need to add the probability that $|y^{(i)} \cap S_{j^*}| > 2w'/3$, since then we would have entered Step 2(c)(i)(A) instead. Using Chernoff bound this probability is bounded by $e^{-\Theta(w')} \leq e^{-\Theta(w/\ell)} \leq 2^{-h} = \epsilon'$ since $w' \geq w/\ell > 10 h$. So by union bound the error in this case is at most $2 \epsilon'$.
    \item Case 2: If $\Pr_{X \sim \lambda^{(i)}|_{w/\ell < |X| \leq w'^{(i)}}}[|X \backslash y^{(i)}| \leq h] \leq \epsilon'$, yet $|x^{(i)} \backslash y^{(i)}| \leq h$.

    Since $x^{(i)}$ follows the distribution $\lambda^{(i)}|_{w/\ell < |X| \leq w'^{(i)}}$ if we enter Step 2(c), we have that $|x^{(i)} \backslash y^{(i)}| \leq h$ could only happen for $\epsilon'$ fraction of $x$ that enters Step 2(c).
    \item Case 3: If $\Pr_{X \sim \lambda^{(i)}|_{w/\ell < |X| \leq w'^{(i)}}}[|X \backslash y^{(i)}| \leq h] \geq \epsilon'$.

    We prove that the probability we enter Step 2(c)(i) in this case is small. Note that we only enter Step 2(c)(i) when $|X_j \backslash y^{(i)}| > h$ for all $j \in [t]$. Define $t$ Bernoulli random variables $Y_1, Y_2, \cdots, Y_t$ such that $Y_j = 1$ iff $|X_j \backslash y^{(i)}| \leq h$. $Y_1, \cdots, Y_t$ are independent because $X_1, \cdots, X_t$ are sampled independently. Let $Y = \sum_{j=1}^t Y_j$. We enter Step 2(c)(i) only when $Y = 0$. By the condition $\Pr_{X \sim \lambda^{(i)}|_{w/\ell < |X| \leq w'^{(i)}}}[|X \backslash y^{(i)}| \leq h] \geq \epsilon'$ we have $\Pr[Y_j = 1] \geq \epsilon'$. Thus $\E[Y] \geq t \epsilon' = 10 \log(1/\epsilon')$. Using Chernoff bound we have
    \[
    \Pr[Y = 0] < \Pr[Y < \E[Y]/2] < e^{-\E[y]/8} \leq \epsilon'.
    \]
\end{itemize}
Combining these three cases over all $2\ell$ iterations using the union bound, we have the error of this case is at most $2\ell \cdot (2 \epsilon' + \epsilon' + \epsilon') \leq \epsilon/2$.

Combining the two cases of whether or not $x \cap S_{j^*}$ is a subset of $y \cap S_{j^*}$, we have that the total error of Step 2(c)(i)(B) is at most $\epsilon + \delta'$.
\end{enumerate}
Note that all the errors above are false positive errors. Adding them up using the union bound, we have that the total error under the unique special advice is at most $\epsilon + \delta$.

{\bf Cost.} Finally we bound the communication cost and time complexity of each players. We only consider Step 2 since we already bounded the cost of Step 1.
\begin{itemize}
\item {\bf Merlin.} Merlin sends messages either in the subroutine in Step 2(b) or in the recursive call in Step 2(c)(i)(B). By Lemma~\ref{lem:base_case_protocol} and since we run the protocol with $z = w/\ell$, Merlin's message in Step 2(b) is bounded by 
\begin{align*}
c_m \leq \log z + z \log(\frac{ew}{z}) 
= &~ \log(\frac{w}{\ell}) + \frac{w}{\ell} \log(e \ell) \\
\leq &~ \sqrt{w \log(w/\epsilon)} \cdot \big(3 + \log(\frac{w}{\log(w/\epsilon)})\big).
\end{align*}
By the induction hypothesis, Merlin's message in Step 2(c)(i)(B) is bounded by 
\[
\sqrt{(2w/3) \log((4w/3)/\epsilon)} \cdot \big(3 + \log(\frac{2w/3}{\log((4w/3)/\epsilon)})\big) \leq \sqrt{w \log(w/\epsilon)} \cdot \big(3 + \log(\frac{w}{\log(w/\epsilon)})\big).
\]

Merlin's cost is bounded by the maximum of these two costs.

\item {\bf Alice:} In each iteration Alice sends $O(1)$ bits in the current iteration and continue to the next iteration. Apart from these costs, during all iterations, Alice sends one of the following two messages for once: (1) a message of at most $\log(\frac{1}{\delta'})$ bits (from Lemma~\ref{lem:base_case_protocol}) in the subroutine of Step 2(b), or (2) a message of at most $O\Big(\sqrt{\frac{2w/3}{\log\big((2w/3) / (\epsilon/2)\big)}} + \log(\frac{1}{\delta})\Big)$ bits (by the induction hypothesis) in the recursive call of Step 2(c)(i)(B). Thus the total number of bits that Alice sends is 
\begin{align*}
c_a = &~ O(\ell) + \max\Big\{\log(\frac{1}{\delta'}), O\Big(\sqrt{\frac{2w/3}{\log\big((2w/3) / (\epsilon/2)\big)}} + \log(\frac{1}{\delta})\Big) \Big\} \\
= &~ O(\sqrt{\frac{w}{\log(w/\epsilon)}}) + \max\Big\{\log(\frac{1}{\delta'}), O\Big(\sqrt{\frac{2w/3}{\log\big((2w/3) / (\epsilon/2)\big)}} + \log(\frac{1}{\delta})\Big) \Big\} \\
\leq &~ O\Big(\sqrt{\frac{w}{\log(w /\epsilon)}} + \log(\frac{1}{\delta})\Big).
\end{align*}

In each iteration, Alice needs to go over the $t$ sets $X_i$'s and update her set $x$, and this takes at most $O(t d \log d)$ time. She also run the subroutine in Step 2(b) or the recursive call in Step 2(c)(i)(B) at most once, and this takes an extra $O(d \log(\frac{1}{\delta}))$ time by Lemma~\ref{lem:base_case_protocol}. So the total time of Alice is $t_a \leq O(\ell t d \log d + d \log(\frac{1}{\delta})) = O((d \ell^2 / \epsilon) \log (\ell/\epsilon) \log d + d \log(\frac{1}{\delta})) \leq O((d w / \epsilon) (\log d)^3 \log(1/\epsilon)^2 + d \log(\frac{1}{\delta}))$.

\item {\bf Bob:} In each iteration, in Step 2(c)(ii) Bob sends the index $i^*$ using $\log t \leq O(\log(\ell/\epsilon)) \leq O(\log(w/\epsilon))$ bits together with the index of the subset $(X_{i^*} \backslash y)$ using $\log(\binom{w}{h}) \leq O(h \log(\frac{w}{h})) \leq O(\log(\frac{w}{\epsilon}) \log(\frac{w}{\log(w/\epsilon)}))$ bits, and then he continues to the next iteration. Bob also needs to send $\log \log(\frac{\ell}{\delta})$ bits in Step 2(c)(i)(B) when sending $j^*$ for every level of recursion, and this gives a total cost of $\log \log(\frac{\ell}{\delta}) \cdot \log w$ bits. Apart from these costs, during all iterations, Bob also sends one of the following two messages for once: (1) a message of at most $\log(\frac{1}{\delta'})$ bits (from Lemma~\ref{lem:base_case_protocol}) in the subroutine of Step 2(b), or (2) a message of at most $O\Big(\sqrt{(2w/3)\log\big(\frac{2w/3}{\epsilon/2}\big)}\cdot \log(\frac{2w/3}{\log(\frac{2w/3}{\epsilon/2})}) + \log(\frac{1}{\delta})\Big)$ bits (by the induction hypothesis) in the recursive call of Step 2(c)(i)(B). Thus the total number of bits that Bob sends is 
\begin{align*}
c_b = &~ \ell \cdot O(\log(w/\epsilon)) \log(\frac{w}{\log(w/\epsilon)}) + \log \log(\frac{\ell}{\delta}) \cdot \log w \\
&~ + \max\left\{\log(\frac{1}{\delta'}), O\Big(\sqrt{(2w/3)\log\big(\frac{2w/3}{\epsilon/2}\big)}\cdot \log(\frac{2w/3}{\log(\frac{2w/3}{\epsilon/2})}) + \log(\frac{1}{\delta})\Big) \right\} \\
\leq &~ O\left(\sqrt{w \log(w/\epsilon)} \cdot \log(\frac{w}{\log(w/\epsilon)})+ \log \log(\frac{w}{\delta}) \cdot \log w + \log(\frac{1}{\delta}) \right) 
\end{align*}
where the second step follows from $\ell = \frac{w}{\log(w/\epsilon)}$, and we don't lose an extra $\log w$ term by the recursion because the coefficient of the term $\sqrt{(2w/3)\log\big(\frac{2w/3}{\epsilon/2}\big)}\cdot \log(\frac{2w/3}{\log(\frac{2w/3}{\epsilon/2})})$ is less than 1, so the geometric series of all levels of recursions converges to a constant.

The total time of Bob is the same as Alice: $t_b \leq O((d w / \epsilon) (\log d)^3 \log(1/\epsilon)^2 + d \log(\frac{1}{\delta}))$.

\item {\bf Carol:} In each iteration Carol writes $t d \log d$ bits to the public channel in Step 2(c) and $\log(\frac{10\ell}{\delta'}) d \log d$ bits in Step 2(c)(i). Thus over $\ell$ iterations the total number of bit that Carol writes is
\[
c_c = O(\ell (t+\log(\frac{w}{\delta})) d \log d) \leq O(\frac{dw}{\epsilon} \log(\frac{w}{\delta}) (\log d)^3 \log(\frac{1}{\epsilon})^2 ).
\]

The total time of Carol is 
\begin{align*}
t_c \leq &~ O(\ell t \mathcal{T}_{\mathrm{Sample}} + \ell \mathcal{T}_{\mathrm{Update}} + \log(\frac{10\ell}{\delta'}) d \log d) \\
= &~ O((\ell^2 / \epsilon) \log (\ell/\epsilon) \mathcal{T}_{\mathrm{Sample}} + \ell \mathcal{T}_{\mathrm{Update}} + \log(\frac{w}{\delta}) d \log d) \\
\leq &~ O((w / \epsilon) (\log d)^2 \log (1/\epsilon) \mathcal{T}_{\mathrm{Sample}} + d \mathcal{T}_{\mathrm{Update}} + \log(\frac{w}{\delta}) d \log d), 
\end{align*}
where $\mathcal{T}_{\mathrm{Sample}}$ is the time to sample a set from distribution $\lambda|_{w/\ell<|X| \leq w'}$, $\mathcal{T}_{\mathrm{Update}}$ is the time to update $\lambda$ to its restriction on a subset of $[d]$. \qedhere
\end{itemize}
\end{proof}

\subsection{Phase 1 and Overall Protocol: Protocol for \texorpdfstring{$w$}{w}-sparse partial match}
In this section we present a protocol $\pi_{\mathrm{pm}, d, w, \epsilon,\delta}(\lambda, x, y, m, R_{\pub}, R_{\pri})$ in the specialized product-distributional communication model for the partial match function $f_{\mathrm{pm}}(x,y)$, where the inputs are the distribution $\lambda$, Alice's input $x \in \{0,1\}^d$ which is sampled from $\lambda$, Bob's input $y$ which is any string in $\{0,1,\star\}^d$ with at most $w$ stars, Merlin's advice $m$, and the randomness $R_{\pub}$ and $R_{\pri}$, where $R_{\pri}$ is hidden from Merlin. 

At a high-level, this protocol reduces the partial match problem to the subset query problem while maintaining the sparsity guarantee, and it uses the subset query protocols of Lemma~\ref{lem:base_case_protocol} and Lemma~\ref{lem:w_sparse_subset_query_protocol} in the previous sections as subroutines.

\paragraph{Protocol.} Let $t = \frac{100}{\epsilon} \log(\frac{10}{\epsilon})$, and let $h = \log(\frac{10}{\epsilon})$. %

\begin{enumerate}
\item If $w \leq 100 \log(\frac{w}{\epsilon})$, then all players run the base case protocol $\pi_{\text{base}, f_{\mathrm{pm}},d, w,w, \delta}(x, y, m, R_{\pri})$, and Alice and Bob output the returned value of this subroutine.
\item Carol uses randomness $R_{\pub}$ to sample $t$ sets $X_1, X_2, \cdots, X_t \sim \lambda$. Bob checks if there exists $X_i$ such that the number of unmatched coordinates of $X_i$ and $y$ is $\leq h$.
\begin{enumerate}
\item If there is no such set, then Carol uses randomness $R_{\pub}$ to sample $\log(\frac{10}{\delta})$ number of independent sets $S_1, S_2, \cdots, S_{\log(\frac{10}{\delta})} \subseteq [d]$ where each coordinate is sampled i.i.d.~with probability $1/2$. 
\begin{enumerate}
        \item If there is no set $S_j$ such that the number of $\star$'s in $y|_{S_j}$ is $\leq 2 w/3$, then the protocol outputs $1$.
        \item Otherwise, Bob finds a set $S_{j^*}$ such that the number of $\star$'s in $y|_{S_j}$ is $\leq 2 w/3$, and Bob sends the index $j^*$ to Alice. Alice sets $x' \leftarrow x|_{S_{j^*}}$, Carol sets $\lambda'$ to be the restriction of $\lambda$ to the domain $S_{j^*}$, and Bob sets $y' \leftarrow y|_{S_{j^*}}$. The players run the protocol $\pi_{\text{pm}, S_{j^*}, 2w/3, \epsilon/2, \delta/10}(\lambda', x', y', m, R_{\pub}, R_{\pri})$ recursively and output its returned value.
    \end{enumerate}
\item If there exist such sets, then Bob sets $i^*$ to be the minimum index such that the unmatched coordinates of $X_{i^*}$ and $y$ is $\leq h$, and sends $i^*$ to Alice using $\log t$ bits.
\begin{enumerate}
    \item Alice computes a string $x' = x \oplus X_{i^*} \in \{0,1\}^d$, i.e., for all $j \in [d]$, $x'_j = 1$ if $x_j \neq (X_{i^*})_j$ and $x'_j = 0$ if $x_j = (X_{i^*})_j$.
    \item Bob computes a string $y' \in \{0,1,\star\}^d$ such that for all $j \in [d]$, (1) $y'_j = \star$ if $y_j = \star$, (2) $y'_j = 1$ if $y_j \neq (X_{i^*})_j$ and $y_j \neq \star$, (3) $y'_j = 0$ if $y_j = (X_{i^*})_j$.
    \item If $|x'| > w + h$, then Alice outputs $0$ and informs Bob using $O(1)$ bits.
    \item If $|x'| \leq w + h$, then Alice and Bob run two subroutines of subset query: 
    \begin{enumerate}
        \item Check if the ones of $x'$ is a subset of the ones and $\star$'s of $y'$: run the protocol $\pi_{\mathrm{sq}, d, w+h, \epsilon/10, \delta/10}(\lambda, x', y'_{\star} \cup y'_{\text{one}}, m, R_{\pub}, R_{\pri})$, where $y'_{\star} := \{i \in [d] \mid y'_i = \star\}$ and $y'_{\text{one}} := \{i \in [d] \mid y'_i = 1\}$.
        \item Check if the ones of $y'$ is a subset of the ones of $x'$: run the base case protocol $\pi_{\text{base}, f_{\mathrm{sq}}, d, h, w+h, \delta/10}(y'_{\text{one}}, x', m, R_{\pri})$, where $y'_{\text{one}} := \{i \in [d] \mid y'_i = 1\}$.
        \item The protocol outputs $1$ if both subroutines return $1$, otherwise it outputs $0$.
    \end{enumerate}
\end{enumerate}
\end{enumerate}
\end{enumerate}

\begin{theorem}[$w$-sparse partial match protocol]\label{thm:w_sparse_partial_match_protocol}
For any $\delta \leq \epsilon \in (0,0.1)$, any dimension $d$, and any sparsity $w \leq d$, the protocol $\pi_{\mathrm{pm}, d, w, \epsilon, \delta}$ solves the $w$-sparse partial match problem $f_{\mathrm{pm}}$ with the following guarantees:
\begin{enumerate}
\item The protocol has $\delta$ soundness.
\item When Merlin's message is the unique special message, the protocol has false positive error $\epsilon+\delta$.
\item The communication cost of Merlin is bounded by
\[
c_m \leq O\left( \Big(\sqrt{w \log(\frac{w}{\epsilon})} + \log(\frac{w}{\epsilon}) \Big) \cdot \log(\frac{w}{\log(1/\epsilon)}) \right).
\]
\item The communication cost of Alice is bounded by
\[
c_a \leq O\left(\log(\frac{1}{\delta}) + \sqrt{\frac{w}{\log(1/\epsilon)}}\right),
\]
and the communication cost of Bob is bounded by
\[
c_b \leq O\left(\Big(\sqrt{w \log(\frac{w}{\epsilon})} + \log(\frac{w}{\epsilon})\Big) \cdot \log(\frac{w}{\log(1/\epsilon)}) + \log \log(\frac{w}{\delta}) \cdot \log\Big(w + \log(\frac{w}{\epsilon}) \Big) + \log(\frac{1}{\delta}) \right).
\]
The time complexity of both players are bounded by 
\[
t_a, t_b \leq O\left(\frac{d w^3}{\epsilon} \log(\frac{1}{\delta}) (\log d)^3 \log(\frac{w}{\epsilon})^3 \right).
\]
\item The communication cost of Carol is bounded by 
\[
c_c \leq O\left(\frac{d w^3}{\epsilon} \log(\frac{1}{\delta}) (\log d)^3 \log(\frac{w}{\epsilon})^3 \right).
\]

The time complexity of Carol is bounded by 
\[
t_c \leq O\left(\frac{d w^3}{\epsilon} \log(\frac{1}{\delta}) (\log d)^3 \log(\frac{w}{\epsilon})^3 \cdot \mathcal{T}_{\mathrm{Sample}} + d \cdot \mathcal{T}_{\mathrm{Update}} \right),
\]
where $\mathcal{T}_{\mathrm{Sample}}$ denotes the time to sample a set from distribution $\lambda$ conditioned on that the size of the sampled set is within a given range, and $\mathcal{T}_{\mathrm{Update}}$ denotes the time to update $\lambda$ to its restriction on a subset of $[d]$.
\end{enumerate}
\end{theorem}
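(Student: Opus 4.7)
I will prove Theorem~\ref{thm:w_sparse_partial_match_protocol} by induction on the sparsity parameter $w$, mirroring the structure of the proof of Lemma~\ref{lem:w_sparse_subset_query_protocol}. The base case $w\le 100\log(w/\epsilon)$ is handled directly by invoking Lemma~\ref{lem:base_case_protocol} on $\pi_{\text{base}, f_{\mathrm{pm}}, d, w, w, \delta}$, which already delivers the desired soundness, false-positive error, and $O(w)$-cost bounds. For the inductive step, the core task is to justify the two gadgets in Step~2: (i) the subsampling recursion in Step~2(a), which reduces the effective sparsity from $w$ to $2w/3$ and invokes the induction hypothesis on a restricted universe, and (ii) the XOR-reduction in Step~2(b), which turns the partial-match instance $(x,y)$ into a sparse set-containment instance that is then decided by Lemma~\ref{lem:w_sparse_subset_query_protocol} and Lemma~\ref{lem:base_case_protocol}.

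\textbf{Correctness of the XOR reduction.} The central identity I will verify is: when $X_{i^*}$ is chosen in Step~2(b) so that $X_{i^*}$ and $y$ disagree on at most $h$ non-$\star$ coordinates, and $x'=x\oplus X_{i^*}$, $y'$ as constructed in Step~2(b)(ii), then $f_{\mathrm{pm}}(x,y)=1$ iff on every non-$\star$ coordinate $j$ one has $x'_j=y'_j$. Breaking this coordinatewise equality into two containments gives exactly the two subset-query sub-calls in Step~2(b)(iv)(D): $x'_{\mathrm{one}}\subseteq y'_\star\cup y'_{\mathrm{one}}$ (no spurious $1$'s of $x'$ at zero-coordinates of $y'$) and $y'_{\mathrm{one}}\subseteq x'_{\mathrm{one}}$ (every forced $1$ in $y'$ is matched). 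Because $|y'_{\mathrm{one}}|\le h$ and, on the matched branch, $|x'|\le w+h$ (otherwise Alice aborts in Step~2(b)(iv)(C)), both subroutine calls are invoked on the sparsity regime their lemmas require, and this is exactly why sparsity is preserved by the reduction.

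\textbf{Soundness and error accounting.} Soundness: the unique special Merlin advice $m^*(\lambda,x,y,R_{\pub})$ is defined recursively by tracing, for each fixed $(\lambda,x,y,R_{\pub})$, the deterministic path of the protocol down to the unique subroutine (base case, subset-query, or recursive partial-match call) that actually consumes Merlin's message; under any $m\neq m^*$ I union-bound the $\delta/10$ soundness of each possible downstream subroutine together with the $\delta/10$ probability that Step~2(a)(i) accepts spuriously (all $\log(10/\delta)$ subsampled sets $S_j$ leave $>2w/3$ stars). For the $\epsilon+\delta$ false-positive bound under $m^*$, I follow the template of Lemma~\ref{lem:w_sparse_subset_query_protocol}: the only ways to output $1$ on a non-matching input are (a) spuriously accepting in Step~2(a)(i), (b) error inside the recursive $\pi_{\mathrm{pm},\dots,\epsilon/2,\delta/10}$ call, or (c) error inside the two subset-query sub-calls in Step~2(b)(iv)(D), whose error budgets $\epsilon/10+\delta/10$ and $\delta/10$ respectively sum to at most $\epsilon/2+\delta/2$; the remaining $\epsilon/2$ absorbs the two bad-sampling events exactly as in the subset-query proof (either $x$ is $h$-close to $y$ but few $X_i$ are, quantified via a Chernoff bound on $t=\frac{100}{\epsilon}\log(10/\epsilon)$ i.i.d.\ samples from $\lambda$, or $x$ is $>h$-far but the random $S_{j^*}$ misses every mismatched coordinate, bounded by $2^{-h}=\epsilon/10$).

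\textbf{Cost bookkeeping and main obstacle.} The cost bounds follow by adding, level by level, (i) the index of $i^*$ ($\log t$ bits), (ii) the $\log\log(w/\delta)$ overhead of Step~2(a)(ii), and (iii) the costs inherited from the two subset-query subroutines via Lemma~\ref{lem:w_sparse_subset_query_protocol} (run at sparsity $w+h$ and $h$ respectively). Because Step~2(a)(ii) recurses with sparsity $2w/3$ but the dominant Merlin/Bob terms involve $\sqrt{w\log(w/\epsilon)}\log(w/\log(1/\epsilon))$, the geometric series across recursion levels converges to a constant factor, so the final bounds stated in the theorem are obtained without losing an extra $\log w$ factor. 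The main obstacle I anticipate is the error analysis of Step~2(b) in the regime where $\Pr_{X\sim\lambda}[X\text{ is }h\text{-close to }y]$ straddles the threshold $\epsilon/t$: exactly as in Lemma~\ref{lem:w_sparse_subset_query_protocol}, one must split into the three subcases according to whether $x$ itself is $h$-close to $y$ and whether the sampling succeeds, and carefully charge each subcase to either the $\epsilon'$ sampling budget, a Chernoff tail on $\sum_i \mathbb{1}[X_i\text{ is }h\text{-close}]$, or the $2^{-h}$ random-restriction failure; bookkeeping these three disjoint bad events while respecting the one-sided (false-positive only) requirement is the most delicate part of the argument.
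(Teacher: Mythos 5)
Your proposal is correct and follows essentially the same approach as the paper's proof: the XOR reduction to two subset-query subroutines, the three-case error analysis for the subsampling recursion, soundness via a union bound over the three termination modes, and cost bookkeeping that relies on the geometric decay of the per-level contributions. The only cosmetic difference is that you package the argument as an induction on $w$, whereas the paper unrolls the recursion directly by bounding its depth $\ell \le 2\log(w/\log(1/\epsilon))$ and characterizing the parameters $w^{(i)}, \epsilon^{(i)}, \delta^{(i)}$ at each level; both framings are sound and yield the same bounds.
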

\begin{proof}
We first prove some basic properties of the protocol before proving its error and cost guarantees. Note that our protocol follows one of the following three cases: it either enters Step 1 (the base case) and terminates, or enters Step 2(b) (a reduction to subset queries) and terminates, or proceeds to Step 2(a) where it recurses with updated parameters $2w/3$, $\epsilon/2$, $\delta/10$. Let $\ell$ denote the total recursion depth of the protocol. After $i$ levels of recursions, denote the parameters as $w^{(i)}, \epsilon^{(i)}, \delta^{(i)}$, and they become $w^{(i)} = (\frac{2}{3})^i \cdot w$, $\epsilon^{(i)} = (\frac{1}{2})^i \cdot \epsilon$, and $\delta^{(i)} = (\frac{1}{10})^i \cdot \delta$. If $i \geq 2 \log(\frac{w}{\log(1/\epsilon)})$, we have
\begin{align*}
w^{(i)} = (\frac{2}{3})^i \cdot w \leq (\frac{1}{2})^{\log(\frac{w}{\log(1/\epsilon)})} \cdot w = \log(1/\epsilon) \leq \log(\frac{2^i}{\epsilon}) = \log(\frac{1}{\epsilon^{(i)}}).
\end{align*}
So we must have entered the base case of Step 1 before reaching $i \geq 2 \log(\frac{w}{\log(1/\epsilon)})$ levels of recursions, and this means the total recursion depth is bounded by
\begin{align}\label{eq:ell_bound}
\ell \leq 2 \log(\frac{w}{\log(1/\epsilon)}).
\end{align}
Thus we can decompose the execution of the protocol into the following three parts:
\begin{enumerate}
    \item At most $\ell \leq 2 \log(\frac{w}{\log(1/\epsilon)})$ executions of the sampling processes of Step 2 and 2(a).
    
    Each sampling process of Step 2 samples at most $t' \leq \frac{100 \cdot 2^{\ell}}{\epsilon} \log(\frac{10 \cdot 2^{\ell}}{\epsilon})$ number of $X_i$'s, and each sampling process of Step 2(a) samples at most $\log(\frac{10^{\ell+1}}{\delta})$ number of $S_i$'s.
    \item At most one execution of the two subset query subroutines of Step 2(b).
    
    The two subset query subroutines are $\pi_{\mathrm{sq}, d, w'+h', \epsilon', \delta'}$ and $\pi_{\text{base}, f_{\mathrm{sq}}, d, h', w'+h', \delta'}$, where the parameters are $w' \leq w$, $h' \leq \log(\frac{10\cdot 2^{\ell}}{\epsilon})$, $\epsilon' \geq 2^{-\ell} \epsilon$, and $\delta' \geq 10^{-\ell} \delta$.
    \item At most one execution of the base case subroutine of Step 1.
    
    Where the base case subroutine is $\pi_{\text{base}, f_{\mathrm{pm}}, d, w', w', \delta'}$ where the parameters are $w' \leq 100 (\log(\frac{w}{\epsilon}) + \ell)$, and $\delta' \geq 10^{-\ell} \cdot \delta$.
\end{enumerate}
Also note that the protocol executes exactly one of Step 2(b) and Step 3.

{\bf Correctness of reduction to subset query.}
Next we prove the reduction from partial match to subset query of Step 2(b) is correct. We make the following claim:
\begin{align}\label{eq:correctness_pm_to_sq}
x \text{ matches }y \iff \text{(ones of $x'$) $\subseteq$ (ones and $\star$'s of $y'$), and (ones of $y'$) $\subseteq$ (ones of $x'$)},
\end{align}
where $x'$ and $y'$ are as defined in Step 2(b)(i) and 2(b)(ii). Let $H \subseteq [d]$ denote the unmatched coordinates of $X_{i^*}$ and $y$, and we have $|H| \leq h$. Let $y_{\star} \subseteq [d]$ denote the set of $\star$'s of $y$. We prove the following properties of $x'$ and $y'$:
\begin{itemize}
\item $y'$ is $\star$ on the set $y_{\star}$, $1$ on the set $H$, and $0$ on $[d]\backslash (H \cup y_{\star})$. 

Proof: This follows from the definition of $y'$: $y'$ maintains the $\star$'s of $y$, and for the rest of the coordinates, $y'_j = 1$ if and only if $y_j \neq (X_{i^*})_j$.
\item If $x$ matches $y$, then $x'$ is $1$ on the set $H$, and it is $0$ on the set $[d]\backslash (H \cup y_{\star})$.

Proof: This is because if $x$ matches $y$, then (1) for any $j \in H$, $x_j = y_j \neq (X_{i^*})_j$, and so $x'_j = x_j \otimes (X_{i^*})_j = 1$, and (2) for any $j \in [d]\backslash (H \cup y_{\star})$, $x_j = y_j = (X_{i^*})_j$, and so $x'_j = x_j \otimes (X_{i^*})_j = 0$.
\item If $x$ doesn't match $y$, then either there exists $j \in H$ such that $x'_j = 0$, or there exists $j \in [d]\backslash (H \cup y_{\star})$ such that $x'_j = 1$.

Proof: If $x$ doesn't match $y$, then there must exist a coordinate $j \in [d] \backslash y_{\star}$ such that $x_j \neq y_j$. (1) If $j \in H$, then since we also have $y_j \neq (X_{i^*})_j$ for $j \in H$, we must have $x_j = (X_{i^*})_j$, and so $x'_j = x_j \otimes (X_{i^*})_j = 0$. (2) If $j \notin H$, i.e., $j \in [d]\backslash (H \cup y_{\star})$, then since we have $y_j = (X_{i^*})_j$ for $j \in [d]\backslash (H \cup y_{\star})$, we must have $x_j \neq (X_{i^*})_j$, and so $x'_j = x_j \otimes (X_{i^*})_j = 1$. 
\end{itemize}
Using these properties, we can prove the claim by the following arguments:
\begin{itemize}
\item If $x$ matches $y$, then (1) the ones of $x'$ is a subset of $y_{\star} \cup H$, which is the set of ones and $\star$'s of $y'$, and (2) the ones of $y'$ is $H$, which is a subset of the ones of $x'$.
\item If $x$ doesn't match $y$, then either (1) there exists $j \in H$ such that $x'_j = 0$, and hence the ones of $y'$ is not a subset of the ones of $x'$, or (2) there exists $j \in [d]\backslash (H \cup y_{\star})$ such that $x'_j = 1$, and hence the ones of $x'$ is not a subset of the ones and $\star$'s of $y'$.
\end{itemize}

{\bf Soundness.} Recall that we can decompose the execution of the protocol into three parts, and the protocol only uses Merlin's message in the subroutines of Part 2 or Part 3. Given $x$, $y$, $\lambda$ and the randomness $R_{\pub}$, the execution of the protocol is completely fixed, and the subroutines that are called and their respective inputs are also fixed. There is a unique special advice of Merlin for every subroutine that is called. We define $m^*(\lambda,x,y,R_{\pub})$ to be the concatenation of these messages, and we claim this is the unique special advice for this protocol.

Consider any $m \neq m^*(\lambda,x,y,R_{\pub})$, next we bound the probability that the protocol outputs $1$ with input $m$. Recall that we can decompose the execution of the protocol into three parts, next we bound the probability of each part:
\begin{enumerate}
\item Consider any $i \in [\ell]$. In the $i$-th level of recursion, the protocol might output $1$ in Step 2(a)(i) when all sets $S_1, S_2, \cdots, S_{\log(\frac{10}{\delta^{(i)}})} \subseteq [d]$ satisfy that the number of $\star$'s in $y|_{S_j}$ is $> 2 w/3$. This happens with probability at most $\frac{\delta^{(i)}}{10}$. Since $\delta^{(i)} = \frac{\delta}{10^i}$, over all $\ell$ iterations, the total probability that the protocol outputs $1$ in Step 2(a)(i) is bounded by $\frac{\delta}{10}$.
\item If the protocol enters Step 2(b), then since $m \neq m^*(\lambda,x,y,R_{\pub})$, at least one of the two subroutines $\pi_{\text{base},f_{\mathrm{sq}}}$ and $\pi_{\text{sq}}$ has a Merlin's message that is not the unique special message, and by Lemma~\ref{lem:base_case_protocol} and Lemma~\ref{lem:w_sparse_subset_query_protocol} this subroutine will return $0$ with probability at least $1-\delta/10$, and so our protocol also returns $0$ with probability at least $1-\delta/10$.
\item If the protocol enters the base case of Step 1, note that then the protocol never enters Step 2(b), so we use Merlin's message only for the subroutine $\pi_{\text{base},f_{\mathrm{pm}}}$. Since this message is not the unique special message, by Lemma~\ref{lem:base_case_protocol} the subroutine will return $0$ with probability at least $1-\delta/10$, and so our protocol also returns $0$ with probability at least $1-\delta/10$.
\end{enumerate}
Combining these three parts using the union bound, we have that the overall soundness is $\leq \delta$.

{\bf Error under unique special advice.} Next we assume that Merlin's advice is the unique special advice $m^*(\lambda,x,y,R_{\pub})$. Recall that we can decompose the execution of the protocol into three parts, next we analyze the error probability of each part:
\begin{enumerate}
\item {\bf Step 2(a).} In this step we recursively determine if $x|_{S_{j^*}}$ matches $y|_{S_{j^*}}$, and an error could occur if $x$ doesn't match $y$, but $x|_{S_{j^*}}$ matches $y|_{S_{j^*}}$. The analysis for this error is similar as that of Step 2(c)(i)(B) in Lemma~\ref{lem:w_sparse_subset_query_protocol}. There are three cases for the $i$-th recursion:
\begin{itemize}
\item Case 1: If the number of unmatched coordinates of $x^{(i)}$ and $y^{(i)}$ is $> h$, but after sampling $S_{j^*}$ we have $x^{(i)}|_{S_{j^*}}$ matches $y^{(i)}|_{S_{j^*}}$, or we enter Step 2(a)(i).

The probability that $x^{(i)}|_{S_{j^*}}$ matches $y^{(i)}|_{S_{j^*}}$ is at most $2^{-h} \leq \epsilon^{(i)}/10$. And by Chernoff bound, the probability that we enter Step 2(a)(i) is at most $e^{-\Theta(w^{(i)})} \leq e^{-\Theta(\log(w^{(i)}/\epsilon^{(i)}))} \leq \epsilon^{(i)}/10$. So the total error in the $i$-th recursion is at most $\epsilon^{(i)}/5$.
\item Case 2: If $\Pr_{X \sim \lambda^{(i)}}[|\text{unmatched coordinates of } X \text{ and } y^{(i)}| \leq h] \leq \epsilon^{(i)}/10$, yet the number of unmatched coordinates of $x^{(i)}$ and $y^{(i)}$ is also $\leq h$.

Then we directly have that this happens with probability at most $\epsilon^{(i)}/10$.
\item Case 3: If $\Pr_{X \sim \lambda^{(i)}}[|\text{unmatched coordinates of } X \text{ and } y^{(i)}| \leq h] \geq \epsilon^{(i)}/10$.

Then by Chernoff bound, $\Pr\big[|\text{unmatched coordinates of } X_j \text{ and } y^{(i)}| \leq h, \forall j \in [t]\big] \leq e^{-\Theta(t \epsilon^{(i)}/10)} \leq \epsilon^{(i)}/10$.
\end{itemize}
Combining these three cases using the union bound, we have that the error of Step 2(a)(ii) in the $i$-th recursion is at most $\epsilon^{(i)}/2$. Adding up the error over all $\ell$ iterations and since $\epsilon^{(i)} = 2^{-i} \cdot \epsilon$, we have that the total error is bounded by $\epsilon/2$.
\item {\bf Step 2(b).} If the protocol enters Step 2(b) and runs the two subroutines $\pi_{\text{base},f_{\mathrm{sq}}}$ and $\pi_{\text{sq}}$, by Lemma~\ref{lem:base_case_protocol} and Lemma~\ref{lem:w_sparse_subset_query_protocol}, each of these two subroutines has a false positive error of at most $\epsilon/10 + \delta/10$, so the total error is bounded by $\epsilon/5 + \delta/5$.
\item {\bf Step 1.} If the protocol enters Step 1 and run the subroutine $\pi_{\text{base},f_{\mathrm{fm}}}$, by Lemma~\ref{lem:base_case_protocol}, this subroutine has a false positive error of at most $\delta$.
\end{enumerate}
Finally note that we only enter one of Step 2(b) and Step 1. So the total error is bounded by $\epsilon/2 + \max\{\epsilon/5 + \delta/5, \delta\} \leq \epsilon + \delta$. Also note that this error is false positive.

{\bf Cost.} Finally we bound the communication cost and time complexity of each players.

{\bf Merlin.} Merlin's communication cost has the following two parts: 
\begin{itemize}
    \item The cost of $\pi_{\text{base},f_{\mathrm{pm}}, d, w', w', \delta'}$ subroutine of Step 1, where the parameters are $w' \leq 100 (\log(\frac{w}{\epsilon}) + \ell)$, and $\delta' \geq 10^{-\ell} \cdot \delta$. Using Lemma~\ref{lem:base_case_protocol}, in this subroutine Merlin sends at most $w' \leq O(\log(\frac{w}{\epsilon}))$ bits.
    \item The cost of the $\pi_{\text{base},f_{\mathrm{sq}}, d, h', w'+h', \delta'}$ and $\pi_{\mathrm{sq}, d, w'+h', \epsilon', \delta'}$ subroutines of Step 2(b), where the parameters are $w' \leq w$, $h' \leq \log(\frac{10\cdot 2^{\ell}}{\epsilon})$, $\epsilon' \geq 2^{-\ell} \epsilon$, and $\delta' \geq 10^{-\ell} \delta$. Using Lemma~\ref{lem:base_case_protocol} and Lemma~\ref{lem:w_sparse_subset_query_protocol}, in these two subroutines the communication cost of Merlin is most 
    \begin{align*}
    &~ \Big(\log h' + h' \log(\frac{e (w'+h')}{h'})\Big) + O\Big(\sqrt{(w'+h') \log((w'+h')/\epsilon')} \cdot (1 + \log(\frac{w'+h'}{\log((w'+h')/\epsilon')}))\Big) \\
    \leq &~ O\left(\log(\frac{w}{\epsilon}) \cdot \log(\frac{w}{\log(1/\epsilon)}) + \sqrt{\big(w + \log(\frac{w}{\epsilon})\big) \cdot \log(\frac{w}{\epsilon})} \cdot (1 + \log(\frac{w}{\log(1/\epsilon)}) \right) \\
    \leq &~ O\left( \Big(\sqrt{w \log(\frac{w}{\epsilon})} + \log(\frac{w}{\epsilon}) \Big) \cdot \log(\frac{w}{\log(1/\epsilon)}) \right)
    \end{align*}
    where the second step follows from Eq.~\eqref{eq:ell_bound} that $\ell \leq 2 \log(\frac{w}{\log(1/\epsilon)})$ and so $h' \leq O(\log(\frac{w}{\epsilon}))$.
\end{itemize}
Thus the total communication cost of Merlin is $c_m \leq O\left( \Big(\sqrt{w \log(\frac{w}{\epsilon})} + \log(\frac{w}{\epsilon}) \Big) \cdot \log(\frac{w}{\log(1/\epsilon)}) \right)$ bit.

{\bf Alice:} Alice's communication and time costs have the following three parts:
\begin{itemize}
\item For each of the $\ell \leq 2 \log(\frac{w}{\log(1/\epsilon)})$ levels of recursions, in the sampling processes of Step 2 and 2(a), Alice doesn't send any message, but she needs to read $t' \leq \frac{100 \cdot 2^{\ell}}{\epsilon} \log(\frac{10 \cdot 2^{\ell}}{\epsilon})$ number of $X_i$'s, and $\log(\frac{10^{\ell+1}}{\delta})$ number of $S_j$'s, and in total this takes time
\begin{align*}
\ell \cdot d \cdot (t' + \log(\frac{10^{\ell+1}}{\delta})) \leq O\left(\frac{d w^2}{\epsilon} \log(\frac{1}{\delta}) \log(\frac{w}{\epsilon})^2\right).
\end{align*}
where we used that $\ell \leq 2 \log(\frac{w}{\log(1/\epsilon)})$.
\item The cost of $\pi_{\text{base},f_{\mathrm{pm}}, d, w', w', \delta'}$ subroutine of Step 1, where the parameters are $w' \leq 100 (\log(\frac{w}{\epsilon}) + \ell)$, and $\delta' \geq 10^{-\ell} \cdot \delta$. Using Lemma~\ref{lem:base_case_protocol}, in this subroutine Alice has communication cost $\log(\frac{1}{\delta'}) = O(\log(\frac{1}{\delta}) + \log(\frac{w}{\log(1/\epsilon)}))$, and time cost $O(d \log(\frac{1}{\delta'})) = O(d \log(\frac{1}{\delta}) + d \log(\frac{w}{\log(1/\epsilon)}))$.
\item The cost of the $\pi_{\text{base},f_{\mathrm{sq}}, d, h', w'+h', \delta'}$ and $\pi_{\mathrm{sq}, d, w'+h', \epsilon', \delta'}$ subroutines of Step 2(b), where the parameters are $w' \leq w$, $h' \leq \log(\frac{10\cdot 2^{\ell}}{\epsilon})$, $\epsilon' \geq 2^{-\ell} \epsilon$, and $\delta' \geq 10^{-\ell} \delta$. Using Lemma~\ref{lem:base_case_protocol} and Lemma~\ref{lem:w_sparse_subset_query_protocol}, and since $\ell \leq 2 \log(\frac{w}{\log(1/\epsilon)})$ and so $h' \leq O(\log(\frac{w}{\epsilon}))$, in these two subroutines Alice has communication cost
\begin{align*}
\log(\frac{1}{\delta'}) + O\left(\sqrt{\frac{w'+h'}{\log((w'+h')/\epsilon')}} + \log(\frac{1}{\delta'})\right) 
\leq O\left(\log(\frac{1}{\delta}) + \sqrt{\frac{w}{\log(1/\epsilon)}}\right),
\end{align*}
and time cost
\begin{align*}
d \log(\frac{1}{\delta'}) + O\left(\frac{d (w'+h')}{\epsilon'} (\log d)^3 \log(\frac{1}{\epsilon'})^2 + d \log(\frac{1}{\delta'})\right) 
\leq &~ O\left(\frac{d w^3}{\epsilon} \log(\frac{1}{\delta}) (\log d)^3 \log(\frac{w}{\epsilon})^3 \right).
\end{align*}
\end{itemize}
Thus the total communication cost of Alice is $O\left(\log(\frac{1}{\delta}) + \sqrt{\frac{w}{\log(1/\epsilon)}}\right)$, and the total time cost of Alice is $O\left(\frac{d w^3}{\epsilon} \log(\frac{1}{\delta}) (\log d)^3 \log(\frac{w}{\epsilon})^3 \right)$.

{\bf Bob:} First note that the time complexity of Bob follows the same upper bound as that of Alice. Bob's communication cost has the following three parts:
\begin{itemize}
\item For $\ell \leq 2 \log(\frac{w}{\log(1/\epsilon)})$ levels of recursions, in the sampling processes of Step 2 and 2(a), Bob sends the index $i^*$ using $\log(t') \leq O(\ell + \log(\frac{1}{\epsilon}))$ bits, and the index $j^*$ using $O(\log (\ell) + \log \log(\frac{1}{\delta}))$ bits. So in total the communication cost of Bob in the sampling processes is
\begin{align*}
\ell \cdot O(\ell + \log(\frac{1}{\epsilon}) + \log \log(\frac{1}{\delta})) \leq O\left(\log(\frac{w}{\log(1/\epsilon)}) \cdot \Big(\log(\frac{w}{\log(1/\epsilon)}) + \log(\frac{1}{\epsilon}) + \log \log(\frac{1}{\delta})\Big)\right),
\end{align*}
where we used that $\ell \leq 2 \log(\frac{w}{\log(1/\epsilon)})$.
\item The cost of $\pi_{\text{base},f_{\mathrm{pm}}, d,  w', w', \delta'}$ subroutine of Step 1, where the parameters are $w' \leq 100 (\log(\frac{w}{\epsilon}) + \ell)$, and $\delta' \geq 10^{-\ell} \cdot \delta$. Using Lemma~\ref{lem:base_case_protocol}, in this subroutine Bob has communication cost $\log(\frac{1}{\delta'}) = O(\log(\frac{1}{\delta}) + \log(\frac{w}{\log(1/\epsilon)}))$.
\item The cost of the $\pi_{\text{base},f_{\mathrm{sq}}, d, h', w'+h', \delta'}$ and $\pi_{\mathrm{sq}, d, w'+h', \epsilon', \delta'}$ subroutines of Step 2(b), where the parameters are $w' \leq w$, $h' \leq \log(\frac{10\cdot 2^{\ell}}{\epsilon})$, $\epsilon' \geq 2^{-\ell} \epsilon$, and $\delta' \geq 10^{-\ell} \delta$. Using Lemma~\ref{lem:base_case_protocol} and Lemma~\ref{lem:w_sparse_subset_query_protocol}, and since $\ell \leq 2 \log(\frac{w}{\log(1/\epsilon)})$ and so $h' \leq O(\log(\frac{w}{\epsilon}))$, in these two subroutines Bob has communication cost
\begin{align*}
&~ O\left(\sqrt{(w'+h') \log(\frac{w'+h'}{\epsilon'})} \cdot \log(\frac{w'+h'}{\log((w'+h')/\epsilon')})+ \log \log(\frac{w'+h'}{\delta'}) \cdot \log (w'+h') + \log(\frac{1}{\delta'}) \right) \\
\leq &~ O\left(\Big(\sqrt{w \log(\frac{w}{\epsilon})} + \log(\frac{w}{\epsilon})\Big) \cdot \log(\frac{w}{\log(1/\epsilon)}) + \log \log(\frac{w}{\delta}) \cdot \log\Big(w + \log(\frac{w}{\epsilon}) \Big) + \log(\frac{1}{\delta}) \right).
\end{align*}
\end{itemize}
Thus the total communication cost of Bob is $ O\Big(\Big(\sqrt{w \log(\frac{w}{\epsilon})} + \log(\frac{w}{\epsilon})\Big) \cdot \log(\frac{w}{\log(1/\epsilon)}) + \log \log(\frac{w}{\delta}) \cdot \log\Big(w + \log(\frac{w}{\epsilon}) \Big) + \log(\frac{1}{\delta}) \Big)$.

{\bf Carol:} Let $\mathcal{T}_{\mathrm{Sample}}$ be the time to sample a set from distribution $\lambda$ conditioned on that the size of the sampled set is in some range, and let $\mathcal{T}_{\mathrm{Update}}$ be the time to update $\lambda$ to be the restriction on a subset of $[d]$. Carol's communication and time costs have the following three parts:
\begin{itemize}
\item For each of the $\ell \leq 2 \log(\frac{w}{\log(1/\epsilon)})$ levels of recursions, in the sampling processes of Step 2 and 2(a), Carol needs to generate $t' \leq \frac{100 \cdot 2^{\ell}}{\epsilon} \log(\frac{10 \cdot 2^{\ell}}{\epsilon})$ number of $X_i$'s from $\lambda$, and generate $\log(\frac{10^{\ell+1}}{\delta})$ number of $S_j$'s as random sets, since $\ell \leq 2 \log(\frac{w}{\log(1/\epsilon)})$, in total the communication cost of Carol is
\begin{align*}
\ell \cdot d \cdot (t' + \log(\frac{10^{\ell+1}}{\delta})) \leq O\left(\frac{d w^2}{\epsilon} \log(\frac{1}{\delta}) \log(\frac{w}{\epsilon})^2\right),
\end{align*}
and the time cost of Carol is
\begin{align*}
\ell \cdot d \cdot (t' \mathcal{T}_{\mathrm{Sample}} + \log(\frac{10^{\ell+1}}{\delta})) \leq O\left(\frac{d w^2}{\epsilon} \log(\frac{1}{\delta}) \log(\frac{w}{\epsilon})^2 \cdot \mathcal{T}_{\mathrm{Sample}}\right).
\end{align*}
\item The cost of $\pi_{\text{base},f_{\mathrm{pm}}, d, w, w', \delta'}$ subroutine of Step 1, where the parameters are $w' \leq 100 (\log(\frac{w}{\epsilon}) + \ell)$, and $\delta' \geq 10^{-\ell} \cdot \delta$. Using Lemma~\ref{lem:base_case_protocol}, in this subroutine the communication and time of Carol are both bounded by $O(d \log(\frac{1}{\delta'})) = O(d \log(\frac{1}{\delta}) + d \log(\frac{w}{\log(1/\epsilon)}))$.
\item The cost of the $\pi_{\text{base},f_{\mathrm{sq}}, d, h', w'+h', \delta'}$ and $\pi_{\mathrm{sq}, d, w'+h', \epsilon', \delta'}$ subroutines of Step 2(b), where the parameters are $w' \leq w$, $h' \leq \log(\frac{10\cdot 2^{\ell}}{\epsilon})$, $\epsilon' \geq 2^{-\ell} \epsilon$, and $\delta' \geq 10^{-\ell} \delta$. Using Lemma~\ref{lem:base_case_protocol} and Lemma~\ref{lem:w_sparse_subset_query_protocol}, and since $\ell \leq 2 \log(\frac{w}{\log(1/\epsilon)})$ and so $h' \leq O(\log(\frac{w}{\epsilon}))$, in these two subroutines Carol has communication cost
\begin{align*}
d \log(\frac{1}{\delta'}) + O\left(\frac{d (w'+h')}{\epsilon'} \log(\frac{w'+h'}{\delta'}) (\log d)^3 \log(\frac{1}{\epsilon'})^2\right) 
\leq &~ O\left(\frac{d w^3}{\epsilon} \log(\frac{1}{\delta}) (\log d)^3 \log(\frac{w}{\epsilon})^3 \right),
\end{align*}
and time cost
\begin{align*}
&~ d \log(\frac{1}{\delta'}) + O\left((\frac{w'+h'}{\epsilon'}) (\log d)^2 \log (\frac{1}{\epsilon'}) \mathcal{T}_{\mathrm{Sample}} + d \mathcal{T}_{\mathrm{Update}} + \log(\frac{w'+h'}{\delta'}) d \log d\right) \\
\leq &~ O\left(\frac{d w^3}{\epsilon} \log(\frac{1}{\delta}) (\log d)^3 \log(\frac{w}{\epsilon})^3 \cdot \mathcal{T}_{\mathrm{Sample}} + d \cdot \mathcal{T}_{\mathrm{Update}} \right).
\end{align*}
\end{itemize}
Thus the total communication cost of Carol is $O\left(\frac{d w^3}{\epsilon} \log(\frac{1}{\delta}) (\log d)^3 \log(\frac{w}{\epsilon})^3 \right)$, and the total time cost of Carol is $O\left(\frac{d w^3}{\epsilon} \log(\frac{1}{\delta}) (\log d)^3 \log(\frac{w}{\epsilon})^3 \cdot \mathcal{T}_{\mathrm{Sample}} + d \cdot \mathcal{T}_{\mathrm{Update}} \right)$.
\end{proof}

\section{Reduction from Communication Protocols to Data Structures}\label{sec:w_sparse_reduction_cc_to_ds}
We now show how to design a word-RAM data structure for a search problem as in Eq.~\eqref{eq_pattern_matching} using a communication protocol in the specialized product-distribution model (Definition~\ref{def:special_communication_model}), yielding Theorem \ref{thm_meta_informal}.

\begin{theorem}[Reduction from communication protocols to data structures]\label{thm:reduction_CC_DS}
Let $f: \X \times \Y \to \{0,1\}$ be the target function. If there exists a protocol $\pi$ for $f$ in the specialized communication model (Definition~\ref{def:special_communication_model}) with $\delta$ soundness, $\epsilon$ false positive error under the unique special message of Merlin, $(c_a,c_b,c_c,c_m)$-cost, and $(t_a, t_b, t_c)$-time, then there exists a data structure that supports the following operations:
\begin{itemize}
    \item {\bf Preprocess:} Given an input set $\mathcal{D} = \{x^{(1)}, \cdots, x^{(n)}\} \subset \X$, preprocess the set $\mathcal{D}$ using $s = 2^{O(c_a + c_b + c_m)} \cdot (n + c_c)$ space, and $t_p = 2^{O(c_a + c_b + c_m)} \cdot n \cdot (t_a + t_b + t_c)$ time.
    \item {\bf Query:} Given a query $y \in \Y$, output all $x^{(i)} \in \mathcal{D}$ such that $f(x^{(i)}, y) = 1$. The expected query time is $t_q = 2^{O(c_a+c_m)} \cdot (c_a + c_b + c_m + c_c + t_b) + O(\epsilon n) + 2^{O(c_m)}\cdot \delta n + O(n_y)$, where $n_y$ is the number of $x^{(i)}$'s that $f(x^{(i)}, y) = 1$.
\end{itemize}
\end{theorem}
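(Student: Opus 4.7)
The plan is to follow the outline in the technical overview: build a \emph{protocol tree} at preprocessing time whose root-to-leaf paths enumerate transcripts of the three ``message'' players (Alice, Bob, Merlin), and answer each query by simulating Bob's side while enumerating the other two parties' messages. First I would sample Carol's randomness $R = (R_{\pub}, R_{\pri})$ once at preprocessing and store it with the data structure; this contributes the additive $c_c$ term in the space bound and makes the expected query time analysis below be taken over the choice of $R$. I would set $\lambda$ to be the uniform distribution over the input set $\mathcal{D}$.

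Next I would build a tree in which each internal node corresponds to one round of the protocol and branches according to the possible values of the next message; only Alice's, Bob's, and Merlin's bits are branched upon, yielding $2^{O(c_a + c_b + c_m)}$ leaves, while Carol's bits are read off the stored $R$ at the appropriate moments. For each data point $x^{(i)}$ and each $(b,m) \in \{0,1\}^{c_b} \times \{0,1\}^{c_m}$, I would run Alice's algorithm on $(x^{(i)}, \lambda, R)$ against the fixed $b, m$ to obtain her transcript $a^{(i)}(b,m)$ and evaluate whether the protocol outputs $1$; if so, attach $x^{(i)}$ to the leaf indexed by $(a^{(i)}(b,m), b, m)$. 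Since each $x^{(i)}$ is stored in at most $2^{c_b+c_m}$ leaves, the total storage is $2^{O(c_a+c_b+c_m)}(n+c_c)$, and the preprocessing time is $2^{O(c_a+c_b+c_m)} \cdot n \cdot (t_a + t_b + t_c)$.

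To answer a query $y$, I would traverse the tree, deterministically computing Bob's messages from $y$ (using $t_b$ amortized time across the traversal) while enumerating all $2^{c_a}$ choices of Alice's messages and all $2^{c_m}$ choices of Merlin's messages. This visits $2^{c_a + c_m}$ leaves, each reached in $O(c_a + c_b + c_m + c_c + t_b)$ time. At every visited leaf I would scan the stored data points and explicitly test each against $y$ using $f$. The key counting step is bounding the ``false positives'' this scan discards. By the $\epsilon$ false-positive error guarantee, for $m$ equal to the unique special advice $m^*(\lambda, x, y, R_{\pub})$, the expected number of $x^{(i)}$ stored at the visited leaf but with $f(x^{(i)}, y) = 0$ is at most $\epsilon n$ (over random $R$ and $x \sim \lambda$). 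By $\delta$ soundness, for any of the $2^{c_m}-1$ incorrect Merlin messages, the probability (over $R_{\pri}$) that a given $x^{(i)}$ is stored at the corresponding leaf and the protocol accepts is at most $\delta$, contributing an expected $2^{c_m} \delta n$ extra scans. Together with $O(n_y)$ true positives this yields the claimed query time.

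The main obstacle, I expect, is handling Carol's randomness correctly: the natural tree enumerates Alice's, Bob's, and Merlin's bits but not Carol's, yet Alice's messages depend on $R$, so ``which leaf $x^{(i)}$ belongs to'' is an $R$-dependent notion. I would resolve this by fixing a single random sample of $R$ at preprocessing (paying an additive $c_c$ in space) and taking expectations only over this sample; the error and soundness bounds of the specialized model then translate directly into expected false-positive counts at query time. A secondary subtlety is to check that the $\delta$ soundness bound, stated per $x$ and per incorrect Merlin message, can be summed over the $n$ data points and $2^{c_m}$ branches by linearity of expectation; this works because soundness holds pointwise in $x$, not merely in expectation over $\lambda$.
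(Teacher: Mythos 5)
Your plan mirrors the paper's construction: take $\lambda$ uniform over $\mathcal{D}$, fix Carol's randomness once at preprocessing, build a protocol tree branching on Alice/Bob/Merlin messages, store the satisfying $x^{(i)}$ at each $1$-leaf, and at query time deterministically compute Bob's messages while enumerating Alice's and Merlin's, bounding the false positives by $\epsilon n$ under $m^*$ and $2^{O(c_m)}\delta n$ under the wrong advices. The space decomposition, preprocessing time, and query-time accounting all match, and your observation that soundness holds pointwise in $x$ (so the $\delta$ bound can be summed over the dataset by linearity of expectation) is exactly what makes the $2^{O(c_m)}\delta n$ term go through.

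There is one genuine gap in how you handle Carol. You propose storing the raw randomness $R=(R_{\pub},R_{\pri})$ once, ``paying an additive $c_c$ in space,'' and reading Carol's bits off $R$ at query time. But Carol's messages are not the bits of $R$; they are the output of Carol's (possibly expensive) computation on $R$, $\lambda$, and the transcript so far, which costs $t_c$ time per root-to-leaf path. In the Partial Match instantiation, $t_c$ is roughly $n\cdot\poly(\log n)$ (Carol samples from $\lambda$ supported on the $n$ data points), so recomputing Carol on each of the $2^{O(c_a+c_m)}$ paths at query time would blow the claimed query bound, which deliberately contains $c_c$ (the length of Carol's messages) but not $t_c$. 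The paper instead \emph{precomputes and stores Carol's messages on every root-to-leaf path}, which is why its space bound is $2^{O(c_a+c_b+c_m)}\cdot(n+c_c)$ with the $c_c$ multiplied by the number of paths, not additive. Your space accounting is therefore slightly off in the wrong direction (you charge too little), and the query-time bound does not hold without storing Carol's precomputed messages per path. The fix is straightforward: at preprocessing, simulate the protocol along each branch and write Carol's message into the corresponding tree node; the query then reads $c_c$ stored bits per path rather than recomputing them.
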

\begin{proof}
Let $\pi$ be the protocol described in the theorem statement, and let $R_{\pub}$ and $R_{\pri}$ be the sequence of random bits used by Carol in the protocol, where $R_{\pri}$ is hidden from Merlin. {\bf Let $\lambda$ be the uniform distribution over $\mathcal{D}$.} We consider executing the protocol $\pi$ with distribution $\lambda$ and randomness $R_{\pub}$ and $R_{\pri}$. For simplicity, in this proof we will use $\pi(x, y, m)$ as a shorthand of $\pi(\lambda, x, y, m, R_{\pub}, R_{\pri})$.

We define a transcript of the protocol $\pi$ as the messages between Alice, Bob, and Merlin. Note that when the distribution $\lambda$ and the randomness $R_{\pub}$ and $R_{\pri}$ are fixed, the messages of Carol are completely determined by the transcript.

\noindent {\bf Preprocess:} The data structure builds a tree to simulate the protocol $\pi$ when given the distribution $\lambda$ and randomness $R_{\pub}$ and $R_{\pri}$. In each step of the protocol, there are five possible behaviors:
\begin{enumerate}
    \item Alice sends Bob a message.
    \item Bob sends Alice a message.
    \item Merlin sends a message to Alice and Bob.
    \item Carol generates a message using the randomness $R_{\pub}, R_{\pri}$ and the distribution $\lambda$, and writes this message to the public channel. %
    \item Alice and Bob both output a value and the protocol ends.
\end{enumerate}
When building the tree, each node corresponds to one step of the protocol. We start with the root node, and build each node depending on the behavior of that step:
\begin{enumerate}
    \item If Alice sends Bob a message, we call this node an Alice node. In this node we store an array of all possible messages of Alice in this step. Each entry of the array stores a pointer to a child node that corresponds to the next step of the protocol. We then recursively proceed to all the child nodes.
    \item If Bob sends Alice a message, we call this node a Bob node. In this node we store an array of all possible messages of Bob in this step. Each entry of the array stores a pointer to a child node that corresponds to the next step of the protocol. We then recursively proceed to all the child nodes.
    \item If Merlin sends a message to Alice and Bob, we call this node a Merlin node. In this node we store an array of all possible messages of Merlin in this step. Each entry of the array stores a pointer to a child node that corresponds to the next step of the protocol. We then recursively proceed to all the child nodes.
    \item If Carol generates a message using the randomness $R_{\pub},R_{\pri}$ and the distribution $\lambda$, we call this node a Carol node. Carol's message is fixed given the transcript so far and $R_{\pub},R_{\pri},\lambda$. We store this unique message of Carol in this node. This node only has one child node.%
    \item If Alice and Bob both output something and end the protocol, we call this node a leaf node. We call the node a ``$0$-leaf node'' if the output is $0$, and ``$1$-leaf node'' if the output is $1$. A root-to-leaf path corresponds to a transcript $\tau$ of the protocol. 
    
    For each $1$-leaf node $v$, we store a list of all $x^{(i)} \in \mathcal{D}$ that satisfies when Alice has input $x^{(i)}$ and Bob and Merlin send messages according to $\tau$, Alice's messages are consistent with $\tau$.
\end{enumerate}

\noindent {\bf Query:} Given a query $y \in \{0, 1\}^d$, for every Merlin's message $m$ that is possible, we follow the tree to simulate the protocol of Bob when his input is $y$ and Merlin's message is $m$. We start from the root node.
\begin{enumerate}
    \item If the node is an Alice node, we recurse on all child nodes of this node.
    \item If the node is a Bob node, we compute the message of Bob based on $y$ and the current transcript of the protocol. We only recurse to the child node that corresponds to this message.
    \item If the node is a Merlin node, we use $m$ as Merlin's message. We only recurse to the child node that corresponds to this message.
    \item If the node is a Carol node, we read the message stored in this node.
    \item If the node is a $1$-leaf node, we enumerate all $x^{(i)}$'s stored in this node.
\end{enumerate}
Finally, we output all $x^{(i)}$'s in the visited $1$-leaves that satisfy $f(x^{(i)}, y) = 1$.

In the analysis, we use $L(y, m)$ to denote the set of $1$-leaves that are visited by the query with Merlin's message $m$.

\noindent {\bf Properties:} Before proving the correctness and time complexity, we first list some properties of data structure that directly follow from the definitions.
\begin{enumerate}
    \item There is a one-to-one mapping between the root-to-leaf paths of the tree and all semantically possible transcripts.
    \item There is a one-to-one mapping between the root-to-leaf paths for the leaves in $L(y, m)$ and all semantically possible transcripts where Bob has input $y$, Merlin's message is $m$, and the output is $1$.
    \item $\forall v \in L(y)$, let $\tau$ denote the transcript that corresponds to $v$. The leaf $v$ stores all $x^{(i)}$ that satisfy (1) the transcript of $\pi$ on input $(x^{(i)}, y, m)$ is $\tau$, (2) the output of $\pi$ on $(x^{(i)}, y, m)$ is $1$.
\end{enumerate}

\noindent {\bf Correctness:} Consider any $x^{(i)} \in \mathcal{D}$ such that $f(x^{(i)}, y)=1$. The protocol guarantees that there exists a unique special message of Merlin $m^* = m^*(\lambda, x^{(i)}, y, R_{\pub})$. Since the protocol only has false positive error, we have that the protocol must output $1$ when given the input $(x^{(i)}, y, m^*)$.

The query algorithm described above enumerates all possible Merlin's messages, so it sets Merlin's message to be $m^*$ in one of its iterations. By properties 2 and 3 above, the query algorithm visits all the possible $1$-leaves, so it examines all $x^{(i')}$ where $\pi(x^{(i')}, y, m^*) = 1$. As a result, it will encounter $x^{(i)}$. This argument applies to every $x^{(i)}$ for which $f(x^{(i)}, y)=1$, so the query will encounter all such $x^{(i)}$'s.

\noindent {\bf Space:} Each root-to-leaf path of the tree corresponds to a unique transcript. Since Alice sends $c_a$ bits, Bob sends $c_b$ bits, and Merlin sends $c_m$ bits in the protocol, there are in total $2^{O(c_a+c_b+c_m)}$ number of leaves. The space of the data structure consists of three parts:
\begin{itemize}
    \item Space to store the tree structure: Since the tree has $2^{O(c_a+c_b+c_m)}$ number of leaves, it takes $2^{O(c_a+c_b+c_m)}$ bits to store the tree structure.
    \item Space to store the messages of Carol: On each root-to-leaf path, we need to store all the messages of Carol using $c_c$ bits. So in total this takes $2^{O(c_a+c_b+c_m)} \cdot c_c$ bits.
    \item Space to store $x^{(i)}$'s in the leaf nodes: In each leaf node we store at most $n$ number of $x^{(i)}$'s. So in total this takes $2^{O(c_a+c_b+c_m)} \cdot n$ bits.
\end{itemize}
Thus in total the data structure uses $s = 2^{O(c_a+c_b+c_m)} \cdot (n + c_c)$ bits of space.

\noindent {\bf Preprocessing time.} When building the data structure we need to do the following for each leaf:
\begin{itemize}
    \item For each data point $x^{(i)} \in \mathcal{D}$, in order to check whether we need to store $x^{(i)}$ in this leaf, we simulate the protocol where Alice has input $x^{(i)}$ and Bob and Merlin send the messages according to the transcript of this leaf. This takes $O(n (t_a + t_b + t_c))$ time in total.
\end{itemize}
Since there are $2^{O(c_a+c_b+c_m)}$ number of leaves, the total preprocessing time is $t_p = 2^{O(c_a+c_b+c_m)} \cdot n \cdot (t_a + t_b + t_c)$.

\noindent {\bf Query time.} 
Each leaf node reached by the query corresponds to a unique transcript where Bob has input $y$ (see property 2). Since the total messages of Alice has length $c_a$ bits, and there are $2^{O(c_m)}$ possible messages of Merlin, in total the query reaches $\sum_m |L(y,m)| \leq 2^{O(c_a+c_m)}$ number of leaf nodes. The query time of the data structure consists of three parts:
\begin{itemize}
    \item Time to reach the leaf nodes: The tree has height at most $O(c_a+c_b+c_m)$, and the query simulates Bob using $t_b$ time for each leaf. So in total this takes $2^{O(c_a+c_m)} \cdot (c_a+c_b + c_m + t_b)$ time.
    \item Time to read the messages of Carol: On the path to each leaf, the query needs to read $c_c$ bits of Carol. So in total this takes $O(2^{O(c_a+c_m)} \cdot c_c)$ time.
    \item Time to enumerate the $x^{(i)}$'s stored in the leaf nodes: %
    \begin{itemize}
    \item For all $x^{(i)} \in \mathcal{D}$ where $f(x^{(i)},y) = 1$: First note that there are $n_y$ number of such $x^{(i)}$'s.
    \begin{itemize}
        \item For the unique special message of Merlin $m^* = m^*(\lambda, x^{(i)}, y, R_{\pub})$, the protocol outputs $1$ given the input $(x^{(i)}, y, m^*)$, so the query algorithm examines all such $x^{(i)}$'s at least once using $O(n_y)$ time.
        \item For every other possible Merlin's message $m$, since the protocol has soundness $\delta$, the protocol outputs $1$ given the input $(x^{(i)}, y, m)$ with probability at most $\delta$. There are in total $2^{O(c_m)}$ number of Merlin's messages, so in expectation the query algorithm examines all such $x^{(i)}$'s for $2^{O(c_m)} \cdot \delta n_y$ times under these messages $m$'s.
    \end{itemize}
    Adding up the above two cases, in expectation the query algorithm spends $O(n_y) + 2^{O(c_m)} \cdot \delta n_y$ time to visit $x^{(i)}$'s where $f(x^{(i)},y) = 1$.
    \item For all $x^{(i)} \in \mathcal{D}$ where $f(x^{(i)},y) = 0$: First note that there are $n-n_y$ such $x^{(i)}$'s.
    \begin{itemize}
        \item For the unique special message of Merlin $m^* = m^*(\lambda, x^{(i)}, y, R_{\pub})$, since the protocol has false positive error $\epsilon$ under this unique special message, i.e., we have
        \[
        \Pr_{x \sim \lambda, R_{\pub}, R_{\pri}}\big[\pi\big(\lambda, x, y, m^*(\lambda, x, y, R_{\pub}), R_{\pub}, R_{\pri} \big) \neq f(x,y)\big] \leq \epsilon,
        \]
        and since $\lambda$ is the uniform distribution over $\mathcal{D}$, in expectation the query algorithm examines all such $x^{(i)}$ for $O(\epsilon n)$ times under the unique special messages $m^*$'s.
        \item For every other possible Merlin's message $m$, using the same argument as the ``$f(x^{(i)},y) = 1$'' case, in expectation the query algorithm examines all such $x^{(i)}$'s for $2^{O(c_m)} \cdot \delta (n-n_y)$ times under these messages $m$'s.
    \end{itemize}
    Adding up the above two cases, in expectation the query algorithm spends $O(\epsilon n + 2^{O(c_m)} \cdot \delta (n-n_y))$ time to visit $x^{(i)}$'s where $f(x^{(i)},y) = 0$.
    \end{itemize}
\end{itemize}
Thus in total the data structure has expected query time $t_q = O(2^{O(c_a+c_m)} \cdot (c_a+c_b+c_m+c_c+t_b) + n_y + \epsilon n + 2^{O(c_m)} \cdot \delta n)$.
\end{proof}

\section{Data structure for partial match with \texorpdfstring{$w$}{w} wildcards}\label{sec:partial_match_ds}
In this section we combine the communication protocol of Theorem~\ref{thm:w_sparse_partial_match_protocol} and the reduction of Theorem~\ref{thm:reduction_CC_DS} to construct a data structure for the partial match problem with up to $w$ wildcards.

\begin{theorem}[Data structure for partial match with up to $w$ wildcards]\label{thm:partial_match_ds}
For any large enough integer $n$, any $1 < c \leq \frac{\log n}{10 (\log \log n)^4}$, any $d \leq n^{0.1}$, there exists a data structure for the partial match problem with up to $w = c \log n$ wildcards with the following guarantees:
\begin{itemize}
    \item {\bf Preprocess:} Given an input set $\mathcal{D} = \{x^{(1)}, \cdots, x^{(n)}\} \subset \X$, preprocess the set $\mathcal{D}$ using $s = O(n^{1.1})$ space, and $t_p = O(n^{2.1} d)$ time.
    \item {\bf Query:} Given a query $y \in \Y$, output all $x^{(i)} \in \mathcal{D}$ such that $f(x^{(i)}, y) = 1$. The expected query time is $t_q = O(n^{1-\frac{1}{\Theta(c (\log c)^2)}} + n_y)$, where $n_y$ is the number of $x^{(i)}$'s that $f(x^{(i)}, y) = 1$.
\end{itemize}
\end{theorem}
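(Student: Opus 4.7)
The plan is to invoke Theorem~\ref{thm:w_sparse_partial_match_protocol} to obtain a communication protocol and then feed it into the reduction of Theorem~\ref{thm:reduction_CC_DS}, choosing the error parameters so that the resulting space is $n^{1.1}$ and the query time is $n^{1-1/\Theta(c\log^2 c)}$. Concretely, I will set $\lambda$ to be the uniform distribution over the dataset $\mathcal{D}$, take $\epsilon = n^{-1/(K c \log^2 c)}$ for a sufficiently large absolute constant $K$, and $\delta = n^{-0.2}$. The two key sampling primitives needed by Theorem~\ref{thm:w_sparse_partial_match_protocol} are easy for this $\lambda$: $\mathcal{T}_{\mathrm{Sample}}$ reduces to picking a uniformly random surviving point in the allowed size range, and $\mathcal{T}_{\mathrm{Update}}$ reduces to projecting each surviving point onto a coordinate subset; both can be performed in $\mathrm{poly}(n,d)$ time with the right auxiliary indexing (e.g.\ bucketing points by Hamming weight restricted to the current coordinate domain).

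Next, I will plug $\epsilon,\delta,w=c\log n$ into the cost bounds of Theorem~\ref{thm:w_sparse_partial_match_protocol}. The dominant communication terms are $c_b,c_m=O\bigl(\sqrt{w\log(w/\epsilon)}\cdot\log(w/\log(1/\epsilon))\bigr)$. With our choice of $\epsilon$, $\log(1/\epsilon)=\Theta(\log n/(c\log^2 c))$, so $\sqrt{w\log(w/\epsilon)}=\Theta(\log n/(\sqrt{K}\log c))$ and $\log(w/\log(1/\epsilon))=\Theta(\log c)$, giving $c_b,c_m=O(\log n/\sqrt{K})$. The cost $c_a$ is dominated by $O(\log(1/\delta)+\sqrt{w/\log(1/\epsilon)})$ which is also $o(\log n)$ in the relevant regime. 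Choosing $K$ large enough, I can make $c_a+c_b+c_m\le 0.1\log n$, so Theorem~\ref{thm:reduction_CC_DS} yields space $s=2^{O(c_a+c_b+c_m)}(n+c_c)=O(n^{1.1})$ (the additive $c_c$ is $\mathrm{poly}(n,d)\cdot n^{o(1)}$, absorbed into the $n^{1.1}$ thanks to the hypothesis $d\le n^{0.1}$).

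For the query time bound from Theorem~\ref{thm:reduction_CC_DS},
\[
t_q = 2^{O(c_a+c_m)}\cdot(c_a+c_b+c_m+c_c+t_b) + O(\epsilon n) + 2^{O(c_m)}\cdot\delta n + O(n_y),
\]
I will verify each term is $n^{1-1/\Theta(c\log^2 c)}+O(n_y)$: the first term is $n^{O(1/\sqrt{K})}\cdot \mathrm{poly}(n,d)\le n^{1-1/\Theta(c\log^2 c)}$ for $K$ large; the second is exactly $\epsilon n=n^{1-1/(Kc\log^2 c)}$ by construction; the third is at most $n^{0.04}\cdot n^{0.8}\ll n^{1-1/\Theta(c\log^2 c)}$. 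The preprocessing time follows from $t_p=2^{O(c_a+c_b+c_m)}\cdot n\cdot(t_a+t_b+t_c)\le n^{0.1}\cdot n\cdot O(nd\cdot n^{o(1)})=O(n^{2.1}d)$, where the $O(n)$ factor in $t_c$ comes from $\mathcal{T}_{\mathrm{Sample}},\mathcal{T}_{\mathrm{Update}}$ on the uniform dataset distribution.

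The main obstacle is the delicate parameter balancing: we must simultaneously make $\epsilon n$ small, keep $c_m$ a sufficiently small \emph{fraction} of $\log n$ so that $2^{c_a+c_m}$ does not swamp the query time, and ensure the additive $c_c$ and $\mathrm{poly}(n,d)$ factors in space and preprocessing do not exceed $n^{1.1}$ and $n^{2.1}d$ respectively. The constraint $c\le \log n/(10(\log\log n)^4)$ in the theorem statement is precisely what is needed to absorb the lower-order $\mathrm{poly}\log$ factors (e.g.\ $\log\log(w/\delta)\cdot \log w$ in $c_b$, and the $c\log c\sqrt{K}$ contribution to $c_a$) into the slack budget $\log n/\Theta(c\log^2 c)$. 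Everything else is bookkeeping: carefully chasing the bounds from Theorem~\ref{thm:w_sparse_partial_match_protocol} through the compiler of Theorem~\ref{thm:reduction_CC_DS}.
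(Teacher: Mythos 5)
Your proposal follows essentially the same route as the paper: take $\lambda$ uniform over the dataset, choose $\epsilon,\delta$ polynomially small, invoke Theorem~\ref{thm:w_sparse_partial_match_protocol} and feed the result into the compiler of Theorem~\ref{thm:reduction_CC_DS}. But there is a genuine gap in how you tune $\delta$. You fix $\delta = n^{-0.2}$ and then claim that $c_a = O\bigl(\log(1/\delta) + \sqrt{w/\log(1/\epsilon)}\bigr)$ is ``$o(\log n)$ in the relevant regime,'' and that ``choosing $K$ large enough'' makes $c_a + c_b + c_m \le 0.1\log n$. Both claims fail: with $\delta = n^{-0.2}$ the term $\log(1/\delta) = 0.2\log n$ is a \emph{fixed constant fraction} of $\log n$ that does not shrink as $K$ grows, so $c_a \ge C_1\cdot 0.2\log n$ where $C_1$ is the (unknown, possibly large) constant hidden in the big-$O$ of Theorem~\ref{thm:w_sparse_partial_match_protocol}. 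If $C_1 C_2 \cdot 0.2 > 1$ — entirely plausible — then $2^{C_2 c_a}$ already exceeds $n$, and the first term $2^{O(c_a + c_m)}\cdot(\cdots)$ in the query time blows past $n^{1 - 1/\Theta(c\log^2 c)}$, and likewise the space $2^{O(c_a+c_b+c_m)}\cdot(n+c_c)$ blows past $n^{1.1}$.

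The paper sidesteps this exactly by letting $\delta$ depend on the hidden constants: it sets $\delta = n^{-1/(100 C_1 C_2)}$ (with $C_1, C_2$ the big-$O$ constants from Theorems~\ref{thm:w_sparse_partial_match_protocol} and~\ref{thm:reduction_CC_DS}), which makes $C_1\log(1/\delta) = \frac{\log n}{100 C_2}$ and hence $C_2 c_a \le \frac{\log n}{100} + \cdots$. So the fix is simple — replace $\delta = n^{-0.2}$ with $\delta = n^{-\eta}$ for a sufficiently small absolute constant $\eta$ chosen after inspecting $C_1, C_2$, and rerun your bookkeeping — but as written your parameter choice does not give you the space and query-time bounds you assert. (Secondary note: the constraint $c \le \log n /(10(\log\log n)^4)$ is there to control $\sqrt{w/\log(1/\epsilon)} = \Theta(c\log c\sqrt{K})$ and the $\log\log$ factors, as you say; it does nothing to help with the $\log(1/\delta)$ term.)
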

\begin{proof}
Let $C_1$ be the maximum constant in the big $O$ notations of Theorem~\ref{thm:w_sparse_partial_match_protocol}, and let $C_2$ be the maximum constant in the big $O$ notations of Theorem~\ref{thm:reduction_CC_DS}. Let $\lambda$ be the uniform distribution over $\mathcal{D}$. 
Let $\epsilon = 2^{-\frac{\log n}{c (\log c)^2} \cdot \frac{1}{10^9 C_1^4 C_2^4 \log(C_1 C_2)^2}}$, $\delta = n^{-\frac{1}{100 C_1 C_2}}$. Note that we have $\log(\frac{1}{\epsilon}) = \frac{\log n}{c (\log c)^2} \cdot \frac{1}{10^9 C_1^4 C_2^4 \log(C_1 C_2)^2}$, and $\log(\frac{1}{\delta}) = \frac{\log n}{100 C_1 C_2}$. For large enough $n$ and since $c \leq \frac{\log n}{10 (\log \log n)^4}$, we have that
\begin{align}
\log(\frac{w}{\epsilon}) = &~ \log (c \log n) + \frac{\log n}{c (\log c)^2} \cdot \frac{1}{10^9 C_1^4 C_2^4 \log(C_1 C_2)^2} \notag \\
\leq &~ \frac{\log n}{c (\log c)^2} \cdot \frac{1}{9 \cdot 10^8 \cdot C_1^4 C_2^4 \log(C_1 C_2)^2}, \label{eq:ub_params_1} \\
\frac{w}{\log(1/\epsilon)} = &~ c^2 (\log c)^2 \cdot 10^9 C_1^4 C_2^4 \log(C_1 C_2)^2 \notag \\
\leq &~ \frac{(\log n)^2}{(\log \log n)^2} \cdot 10^9 C_1^4 C_2^4 \log(C_1 C_2)^2 \leq \frac{1}{10^4 C_1^2 C_2^2} (\log n)^2, \label{eq:ub_params_2} \\
\log(\frac{w}{\log(1/\epsilon)}) = &~ \log\Big(c^2 (\log c)^2 \cdot 10^9 C_1^4 C_2^4 \log(C_1 C_2)^2 \Big) \leq 4 \log c + 6 \log(C_1 C_2) + 30. \label{eq:ub_params_3}
\end{align}

By Theorem~\ref{thm:w_sparse_partial_match_protocol}, there is a communication protocol for $w$-sparse partial match with the following guarantees:
\begin{enumerate}
\item The protocol has $\delta$ soundness.
\item When Merlin's messages are the unique special messages, the protocol has false positive error $\epsilon+\delta$.
\item The communication cost of Merlin is bounded by
\begin{align*}
c_m \leq &~ C_1 \cdot \Big(\sqrt{w \log(\frac{w}{\epsilon})} + \log(\frac{w}{\epsilon}) \Big) \cdot \log(\frac{w}{\log(1/\epsilon)}) \\
\leq &~ C_1 \cdot \Big(\sqrt{c \log n \cdot \frac{\log n}{c (\log c)^2} \cdot \frac{1}{9 \cdot 10^8 \cdot C_1^4 C_2^4 \log(C_1 C_2)^2}} \Big) \cdot \Big( 4 \log c + 6 \log(C_1 C_2) + 30 \Big) \\
\leq &~ \frac{\log n}{1000 C_1 C_2^2},
\end{align*}
where the second step follows from Eq.~\eqref{eq:ub_params_1} and \eqref{eq:ub_params_3}.
\item The communication cost of Alice is bounded by
\begin{align*}
c_a \leq &~ C_1 \cdot \left(\log(\frac{1}{\delta}) + \sqrt{\frac{w}{\log(1/\epsilon)}}\right) \\
\leq &~ C_1 \cdot \left( \frac{\log n}{100 C_1 C_2} + \frac{\log n}{100 C_1 C_2} \right) 
\leq \frac{\log n}{50 C_2},
\end{align*}
where the second step follows from Eq.~\eqref{eq:ub_params_2}.

The communication cost of Bob is bounded by
\begin{align*}
c_b \leq &~ C_1 \cdot \left(\Big(\sqrt{w \log(\frac{w}{\epsilon})} + \log(\frac{w}{\epsilon})\Big) \cdot \log(\frac{w}{\log(1/\epsilon)}) + \log \log(\frac{w}{\delta}) \cdot \log\Big(w + \log(\frac{w}{\epsilon}) \Big) + \log(\frac{1}{\delta}) \right) \\
\leq &~ \frac{\log n}{1000 C_1 C_2^2} + O((\log \log n)^2) + \frac{\log n}{100 C_2} 
\leq \frac{\log n}{50 C_2}
\end{align*}
where in the second step we bound the first term by the same proof as the upper bound of $c_m$, bound the second term by noting that $\log \log (\frac{w}{\delta}) = O(\log \log n)$ and $\log\Big(w + \log(\frac{w}{\epsilon}) = O(\log \log n)$, and bound the third term by $\log(\frac{1}{\delta}) = \frac{\log n}{100 C_1 C_2}$, and the third step follows from $O((\log \log n)^2)$ is much smaller than the $\log n$ terms when $n$ is large enough.

The time complexity of both players are bounded by 
\begin{align*}
t_a, t_b \leq &~ O\left(\frac{d w^3}{\epsilon} \log(\frac{1}{\delta}) (\log d)^3 \log(\frac{w}{\epsilon})^3 \right) 
\leq O\left(n^{\frac{1}{10^9 c (\log c)^2}} \cdot d \right) \cdot \poly(\log n),
\end{align*}
where the second step follows from $\epsilon = 2^{-\frac{\log n}{c (\log c)^2} \cdot \frac{1}{10^9 C_1^4 C_2^4 \log(C_1 C_2)^2}}$, and that the terms $w$, $\log(\frac{1}{\delta})$, $\log d$, $\log(\frac{w}{\epsilon})$ are all bounded by $O(\log n)$.
\item The communication cost of Carol is bounded by 
\begin{align*}
c_c \leq O\left(\frac{d w^3}{\epsilon} \log(\frac{1}{\delta}) (\log d)^3 \log(\frac{w}{\epsilon})^3 \right) \leq O\left(n^{\frac{1}{10^9 c (\log c)^2}} \cdot d \right) \cdot \poly(\log n),
\end{align*}
where we use the same argument for the upper bound of $t_a$ and $t_b$.

The time complexity of Carol is bounded by
\begin{align*}
t_c \leq &~ O\left(\frac{d w^3}{\epsilon} \log(\frac{1}{\delta}) (\log d)^3 \log(\frac{w}{\epsilon})^3 \cdot \mathcal{T}_{\mathrm{Sample}} + d \cdot \mathcal{T}_{\mathrm{Update}} \right) \\
\leq &~O\left(n^{1+\frac{1}{10^9 c (\log c)^2}} \cdot d \cdot \poly(\log n) + d n\right)
\end{align*}
where the second step follows from the same argument for the upper bound of $t_a$ and $t_b$, and that $\mathcal{T}_{\mathrm{Sample}} \leq O(n)$ and $\mathcal{T}_{\mathrm{Update}} \leq O(n)$ when $\lambda$ is the uniform distribution over $\mathcal{D}$.
\end{enumerate}

Then using Theorem~\ref{thm:reduction_CC_DS} and this protocol, we can build a data structure for the partial match problem with up to $w = c \log n$ wildcards with the following guarantees:
\begin{itemize}
\item {\bf Query time:}
\begin{align*}
t_q = &~ 2^{C_2 \cdot (c_a+c_m)} \cdot (c_a + c_b + c_m + c_c + t_b) + O(\epsilon n) + 2^{C_2 \cdot c_m}\cdot \delta n + O(n_y) \\
\leq &~ O(n^{0.1} d \cdot \poly(\log n)) + O(n^{1-\frac{1}{\Theta(c(\log c)^2)}}) + O(n^{1-\frac{1}{1000 C_1 C_2}}) \\
\leq &~ O(n^{1-\frac{1}{\Theta(c(\log c)^2)}}),
\end{align*}
where in the second step we bound the first term using $2^{C_2 \cdot (c_a+c_m)} \leq O(n^{0.05})$ and $c_a + c_b + c_m + c_c + t_b \leq n^{0.05} d \poly(\log n)$, we bound the second term using $\epsilon = 2^{-\frac{\log n}{\Theta(c(\log c)^2)}}$, we bound the third step using that $2^{C_2 \cdot c_m} \leq n^{\frac{1}{1000 C_1 C_2}}$ and $\delta = n^{-\frac{1}{100 C_1 C_2}}$.
Note that the dominating term of the query time comes from $O(\epsilon n)$.
\item {\bf Space:}
\begin{align*}
s = &~ 2^{C_2 \cdot (c_a + c_b + c_m)} \cdot (n + c_c) \\
\leq &~ n^{0.05} \cdot (n + n^{10^{-9}}\cdot  d \cdot \poly(\log n)) \leq O(n^{1.1}),
\end{align*}
where the second step follows from $c_a + c_b + c_m \leq \frac{\log n}{20 C_2}$.
\item {\bf Preprocessing time:}
\begin{align*}
t_p = &~ 2^{C_2 \cdot (c_a + c_b + c_m)} \cdot n \cdot (t_a + t_b + t_c) \\
\leq &~ n^{0.05} \cdot n \cdot n^{1+10^{-9}} \cdot d \cdot \poly(\log n) 
\leq O(n^{2.1} d). \qedhere
\end{align*}
\end{itemize}
\end{proof}

\section{List-of-Points Lower Bound}\label{sec:lop_lower_bound}
In this section we show that the list-of-points lower bound of \cite{ak20} implies an $n^{1-\Theta(1/\sqrt{c})}$ list-of-points lower bound for the query time of the partial match problem $\PM_{n,d=c\log n}$. Note that this problem is equivalent to the online OV problem with dimension $d=2c\log n$, and in fact we prove our lower bound for the online OV problem. The list-of-points model is a general model that captures data-independent algorithms, and it was introduced in \cite{alrw16}. See Definition~1.5 of \cite{alrw16} and Definition 2 of \cite{ak20}.
\begin{definition}[List-of-points data structures]
Let $U$ and $Q$ be spaces of data points and queries. Let $S: U \times Q \to \{0, 1\}$ be a similarity measure.
A list-of-points data structure is defined as follows:
\begin{itemize}
    \item We fix (possibly random) sets $A_i \subseteq U$, for $i \in [m]$. Also, for each possible query point $y \in Q$, we associate a (random) set of indices $I(y) \subseteq [m]$.
    \item For a given dataset $\mathcal{D} \subseteq U$, the data structure maintains $m$ lists of points $L_1, L_2, \cdots, L_m$, where $L_i = \mathcal{D} \cap A_i$.
    \item On query $y \in Q$, we scan through each list $L_i$ for all $i \in I(y)$ and check whether there exists some $x \in L_i$ with $S(x, y) = 1$. If it exists, return $x$.
\end{itemize}
The total space is $s = m + \sum_{i=1}^m |L_i|$, and the query time is $t = |I(y)| + \sum_{i \in I(y)} |L_i|$.
\end{definition}

Our main lower bound theorem states that any list-of-points data structure for the online OV problem with dimension $d=c\log n$ must use $n^{1-1/\sqrt{c}}$ query time if it uses polynomial space. 
\begin{theorem}[Main list-of-points lower bound]\label{thm:main_lower_bound}
Let $1600 \leq c \leq n$. Consider any list-of-points data structure for the online OV problem over $n$ points of dimension $d = c \log n$ that succeeds with probability $\geq 0.995$. For any $\gamma \in [1, \sqrt{c}]$, if the data structure uses expected space $\leq n^{1+\gamma}$, then its expected query time is at least
\[
n^{1 - \Theta(\gamma/\sqrt{c})}.
\]
\end{theorem}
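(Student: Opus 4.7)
The plan is to adapt the list-of-points lower bound of \cite{ak20}, originally proved for \emph{approximate} set-containment, to online $\OV$ in dimension $d=c\log n$. Since $x,y\in\{0,1\}^d$ are orthogonal iff $x\cap y=\emptyset$ iff $x\subseteq [d]\setminus y$, online $\OV$ is precisely set-containment against the complementary query, so a lower bound for the latter transfers verbatim.

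The hard distribution is a ``planted'' one: sample each of the $n$ data points $x_1,\ldots,x_n$ and the query $y$ coordinatewise i.i.d.~Bernoulli$(p)$, where $p=\Theta(1/\sqrt{c})$ is chosen so that $(1-p^2)^d = n^{-1-o(1)}$ (i.e., a uniformly random pair is orthogonal with probability $\Theta(1/n)$), and then resample one coordinate pattern $x_{i^\star}$ from the distribution conditioned on $x_{i^\star}\cap y=\emptyset$. A standard first/second-moment calculation then shows that with constant probability $x_{i^\star}$ is the \emph{unique} orthogonal partner of $y$, so any data structure with worst-case success probability $\geq 0.995$ succeeds on the planted distribution with constant probability, and its expected query time under this distribution lower-bounds its worst-case expected query time.

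Next I would invoke the \cite{ak20} framework: for any list-of-points scheme for set-containment with thresholds $\alpha$ (``yes''-pairs) and $\beta$ (``no''-pairs), it gives a trade-off between the space exponent $\gamma$ and the query time exponent $\rho$ controlled by KL-type divergences between the distributions of random and planted pairs restricted to each bucket. I would apply it with $\alpha=1$ and $\beta=1-1/d$ (a non-planted pair can be disjoint on at most $d-1$ of the $d$ coordinates). Evaluating the resulting optimization under the planted distribution with $p=\Theta(1/\sqrt c)$ and $d=c\log n$ yields a trade-off of the form $\gamma \;\geq\; \Theta(\sqrt c)\cdot \rho$, which rearranges to query time at least $n^{1-\Theta(\gamma/\sqrt c)}$, matching the statement.

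The main obstacle is this last trade-off computation: the \cite{ak20} bound is stated most naturally for growing $d$ with a fixed $\alpha/\beta$ gap, whereas here $d=c\log n$ is only logarithmic in $n$ and $\beta=1-1/d$ lies at the very edge of the ``exact'' regime, so the relevant KL/entropy quantities must be re-evaluated directly. The key calculation is to verify that the leading-order scale in the optimization is $p\cdot d=\Theta(\sqrt c\log n)$, which after normalizing by $\log n$ produces the promised $\Theta(\gamma/\sqrt c)$ exponent. A minor technical step is the worst-case-to-planted-distribution reduction, which is immediate via Markov's inequality given the $0.995$ success assumption.
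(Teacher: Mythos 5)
Your high-level plan matches the paper's: reduce online OV to subset query, use the planted-instance formulation of \cite{ak20}, choose Bernoulli parameters of order $\Theta(1/\sqrt{c})$, argue the planted point is the unique match so that the worst-case data structure inherits the random-instance lower bound, and then optimize the KL trade-off. But there are two concrete gaps.

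First, your calibration $(1-p^2)^d = n^{-1-o(1)}$ (``a uniformly random pair is orthogonal with probability $\Theta(1/n)$'') is too loose for the uniqueness step. With $\Theta(1/n)$ per-pair disjointness, the expected number of \emph{non}-planted orthogonal partners is $\Theta(1)$, and a first/second-moment argument only gives a \emph{constant} probability that the planted point is unique. That is not enough: to invoke the random-instance lower bound of \cite{ak20} you need the data structure to recover the planted point with probability $\geq 0.99$, which requires the probability of a spurious orthogonal partner to be $o(1)$ (at most $0.005$ in the paper's budgeting). The paper tightens the calibration to $\Pr[x\cap y=\emptyset]\leq n^{-2}$ for a random pair (via the constraint $w_u(1-w_q)>2/c$), making the uniqueness claim a one-line Markov argument. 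Your asymmetric use of the same $p$ for data and query is not itself a problem (the paper uses $w_u = 1/\sqrt{c}$ and $1-w_q = 10/\sqrt{c}$, both $\Theta(1/\sqrt c)$), but the constant in front of $1/\sqrt c$ must be chosen large enough for the uniqueness bound.

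Second, you claim that ``evaluating the resulting optimization'' directly yields the full linear trade-off $\gamma\geq\Theta(\sqrt c)\cdot\rho$ for all $\gamma\in[1,\sqrt c]$. The paper does \emph{not} prove this directly. It fixes one Lagrange weight $\alpha = 1-20/\sqrt{c}$ and parameters $w_u,w_q$, and the KL computation (Theorem~\ref{thm:simplify_KL}) then yields only the dichotomy ``$\rho_u\geq 1$ or $\rho_q\geq 1-\Theta(1)/\sqrt c$'' (Corollary~\ref{cor:1/sqrt_c_lower_bound}), i.e.\ the trade-off at $\gamma=1$. To cover general $\gamma$, the paper adds a separate amplification step: split the dataset into groups of size $n^{1/\gamma}$, build an independent data structure per group, and rescale exponents. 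Reproving the KL trade-off directly for each $\gamma$ (by varying $\alpha$ and possibly $w_u,w_q$) is plausible but would require repeating the delicate case analysis of Theorem~\ref{thm:simplify_KL}; the group-splitting reduction is both cleaner and what the paper actually does, and your proposal omits it entirely. Your use of GapSS thresholds $\alpha=1$, $\beta=1-1/d$ is also not how the paper proceeds --- it works directly with the planted distribution and never needs a $\beta$ threshold.
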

We use the results of \cite{ak20} to prove this theorem. In Section~\ref{sec:background_ak20}, we provide the necessary background of \cite{ak20}. In Section~\ref{sec:lower_bound_KL}, we combine the lower bound of \cite{ak20} and a ``random instance to worst case'' reduction tailored for our problem to prove a lower bound that has KL divergence terms. In Section~\ref{sec:simplify_KL} we simplify the KL divergence terms. Finally in Section~\ref{sec:proof_lower_bound} we prove Theorem~\ref{thm:main_lower_bound}.

\subsection{Background of \texorpdfstring{\cite{ak20}}{[AK20]}}\label{sec:background_ak20}
We first introduce some definitions from \cite{ak20} that we will use in our proofs.
\begin{definition}[KL-divergence]
The KL-divergence between any two discrete distributions over $\Omega$ is defined as
\[
\D{P}{Q} = \sum_{\omega \in \Omega} P(\omega) \log \frac{P(\omega)}{Q(\omega)}.
\]
In particular, for Bernoulli distributions with probabilities $p$ and $q$, we define
\[
\dd{p}{q} = p \log \frac{p}{q} + (1 - p) \log \frac{1-p}{1-q}.
\]
\end{definition}

For any probability $p \in [0,1]$, we use $\text{Bernoulli}(p)$ to denote the Bernoulli distribution that generates $1$ with probability $p$ and $0$ with probability $1-p$. With a slight abuse of notation, we also use $\text{Bernoulli}(P)$ to denote two-dimensional Bernoulli distributions, which is defined as follows:
\begin{definition}[Two dimensional Bernoulli distributions]
Let $P = \left[\begin{smallmatrix}
P_{0,0} & P_{0,1} \\
P_{1,0} & P_{1,1}
\end{smallmatrix}\right]
$ where for all $i,j \in \{0,1\}$, $P_{i,j} \in [0,1]$, and $\sum_{i=0}^1\sum_{j=0}^1 P_{i,j} = 1$. 
We say a random variable $X \in \{0,1\}^2$ follows the distribution $\text{Bernoulli}(P)$ if $\Pr[X = [\begin{smallmatrix}
1-i \\
1-j
\end{smallmatrix}]] = P_{i,j}$ for all $i,j \in \{0,1\}$.
\end{definition}
For any two probability distributions $T$ and $P$, we also use $T \ll P$ to denote that $T$ is absolutely continuous with respect to $P$.

\cite{ak20} considers the gap set similarity search problem (GapSS) with four parameters $(w_q, w_u, w_1, w_2)$ such that $0 < w_2 < w_1 \leq \min\{w_q, w_u\} < 1$ and $w_q d$ and $w_u d$ are integers.
\begin{definition}[The $(w_q,w_u,w_1,w_2)$-GapSS problem]
Given a dataset $\mathcal{D}$ of $n$ sets where each set $x \subseteq [d]$ has size $w_u d$, build a data structure so that for any query set $y \subseteq [d]$ of size $w_q d$: 
\begin{itemize}
    \item either returns $x'\in \mathcal{D}$ with $|x'\cap y| > w_2 d$,
    \item or determines that there is no $x\in \mathcal{D}$ with $|x\cap y| \ge w_1 d$.
\end{itemize}
\end{definition}

\paragraph{Relation of online OV to GapSS.}
When $w_1 = w_u \leq w_q$, the GapSS problem is equivalent to the (approximate) subset query problem: if there exists $x \in \mathcal{D}$ such that $x \subseteq y$, we need to output an approximate solution. 
Subset query is equivalent to online OV. We also view the sets $x, y \subseteq [d]$ as vectors in $\{0,1\}^d$. $x \subseteq y$ iff $x$ and $\overline{y}$ are orthogonal, where $\overline{y}_i = 1 - y_i$.

\subsection{A List-of-Points Lower Bound with KL Divergence Terms}\label{sec:lower_bound_KL}
In this section we use the results from \cite{ak20} to prove a list-of-points lower bound with KL divergence terms. We will simplify the KL divergence terms in the next section.

We first define a random instance for subset query, which follows from the discussion in the beginning of Section 3 of \cite{ak20}. Theorem~3 of \cite{ak20} sets $w_2 = w_u w_q$ but allows arbitrary $w_1$. Since we consider the subset query problem, we restrict $w_1 = w_u$. %
\begin{definition}[Random instance for subset query]\label{def:random_instance}
Let $0< w_u \leq w_q < 1$ be two parameters. We define a distribution of dataset-query pairs $(\mathcal{D}, y)$ where $\mathcal{D} \subseteq \{0,1\}^d$ and $|\mathcal{D}|=n$, and $y \in \{0,1\}^d$. A $(w_u, w_q)$-random dataset-query pair for subset query is drawn from the following distribution:
\begin{enumerate}
\item The query $y \in \{0,1\}^d$ is generated by independently sampling each of its $d$ bits from a $\text{Bernoulli}(w_q)$ distribution.
\item A dataset $\mathcal{D} \subseteq \{0,1\}^d$ is first constructed by sampling $n-1$ vectors, each with independently generated bits from $\text{Bernoulli}(w_u)$.
\item A special point $x' \in \{0,1\}^d$ is then created by sampling $x'_i \sim \text{Bernoulli}(w_u/w_q)$ if $y_i=1$, and setting $x'_i = 0$ otherwise. This point is also added to $\mathcal{D}$.
\end{enumerate}
The goal of the data structure is to preprocess $\mathcal{D}$ such that it recovers $x'$ when given the query $y$.
\end{definition}
Note that the special point $x'$ is always a subset of the query $y$, and we can equivalently sample each $(y_i, x'_i) \sim \text{Bernoulli}(P)$ where $P = \left[\begin{smallmatrix}
w_u & w_q - w_u \\
0 & 1-w_q
\end{smallmatrix}\right]$.

\paragraph{Lower bound of \cite{ak20}.} The proof of the lower bound (Theorem~3) of \cite{ak20} has two steps:
\begin{enumerate}
    \item {\bf Lower bound for random instances.} They first prove a lower bound for any list-of-points data structure that successfully solves an random instance similar to the above with probability $\geq 0.99$.
    \item {\bf Reduction of random instances to worst case.} Then they show that any data structure for the GapSS problem also directly solves the random instance, so the lower bound in Step 1 also applies to the GapSS problem. 
    
    To prove this, they need the dimension to be large: $w_q w_u d = \omega(\log n)$, so that in the random instance the sizes of the data points, the query, and their intersections all concentrate to within $1+o(1)$ factors of their expectations.
\end{enumerate}

In the proof of our lower bound (Theorem~\ref{thm:lower_bound_KL}), we will use the same lower bound for random instances (Step 1) as Theorem 3 of \cite{ak20}, while proving a different ``random instances to worst case'' reduction (Step 2) tailored for the exact subset query problem. 

We no longer need the $w_q w_u d = \omega(\log n)$ constraint since we allow data points and queries to have arbitrary number of ones. Instead, we need a different constraint $w_u (1 - w_q) > \frac{2}{c}$ so that in the random instance with high constant probability the special point $x'$ is the unique data point that satisfy $x' \subseteq y$.

\begin{theorem}[List-of-points lower bound with KL divergence terms]\label{thm:lower_bound_KL}
Assume $n \geq 200$. Consider any list-of-points data structure for the online OV problem over $n$ points of dimension $d = c \log n$, which has expected space $n^{1 + \rho_u}$ and expected query time $n^{\rho_q}$, and succeeds with probability $\geq 0.995$. Then for any two parameters $(w_q, w_u)$ that satisfy $0 < w_u \leq w_q < 1$ and $w_u (1 - w_q) > \frac{2}{c}$, and any $\alpha \in [0,1]$, we have that
\[
\alpha \rho_q + (1 - \alpha) \rho_u \geq \inf_{\substack{t_q, t_u \in [0,1] \\ t_u \neq w_u}} \left( \alpha \frac{\D{T}{P} - \dd{t_q}{w_q}}{\dd{t_u}{w_u}} + (1 - \alpha) \frac{\D{T}{P} - \dd{t_u}{w_u}}{\dd{t_u}{w_u}} \right),
\]
where $P = \left[\begin{smallmatrix}
w_u & w_q - w_u \\
0 & 1-w_q
\end{smallmatrix}\right]
$ and $T = \underset{T \ll P, \underset{X \sim T}{\E}[X] = [\substack{t_q \\ t_u}]}{\arg \inf} \D{T}{P}$. 
\end{theorem}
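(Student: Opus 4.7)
The plan is to adopt the two-stage strategy of~\cite{ak20}: first reduce exact online OV to the random instance of Definition~\ref{def:random_instance}, then invoke a list-of-points lower bound for the random instance, which is essentially information-theoretic.

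For the reduction, I would draw a random instance with parameters $(w_q, w_u)$ and observe that the planted point $x'$ always satisfies $x' \subseteq y$ by construction. For any other data point $x^{(i)}$, each coordinate $j$ independently violates containment (i.e., has $x^{(i)}_j = 1, y_j = 0$) with probability exactly $w_u(1-w_q)$, so
\[
\Pr[x^{(i)} \subseteq y] \;=\; \bigl(1 - w_u(1-w_q)\bigr)^{d} \;\leq\; e^{-w_u(1-w_q)\cdot c\log n} \;\leq\; n^{-2},
\]
using $w_u(1-w_q) > 2/c$. A union bound over the $n-1$ non-planted points shows that $x'$ is the \emph{unique} subset of $y$ in $\mathcal{D}$ with probability $\geq 1 - 1/n$. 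Combined with the $0.995$ success probability of the data structure, any exact online OV data structure must therefore recover $x'$ on a random instance with probability $\geq 0.99$, so the expected space $n^{1+\rho_u}$ and expected query time $n^{\rho_q}$ transfer unchanged to this random-instance regime.

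Having reduced to the random instance, the KL-divergence bound follows from the argument underlying Theorem~3 of~\cite{ak20} applied with the joint distribution $P$ on coordinate pairs $(x'_j, y_j)$. The success requirement translates into $\sum_i \Pr[i \in I(y) \land x' \in A_i] = \Omega(1)$, while the space/time budgets give $\sum_i n\Pr[x \in A_i] \leq n^{1+\rho_u}$ and $\sum_i \Pr[i \in I(y)]\cdot n\Pr[x \in A_i] \leq n^{\rho_q}$, where $x$ is a fresh $\mathrm{Bernoulli}(w_u)^d$ point. I would group buckets $A_i$ by their empirical type $(t_u, t_q)$, i.e., the typical coordinate-distribution of points $x \in A_i$ and of queries $y$ with $i \in I(y)$, and apply Sanov-type large deviations on the $d = c\log n$ i.i.d.\ coordinates to bound the three relevant probabilities by exponentials in $-d\,\dd{t_u}{w_u}$, $-d\,\dd{t_q}{w_q}$, and $-d\,\D{T}{P}$, respectively, where $T$ is the KL-closest joint to $P$ with the prescribed marginals. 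Plugging these into the $\alpha$-combination $\alpha\rho_q + (1-\alpha)\rho_u$ and optimizing over the adversary's worst type yields the stated infimum; the restriction $t_u \neq w_u$ just removes the degenerate case where $\dd{t_u}{w_u}=0$ appears in the denominator.

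The main obstacle is verifying that the AK20 lower-bound proof, originally written for \emph{approximate} GapSS where $w_qw_ud = \omega(\log n)$ ensures concentration of intersection sizes, carries through in our exact-search regime with $d = c\log n$ only. The concentration was used there solely to migrate random instances to worst-case GapSS; in our setting that role is replaced cleanly by uniqueness of $x'$ via $w_u(1-w_q) > 2/c$. The remaining Sanov/KL-minimization calculation depends only on the product structure of $P$ and the coordinate count $d$, which is unchanged, so the argument applies directly; the only care needed is to confirm that the AK20 constants still give a meaningful bound at the constant success probability $0.99$ obtained after our union bound.
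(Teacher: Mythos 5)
Your proposal is correct and follows essentially the same route as the paper: reduce the worst-case online OV problem to the $(w_u,w_q)$-random instance by showing the planted point $x'$ is the unique subset of $y$ with probability $\geq 1-1/n$ (your union bound and the paper's Markov-on-expected-count step are the same computation), conclude the data structure recovers $x'$ on random instances with probability $\geq 0.99$, and then invoke the AK20 Theorem~3 lower bound for random instances. The extra Sanov/type-decomposition detail you supply is what AK20's random-instance lower bound does under the hood, and your observation that $w_u(1-w_q)>2/c$ replaces their $w_qw_ud=\omega(\log n)$ concentration requirement is exactly the paper's key point.
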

\begin{proof}
Since online OV is equivalent to subset query, in this proof we consider subset query instead. Let $A$ be a data structure as described in the theorem statement. Consider a $(w_u, w_q)$-random dataset-query pair $(\mathcal{D}, y)$ defined in Definition~\ref{def:random_instance}. It suffices to prove that $A$ solves this random instance with success probability $\geq 0.99$, and the rest of the proof follows the same way as that of Theorem~3 of \cite{ak20}. 

The data structure $A$ only fails to output the special point $x'$ in two scenarios: (1) There exist another $x \in \mathcal{D}$ such that $x \subseteq y$, and $A$ outputs $x$ instead. (2) $x'$ is the unique data point that is a subset of $y$, but the data structure fails on this input $(\mathcal{D}, y)$. The second event happens with probability $< 0.005$. It remains to prove that the probability that there exists another data point $x \subseteq y$ is at most $0.005$. Then, using the union bound, we have that $A$ fails to solve the random instance with probability $< 0.01$.

Consider any data point $x \neq x'$. Since the bits of $x$ are sampled from $\text{Bernoulli}(w_u)$ and the bits of $y$ are sampled from $\text{Bernoulli}(w_q)$, we have
\begin{align*}
    \Pr[x \subseteq y] = \big(1 - w_u (1 - w_q) \big)^d \leq e^{- w_u (1 - w_q) d} \leq \frac{1}{n^2},
\end{align*}
where the second step follows from $1-x \leq e^{-x}$, the third step follows from $w_u (1 - w_q) > \frac{2}{c}$ and $d = c \log n$.

Thus we have that the expected number of $x$ that is a subset of $y$ is $n \cdot \Pr[x \subseteq y] \leq 1/n$. Using Markov's inequality, the probability that there exists an $x \neq x'$ that is a subset of $y$ is at most $1/n$. When $n \geq 200$, this probability is $\leq 0.005$.
\end{proof}

\subsection{Simplifying KL Divergence Terms}\label{sec:simplify_KL}
In this section, we prove a lower bound for the term $\alpha \frac{\D{T}{P} - \dd{t_q}{w_q}}{\dd{t_u}{w_u}} + (1 - \alpha) \frac{\D{T}{P} - \dd{t_u}{w_u}}{\dd{t_u}{w_u}}$ of Theorem~\ref{thm:lower_bound_KL}. In the proof we will use the following two technical lemmas, and their proofs can be found in Appendix~\ref{sec:lower_bound_appendix}.

\begin{lemma}[KL divergence lower bounds]\label{lem:KL_divergence_lower_bound}
For any $t, w\in (0, 1)$, the KL divergence $\dd{t}{w}$ satisfies the following lower bounds:
\begin{itemize}
\item If $t \geq 1.1 w$, then $\dd{t}{w} \geq \Omega(t)$.
\item If $t \leq 0.9 w$, then $\dd{t}{w} \geq \Omega(w)$.
\item If $0.9 w \leq t \leq 1.1 w$ and $w \leq 0.1$, then $\dd{t}{w} \geq \Omega(\frac{(t - w)^2}{w})$.
\end{itemize}
\end{lemma}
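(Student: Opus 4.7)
The plan is to represent $f(t) := \dd{t}{w}$ (with $w$ fixed) via Taylor expansion with integral remainder around $u = w$. A direct calculation gives $f(w) = 0$, $f'(w) = 0$, and $f''(u) = \tfrac{1}{u(1-u)}$, so
\[
\dd{t}{w} \;=\; \int_{\min(t,w)}^{\max(t,w)} \frac{|t - u|}{u(1-u)}\, du.
\]
All three bounds will then follow by selecting an appropriate pointwise lower bound on the integrand $\tfrac{1}{u(1-u)}$ and performing a short elementary calculation.

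For the first two bullets (where $t$ is far from $w$), I apply the blunt estimate $u(1-u) \leq u$, i.e., $\tfrac{1}{u(1-u)} \geq \tfrac{1}{u}$. This evaluates the integral to $t \log(t/w) - (t-w)$ when $t > w$ and to $(w - t) - t \log(w/t)$ when $t < w$. In the first bullet ($t \geq 1.1 w$) I will set $r = t/w$ and rewrite the resulting lower bound as $t \cdot h(r)$ with $h(r) := \log r - 1 + 1/r$; since $h'(r) = (r-1)/r^2 \geq 0$ on $[1, \infty)$ and $h(1) = 0$, we get $h(r) \geq h(1.1) > 0$ for all $r \geq 1.1$, yielding $\dd{t}{w} \geq \Omega(t)$. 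In the second bullet ($t \leq 0.9 w$) I will set $\alpha = t/w$ and rewrite the bound as $w \cdot g(\alpha)$ with $g(\alpha) := 1 - \alpha + \alpha \log \alpha$; since $g'(\alpha) = \log \alpha < 0$ on $(0, 1)$ and $g(1) = 0$, we get $g(\alpha) \geq g(0.9) > 0$ for all $\alpha \leq 0.9$, yielding $\dd{t}{w} \geq \Omega(w)$.

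For the third bullet ($0.9 w \leq t \leq 1.1 w$ and $w \leq 0.1$), the integration variable satisfies $u \in [0.9 w, 1.1 w] \subseteq (0, 0.11]$, so $1 - u \geq 0.89$ and $u(1-u) \leq u \leq 1.1 w$. This yields the uniform pointwise bound $\tfrac{1}{u(1-u)} \geq \tfrac{1}{1.1 w}$, and plugging it into the integral representation gives
\[
\dd{t}{w} \;\geq\; \frac{1}{1.1 w}\int_{\min(t,w)}^{\max(t,w)} |t - u|\, du \;=\; \frac{(t - w)^2}{2.2\, w} \;=\; \Omega\!\left(\frac{(t-w)^2}{w}\right).
\]

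The main thing to get right is the integral representation; once that is in hand each bullet reduces to a one-variable calculus fact. The only mildly technical step will be verifying that the scalar functions $h$ and $g$ are bounded away from zero on $[1.1, \infty)$ and $(0, 0.9]$ respectively, which follows from short monotonicity arguments using $h(1) = g(1) = 0$ together with the explicit derivatives above. No case requires techniques beyond Taylor's theorem and the trivial bound $u(1-u) \leq u$.
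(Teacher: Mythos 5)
Your proof is correct, and it takes a genuinely different route from the paper. The paper argues each bullet separately by direct analysis of the function itself: for the first two cases it considers $h(t) = \dd{t}{w} - 0.001\,t$ (resp.\ $-0.001\,w$), shows $h$ is monotone on the relevant range via its derivative, and checks nonnegativity at the boundary $t = 1.1w$ (resp.\ $t = 0.9w$) with a second monotonicity argument in $w$; for the third case it uses the quadratic Taylor inequality $\ln(x) \geq (x-1) - 0.6(x-1)^2$ for $|x-1|\leq 0.2$ applied term by term. Your approach instead establishes the single integral representation
\[
\dd{t}{w} \;=\; \int_{\min(t,w)}^{\max(t,w)} \frac{|t-u|}{u(1-u)}\,du
\]
(from $f(w)=f'(w)=0$, $f''(u)=\tfrac{1}{u(1-u)}$) and then lower-bounds the integrand by $1/u$ or by $1/(1.1w)$ depending on the case. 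This is more unified and arguably cleaner: all three bullets become elementary one-variable estimates from the same formula, and the monotonicity checks $h(r)\geq h(1.1)>0$, $g(\alpha)\geq g(0.9)>0$ are shorter than the paper's nested derivative computations. (One small remark: in the third bullet, the fact $1-u\geq 0.89$ you cite is not actually used, since $u(1-u)\leq u\leq 1.1w$ already gives what you need; the hypothesis $w\leq 0.1$ is not required by your argument, which over-delivers slightly relative to the lemma statement. This is fine.) Both proofs are correct; yours trades the paper's elementary but somewhat ad hoc case analysis for a single structural identity plus simple integrand bounds.
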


\begin{lemma}[Range of $t_q$]\label{lem:range_tq}
Consider any $c \geq 500$. Let $w_q = 1 - \frac{10}{\sqrt{c}}$, $w_u = \frac{1}{\sqrt{c}}$, and $\alpha = 1 - \frac{20}{\sqrt{c}}$. Let $t_u \leq t_q \in [0,1]$ where $t_u \neq w_u$ satisfies that $\ln(\frac{(t_q-t_u)}{(w_q-w_u)}) = (1-\alpha) \ln(\frac{(1-t_q)}{(1-w_q)}) + \alpha \ln\frac{t_q}{w_q}$. Then $t_q$ must fall into the following ranges:
\begin{itemize}
\item If $t_u \geq w_u$, then $t_q \in [w_q, w_q + t_u - w_u]$.
\item If $t_u < w_u$, then $t_q \in [w_q - (w_u - t_u), w_q)$.
\end{itemize}

\end{lemma}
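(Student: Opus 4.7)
Define
\[
\phi(t_q) \;:=\; \alpha \ln\frac{t_q}{w_q} + (1-\alpha)\ln\frac{1-t_q}{1-w_q} - \ln\frac{t_q-t_u}{w_q-w_u},
\]
so the hypothesized equation reads $\phi(t_q)=0$ on the natural domain $t_q \in (t_u, 1)$. The plan is (i) to show that $\phi$ is strictly decreasing on $(t_u,1)$, so that its root is unique, and (ii) to evaluate $\phi$ at $w_q$ and at the candidate outer endpoint $w_q \pm (t_u-w_u)$ to bracket that root. The target bounds will drop out of (ii) via happy cancellations.

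\textbf{Uniqueness via a linear derivative equation.} Setting $\phi'(t_q) = \alpha/t_q - (1-\alpha)/(1-t_q) - 1/(t_q-t_u) = 0$ and multiplying through by $t_q(1-t_q)(t_q-t_u) > 0$ collapses, after simplification, to the \emph{linear} equation $t_q(\alpha+t_u-1) = \alpha t_u$. Its unique algebraic solution $t_q^0 = \alpha t_u/(\alpha+t_u-1)$ never lies in $(t_u,1)$: if $\alpha+t_u-1<0$ then $t_q^0<0$, whereas if $\alpha+t_u-1>0$ one computes $t_q^0 - 1 = (1-\alpha)(1-t_u)/(\alpha+t_u-1) > 0$. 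Hence $\phi'$ has constant sign on $(t_u,1)$, and the blow-up $\phi'(t)\to -\infty$ as $t\to t_u^+$ pins that sign to negative. Combined with $\phi(t_u^+)=+\infty$ and $\phi(1^-)=-\infty$, strict monotonicity yields a unique root $t_q^*$.

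\textbf{Bracketing.} A direct substitution gives $\phi(w_q) = -\ln\big((w_q-t_u)/(w_q-w_u)\big)$, whose sign matches that of $t_u-w_u$; so monotonicity immediately yields $t_q^* > w_q$ when $t_u>w_u$ and $t_q^* < w_q$ when $t_u<w_u$. At the outer endpoint $t_q = w_q \pm (t_u-w_u)$ the third term of $\phi$ cancels exactly to $\ln 1 = 0$, reducing $\phi$ to $g(t) := \alpha\ln(t/w_q) + (1-\alpha)\ln((1-t)/(1-w_q))$. For the case $t_u>w_u$ with $\delta := t_u-w_u$, applying $\ln(1+x) \leq x$ to both summands of $g(w_q+\delta)$ gives $g(w_q+\delta) \leq \delta(\alpha-w_q)/(w_q(1-w_q)) < 0$, using $\alpha-w_q = -10/\sqrt{c}<0$; so $t_q^* < w_q+(t_u-w_u)$. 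For the case $t_u<w_u$ with $\eta := w_u-t_u$, I will use the identity $g(t) = \dd{\alpha}{w_q} - \dd{\alpha}{t}$ (an elementary log expansion). Since $\dd{\alpha}{\cdot}$ is strictly increasing on $(\alpha,1)$ and the bound $\eta \leq w_u = 1/\sqrt{c} < 10/\sqrt{c} = w_q-\alpha$ forces $w_q-\eta > \alpha$, we have $\dd{\alpha}{w_q-\eta} < \dd{\alpha}{w_q}$, i.e., $g(w_q-\eta)>0$; hence $t_q^* > w_q-(w_u-t_u)$.

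\textbf{Main obstacle.} The only non-routine step is ruling out multiple roots of $\phi$, which a priori is a non-monotone sum of three log terms. What saves the argument is algebraic coincidence: $\phi'=0$ becomes linear rather than quadratic after clearing denominators, and its single solution lands outside $(t_u,1)$ for purely structural reasons independent of $c$. The rest is plugging into the candidate endpoints and invoking one-line log inequalities; the specific constants $\alpha = 1-20/\sqrt{c}$, $w_q=1-10/\sqrt{c}$, $w_u=1/\sqrt{c}$ enter only through the final sign checks $\alpha < w_q$ and $w_u < w_q - \alpha$, both easy for $c \geq 500$.
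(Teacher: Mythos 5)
Your proof is correct and follows the paper's overall strategy — establish monotonicity of the constraint function in $t_q$, then bracket the unique root by evaluating at $w_q$ and at $w_q \pm (t_u-w_u)$ — but the route through the two key steps differs in ways worth noting. For monotonicity, the paper gives a one-line argument: since $\phi'(t_q) = \alpha/t_q - (1-\alpha)/(1-t_q) - 1/(t_q-t_u)$ (or in the paper's sign convention $h' = -\phi'$), one has $1/(t_q-t_u) > 1/t_q > \alpha/t_q$ directly, so the sign is fixed. Your detour through clearing denominators, noting that the quadratic terms cancel to leave the linear equation $t_q(\alpha+t_u-1)=\alpha t_u$, and then locating the resulting critical point outside $(t_u,1)$ is correct and exposes a cute algebraic fact, but it is heavier machinery than necessary. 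For the Case-1 endpoint ($t_u>w_u$) both proofs use $\ln(1+x)\leq x$; your simplification of the coefficient to $(\alpha-w_q)/(w_q(1-w_q))$ is slightly tidier than the paper's direct substitution of $w_q = 1-10/\sqrt c$, $\alpha = 1-20/\sqrt c$.

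The genuine improvement is Case 2 ($t_u<w_u$). The paper uses the second-order Taylor bound $\ln(x) \geq (x-1) - 0.6(x-1)^2$ for $|x-1|\leq 0.2$, which forces it to verify the smallness conditions $|\eta/(1-w_q)|\leq 0.2$ and $|\eta/w_q|\leq 0.2$ and then run a somewhat delicate numerical estimate ($-2 + 0.12 + 1 + 0.6 < 0$). Your identity $g(t) = \dd{\alpha}{w_q} - \dd{\alpha}{t}$, combined with strict monotonicity of $q\mapsto \dd{\alpha}{q}$ on $(\alpha,1)$ and the single check $\eta \leq w_u = 1/\sqrt c < 10/\sqrt c = w_q - \alpha$, reaches the same conclusion $g(w_q-\eta)>0$ with no Taylor expansion and no constant-chasing. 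This is a cleaner and more robust argument; it also makes transparent exactly why the specific constants were chosen (they just need $\alpha < w_q - w_u$). Both approaches are valid; yours is arguably the one that should have gone into the paper for this case.
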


Using these two lemmas, we are ready to prove the lower bound for the KL divergence term of Theorem~\ref{thm:lower_bound_KL}.
\begin{theorem}[Simplification of KL Divergence Terms]\label{thm:simplify_KL}
Consider any $c \geq 500$. For $w_q = 1 - \frac{10}{\sqrt{c}}$, $w_u = \frac{1}{\sqrt{c}}$, $\alpha = 1 - \frac{20}{\sqrt{c}}$, we have
\[
\inf_{\substack{t_q, t_u \in [0,1] \\ t_u \neq w_u}} \left( \alpha \frac{\D{T}{P} - \dd{t_q}{w_q}}{\dd{t_u}{w_u}} + (1 - \alpha) \frac{\D{T}{P} - \dd{t_u}{w_u}}{\dd{t_u}{w_u}} \right) \geq \alpha \left(1 - \frac{\Theta(1)}{\sqrt{c}}\right),
\]
where $P = \left[\begin{smallmatrix}
w_u & w_q - w_u \\
0 & 1-w_q
\end{smallmatrix}\right]
$ and $T = \underset{T \ll P, \underset{X \sim T}{\E}[X] = [\substack{t_q \\ t_u}]}{\arg \inf} \D{T}{P}$. 
\end{theorem}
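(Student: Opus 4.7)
The plan is to first eliminate $T$ explicitly: since $P_{1,0}=0$, absolute continuity forces $T_{1,0}=0$, and the marginal constraints $\E_T[X_1]=t_q$, $\E_T[X_2]=t_u$ (together with $T$ summing to $1$) then determine the optimal $T$ uniquely as $T_{0,0}=t_u$, $T_{0,1}=t_q-t_u$, $T_{1,0}=0$, $T_{1,1}=1-t_q$ (in particular the problem is only feasible when $t_u\le t_q$). Hence
\[
D(T\|P)=t_u\log\tfrac{t_u}{w_u}+(t_q-t_u)\log\tfrac{t_q-t_u}{w_q-w_u}+(1-t_q)\log\tfrac{1-t_q}{1-w_q}.
\]
For fixed $t_u$, differentiating the objective $F(t_q,t_u)$ in $t_q$ yields exactly the stationarity relation in Lemma~\ref{lem:range_tq}, so at the optimum $t_q$ is pinned to $[w_q-|t_u-w_u|,\,w_q+|t_u-w_u|]$. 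This reduces the infimum to a one-variable problem in $t_u$ in which $t_q$ is close to $w_q$.

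Next I would split into three regimes according to Lemma~\ref{lem:KL_divergence_lower_bound}. In \textbf{Regime A} ($t_u\ge 1.1 w_u$ or $t_u\le 0.9 w_u$), the lemma gives $d(t_u\|w_u)=\Omega(\max\{t_u,w_u\})=\Omega(1/\sqrt c)$. Substituting $t_q=w_q+O(|t_u-w_u|)$, the third summand of $D(T\|P)$ is $O(|t_q-w_q|/(1-w_q))\cdot(1-w_q)=O(|t_u-w_u|)=O(d(t_u\|w_u))$, and since $w_q-w_u$ is bounded away from $0$ and $1$ the middle summand is essentially $(t_q-t_u)\log\frac{w_q-t_u+O(|t_u-w_u|)}{w_q-w_u}$, which I would compare directly against $\alpha d(t_q\|w_q)+(1-\alpha)(1-t_u)\log\frac{1-t_u}{1-w_u}$. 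The leading piece is the $t_u\log(t_u/w_u)$ term in $D(T\|P)$, matching the $t_u\log(t_u/w_u)$ piece of $d(t_u\|w_u)$ up to an $O(1/\sqrt c)$ relative loss from the $(1-t_u)\log\frac{1-t_u}{1-w_u}$ piece, so the ratio is at least $\alpha(1-O(1/\sqrt c))$.

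In \textbf{Regime B} ($0.9 w_u\le t_u\le 1.1 w_u$, $t_u\ne w_u$), set $\Delta_u=t_u-w_u$ and $\Delta_q=t_q-w_q$ with $|\Delta_q|\le|\Delta_u|$. I would Taylor-expand
\[
G(t_q,t_u):=D(T\|P)-\alpha\, d(t_q\|w_q)-(1-\alpha)d(t_u\|w_u)
\]
to second order around $(w_q,w_u)$. The value and gradient of $G$ vanish at that point (KL is minimized at the reference distribution), so $G$ is a quadratic form in $(\Delta_q,\Delta_u)$ with coefficients $\tfrac1{w_u}$, $\tfrac1{w_q-w_u}$, $\tfrac1{1-w_q}$, minus $\alpha$ and $1-\alpha$ copies of the corresponding one-dimensional quadratics. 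Plugging in $w_u=1/\sqrt c$, $w_q-w_u=1-11/\sqrt c$, $1-w_q=10/\sqrt c$, together with the constraint $|\Delta_q|\le|\Delta_u|$ from Lemma~\ref{lem:range_tq}, a direct calculation bounds $G(t_q,t_u)$ below by $\alpha(1-O(1/\sqrt c))\cdot\Delta_u^2/w_u$, while the quadratic lower bound from Lemma~\ref{lem:KL_divergence_lower_bound} gives $d(t_u\|w_u)=\Theta(\Delta_u^2/w_u)$, yielding the same ratio after dividing.

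The main obstacle will be Regime~B, which is genuinely a $0/0$ limit: both numerator and denominator scale like $\Delta_u^2/w_u$, so the bound requires a careful second-order expansion that uses the first-order coupling $\Delta_q=\Theta(\Delta_u)$ induced by Lemma~\ref{lem:range_tq} to verify that the cross term $\tfrac{(\Delta_q-\Delta_u)^2}{w_q-w_u}$ does not erode the leading constant below $\alpha(1-O(1/\sqrt c))$. Regimes A mostly reduce to routine estimates once $t_q$ is controlled through Lemma~\ref{lem:range_tq}, though one must still be attentive to the very small quantities $1-w_q$ and $w_u$ in the third and first summands of $D(T\|P)$.
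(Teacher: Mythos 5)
Your skeleton is the same as the paper's: eliminate $T$ explicitly (forcing $T_{1,0}=0$ by absolute continuity), use the stationarity equation behind Lemma~\ref{lem:range_tq} to pin $t_q$ near $w_q$, and split by the three lower-bound regimes of Lemma~\ref{lem:KL_divergence_lower_bound}. Your Regime~B plan (quadratic expansion around $(w_q,w_u)$ using vanishing value and gradient, with the coupling $\Delta_q$ between $0$ and $\Delta_u$) matches the paper's Case~3 in spirit, and your observation that the cross term in $1/(w_q-w_u)$ is the delicate piece is exactly right.

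The real gap is in Regime~A, which you treat as ``routine.'' The easy sub-case is $t_u=\Omega(1)$, where $\dd{t_u}{w_u}=\Omega(1)$, $|t_q-w_q|\le 1-w_q=O(1/\sqrt c)$, and so $\dd{t_q}{w_q}=O(1/\sqrt c)$; then the chain-rule identity $\D{T}{P}-\dd{t_u}{w_u}=(1-t_u)\,\dd{\tfrac{1-t_q}{1-t_u}}{\tfrac{1-w_q}{1-w_u}}\ge 0$ immediately gives the bound. But your Regime~A also contains the band $1.1w_u\le t_u<\Theta(1)$ (and the symmetric one below $w_u$), where $\dd{t_u}{w_u}=\Theta(|t_u-w_u|)$ can be as small as $\Theta(1/\sqrt c)$. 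There the ratio $\dd{t_q}{w_q}/\dd{t_u}{w_u}$ is generally $\Theta(1)$, not $O(1/\sqrt c)$, so the comparison you sketch does not close. Moreover, your cancellation mechanism as stated---that the $(1-t_u)\log\tfrac{1-t_u}{1-w_u}$ piece of $\dd{t_u}{w_u}$ is an $O(1/\sqrt c)$ relative perturbation of the $t_u\log\tfrac{t_u}{w_u}$ piece---is false: both are $\Theta(|t_u-w_u|)$ to leading order. You also need the bound $|t_q-w_q|\le O(1/\sqrt c)$ coming from $t_q\le 1$, not merely $|t_q-w_q|\le|t_u-w_u|$; without it the $\alpha\,\dd{t_q}{w_q}$ term is not controlled. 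The paper (its Case~2) resolves this by substituting the stationarity equation $\ln\tfrac{t_q-t_u}{w_q-w_u}=(1-\alpha)\ln\tfrac{1-t_q}{1-w_q}+\alpha\ln\tfrac{t_q}{w_q}$ into the objective to rewrite it as $\alpha+\frac{E}{\dd{t_u}{w_u}}$, where $E$ is a combination of logs whose arguments all deviate from $1$ by $O(|t_u-w_u|)$ and $O(1/\sqrt c)$, so that the exact bounds $\ln x\le x-1$ and $\ln x\ge(x-1)-0.6(x-1)^2$ control $E$ by $O(|t_u-w_u|/\sqrt c)$; only then does $\dd{t_u}{w_u}=\Omega(|t_u-w_u|)$ close the argument. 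Your plan should either invoke that algebraic substitution explicitly or split Regime~A into the $t_u=\Omega(1)$ sub-case and a harder small-$t_u$ sub-case handled like the paper's Cases~2--3.
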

\begin{proof}
First note that in order to make $\D{T}{P}$ finite while satisfying $\underset{X \sim T}{\E}[X] = [\substack{t_q \\ t_u}]$, we must have
\[
T = \begin{bmatrix}
t_u & t_q - t_u \\
0 & 1 - t_q
\end{bmatrix}.
\]
The two terms in the theorem statement become
\begin{align*}
\frac{\D{T}{P} - \dd{t_q}{w_q}}{\dd{t_u}{w_u}} = f_q(t_q, t_u),~~~~~~
\frac{\D{T}{P} - \dd{t_u}{w_u}}{\dd{t_u}{w_u}} = f_u(t_q, t_u),
\end{align*}
where
\begin{align*}
f_q(t_q, t_u) = &~ \frac{t_u \ln \frac{t_u}{w_u} + (t_q - t_u) \ln \frac{t_q - t_u}{w_q - w_u} - t_q \ln \frac{t_q}{w_q}}{t_u \ln \frac{t_u}{w_u} + (1 - t_u)  \ln\frac{1-t_u}{1-w_u}}, \\
f_u(t_q, t_u) = &~ \frac{(1-t_q) \ln \frac{1-t_q}{1-w_q} + (t_q - t_u) \ln \frac{t_q - t_u}{w_q - w_u} - (1-t_u) \ln \frac{1-t_u}{1-w_u}}{t_u \ln \frac{t_u}{w_u} + (1 - t_u)  \ln\frac{1-t_u}{1-w_u}}.
\end{align*}
Note that the functions remain the same after changing all $\log_2$ to $\ln$. In this proof, with an abuse of notation, we denote $\dd{t}{w} = t \ln \frac{t}{w} + (1-t) \ln \frac{1-t}{1-w}$.

Define the function $f(t_q, t_u) = \alpha \cdot f_q(t_q, t_u) + (1-\alpha) \cdot f_u(t_q, t_u)$. Our goal is to prove that
\[
\inf_{\substack{t_q, t_u \in [0,1] \\ t_u \neq w_u}} f(t_q, t_u) \geq \alpha 
\left(1 - \frac{\Theta(1)}{\sqrt{c}}\right).
\]

The gradient of $f(t_q, t_u)$ wrt $t_q$ is
\begin{align}\label{eq:gradient_f_tq_tu}
\frac{\partial}{\partial t_q} f(t_q, t_u) 
= &~ \frac{1}{\dd{t_u}{w_u}} \cdot \Big( \ln(\frac{(t_q-t_u)}{(w_q-w_u)}) - (1-\alpha) \ln(\frac{(1-t_q)}{(1-w_q)}) - \alpha \ln\frac{t_q}{w_q} \Big).
\end{align}
Note that $\frac{\partial}{\partial t_q} f(t_q, t_u)$ approaches $-\infty$ as $t_q \to t_u$, and it approaches $+\infty$ as $t_q \to 1$. We also have that $\frac{\partial}{\partial t_q} f(t_q, t_u)$ is monotonically increasing wrt $t_q$ because 
\begin{align*}
\frac{\partial^2 }{\partial t_q^2} f(t_q, t_u) = &~ \frac{1}{\dd{t_u}{w_u}} \cdot \frac{\partial}{\partial t_q} \Big( \ln(\frac{(t_q-t_u)}{(w_q-w_u)}) - (1-\alpha) \ln(\frac{(1-t_q)}{(1-w_q)}) - \alpha \ln\frac{t_q}{w_q} \Big) \\
= &~ \frac{1}{\dd{t_u}{w_u}} \cdot (\frac{1}{t_q-t_u} + \frac{1-\alpha}{1-t_q} - \frac{\alpha}{t_q}) > 0,
\end{align*}
where the last step follows from $\alpha \in (0,1)$, $t_q \in (t_u, 1)$, and so $\frac{1}{t_q-t_u} > \frac{\alpha}{t_q}$. So we have that the function $f(t_q, t_u)$ decreases and then increases, and its minimum for a fixed $t_u$ is reached when $\frac{\partial}{\partial t_q} f(t_q, t_u) = 0$, i.e., when
\begin{align}\label{eq:optimality_condition}
\ln(\frac{(t_q-t_u)}{(w_q-w_u)}) = (1-\alpha) \ln(\frac{(1-t_q)}{(1-w_q)}) + \alpha \ln\frac{t_q}{w_q}.
\end{align}

{\bf Case 1 ($t_u \geq 0.1$).}
Note that we have
\begin{align*}
&~ (1-t_q) \log \frac{1-t_q}{1-w_q} + (t_q - t_u) \log \frac{t_q - t_u}{w_q - w_u} - (1-t_u) \log \frac{1-t_u}{1-w_u} \\
= &~ (1-t_u) \cdot \Big( \frac{1-t_q}{1-t_u} \log(\frac{(1-t_q)(1-w_u)}{(1-t_u)(1-w_q)}) + (\frac{t_q-t_u}{1-t_u} \log(\frac{t_q-t_u}{1-t_u} \cdot \frac{1-w_u}{w_q-w_u})) \Big) \\
= &~ (1-t_u) \cdot \dd{\frac{1-t_q}{1-t_u}}{\frac{1-w_q}{1-w_u}} \geq 0,
\end{align*}
so we have $f(t_q, t_u) = \alpha \cdot f_q(t_q, t_u) + (1-\alpha) \cdot f_u(t_q, t_u)$ satisfies
\begin{align*}
&~ f(t_q, t_u) \\
= &~ \frac{(t_q - t_u) \ln \frac{t_q - t_u}{w_q - w_u} + \alpha t_u \ln \frac{t_u}{w_u} - \alpha t_q \ln \frac{t_q}{w_q} + (1 - \alpha) (1-t_q) \ln \frac{1-t_q}{1-w_q} - (1 - \alpha) (1-t_u) \ln \frac{1-t_u}{1-w_u}}{\dd{t_u}{w_u}} \\
= &~ \alpha \cdot \Big(1 - \frac{\dd{t_q}{w_q}}{\dd{t_u}{w_u}}\Big) + \frac{(1-t_u) \cdot \dd{\frac{1-t_q}{1-t_u}}{\frac{1-w_q}{1-w_u}}}{\dd{t_u}{w_u}}.
\end{align*}
And it suffices to prove $\frac{\dd{t_q}{w_q}}{\dd{t_u}{w_u}} \leq \Theta(\frac{1}{\sqrt{c}})$.

Using Lemma~\ref{lem:range_tq}, in this case we have $t_q \geq w_q$, and so
\begin{align*}
\dd{t_q}{w_q} = &~ t_q \ln\frac{t_q}{w_q} + (1-t_q) \ln\frac{1-t_q}{1-w_q} \\
\leq &~ \ln(\frac{1}{w_q})
\leq \frac{1}{1-\frac{10}{\sqrt{c}}} - 1 \leq \Theta(1/\sqrt{c}),
\end{align*}
where in the third step we used that $\ln(x) \leq x - 1$ for all $x > 0$ and $w_q = 1 - \frac{10}{\sqrt{c}}$.

Using Lemma~\ref{lem:KL_divergence_lower_bound}, when $t_u \geq 0.1 > 1.1 w_u$, we have that $\dd{t_u}{w_u} \geq \Omega(t_u) \geq \Omega(1)$. Combining these two bounds we have that $\frac{\dd{t_q}{w_q}}{\dd{t_u}{w_u}} \leq \Theta(\frac{1}{\sqrt{c}})$.

{\bf Case 2 ($1.1 w_u \leq t_u < 0.1$ or $t_u \leq 0.9 w_u$).}
We first prove two useful bounds in this case. Using Lemma~\ref{lem:KL_divergence_lower_bound}, (1) when $t_u \leq 0.9 w_u$, we have $\dd{t_u}{w_u} \geq \Omega(w_u)$, and (2) when $1.1 w_u \leq t_u < 0.1$, we have $\dd{t_u}{w_u} \geq \Omega(t_u)$. This means in either case we always have $\dd{t_u}{w_u} \geq \Omega(|t_u-w_u|)$. Using Lemma~\ref{lem:range_tq}, we have that $|t_q - w_q| \leq |t_u - w_u|$, and since $t_q \leq 1$ and $t_u \geq 0$, we also always have $|t_q - w_q| \leq \min\{|t_u - w_u|, \Theta(\frac{1}{\sqrt{c}})\}$. 

Next we bound $f(t_q, t_u)$. Under the optimality condition Eq.~\eqref{eq:optimality_condition}, we have that the $(t_q,t_u)$ that reaches the minimum of $f(t_q, t_u) = \alpha \cdot f_q(t_q, t_u) + (1-\alpha) \cdot f_u(t_q, t_u)$ must satisfy
\begin{align*}
&~ f(t_q, t_u) \\
= &~ \frac{(t_q - t_u) \ln \frac{t_q - t_u}{w_q - w_u} + \alpha t_u \ln \frac{t_u}{w_u} - \alpha t_q \ln \frac{t_q}{w_q} + (1 - \alpha) (1-t_q) \ln \frac{1-t_q}{1-w_q} - (1 - \alpha) (1-t_u) \ln \frac{1-t_u}{1-w_u}}{\dd{t_u}{w_u}} \\
= &~ \frac{(1 - t_u) \ln \frac{t_q - t_u}{w_q - w_u} + \alpha t_u \ln \frac{t_u}{w_u} - \alpha \ln \frac{t_q}{w_q} - (1 - \alpha) (1-t_u) \ln \frac{1-t_u}{1-w_u}}{\dd{t_u}{w_u}} \\
= &~ \alpha + \frac{- \alpha \ln \frac{t_q}{w_q} + (1-t_u) \ln(\frac{t_q-t_u}{w_q-w_u}) - (1-t_u) \ln \frac{1-t_u}{1-w_u}}{\dd{t_u}{w_u}},
\end{align*}
where in the third step we plugged in $(1-\alpha) \log(\frac{1-t_q}{1-w_q}) = \log (\frac{t_q - t_u}{w_q - w_u}) - \alpha \log\frac{t_q}{w_q}$ by Eq.~\eqref{eq:optimality_condition}.

Next we have
\begin{align}\label{eq:f_tq_tu_case_2}
f(t_q, t_u) = &~ \alpha + \frac{- \alpha \ln \frac{t_q}{w_q} + (1-t_u) \ln(\frac{t_q-t_u}{w_q-w_u}) - (1-t_u) \ln \frac{1-t_u}{1-w_u}}{\dd{t_u}{w_u}} \notag \\
= &~ \alpha + \frac{- \alpha \ln \frac{t_q}{w_q} + (1-t_u) \ln(\frac{(t_q-t_u)(1-w_u)}{(w_q-w_u)(1-t_u)})}{\dd{t_u}{w_u}} \notag \\
\geq &~ \alpha + \frac{-\alpha (\frac{t_q - w_q}{w_q}) + \frac{(t_q-w_q)(1-w_u) - (t_u-w_u)(1-w_q)}{(w_q - w_u)} - 0.6 \frac{\big((t_q-w_q)(1-w_u) - (t_u-w_u)(1-w_q)\big)^2}{(w_q - w_u)^2(1-t_u)}}{\dd{t_u}{w_u}} \notag \\
= &~ \alpha + \frac{(t_q - w_q) \cdot (\frac{1-w_u}{w_q - w_u} - \frac{\alpha}{w_q}) - \frac{(t_u-w_u)(1-w_q)}{w_q - w_u} - 0.6 \frac{\big((t_q-w_q)(1-w_u) - (t_u-w_u)(1-w_q)\big)^2}{(w_q - w_u)^2(1-t_u)}}{\dd{t_u}{w_u}}
\end{align}
where the second step follows from $\ln(x) \leq x - 1$ for all $x > 0$, and $\ln(x) \geq (x-1) - 0.6 (x-1)^2$ for all $x$ such that $|x-1| \leq 0.2$, and note that $|\frac{t_q-w_q}{w_q}| \leq \frac{|t_u-w_u|}{w_q} \leq 0.2$ and $|\frac{(t_q-t_u)(1-w_u)}{(w_q-w_u)(1-t_u)} - 1| = |\frac{(t_q-w_q)(1-w_u) - (t_u-w_u)(1-w_q)}{(w_q - w_u)(1-t_u)}| \leq \frac{|t_u-w_u|}{(w_q - w_u)(1-t_u)} \leq 0.2$. 

Next we bound the three terms in the above equation using the parameter values $w_u = \frac{1}{\sqrt{c}}$, $w_q = 1 - \frac{10}{\sqrt{c}}$, $\alpha = 1 - \frac{20}{\sqrt{c}}$, and $c \geq 500$. For the first term, we have
\begin{align*}
\Big|(t_q - w_q) \cdot (\frac{1-w_u}{w_q - w_u} - \frac{\alpha}{w_q})\Big| = &~ |t_q - w_q| \cdot (\frac{1-\frac{1}{\sqrt{c}}}{1 - \frac{11}{\sqrt{c}}} - \frac{1-\frac{20}{\sqrt{c}}}{1-\frac{10}{\sqrt{c}}}) \leq \Theta(\frac{|t_u-w_u|}{\sqrt{c}}).
\end{align*}
For the second term, we have
\begin{align*}
\Big|\frac{(t_u-w_u)(1-w_q)}{w_q-w_u}\Big|  = \frac{|t_u-w_u| \frac{10}{\sqrt{c}}}{1-\frac{11}{\sqrt{c}}} \leq &~ \Theta(\frac{|t_u-w_u|}{\sqrt{c}}).
\end{align*}
For the third term, we have
\begin{align*}
\frac{\big((t_q-w_q)(1-w_u) - (t_u-w_u)(1-w_q)\big)^2}{(w_q - w_u)^2(1-t_u)} \leq &~ \Theta\Big((t_q-w_q)^2 + \frac{(t_u-w_u)^2}{c} \Big).
\end{align*}
Plugging these three bounds into Eq.~\eqref{eq:f_tq_tu_case_2}, we have
\begin{align*}
f(t_q, t_u) \geq &~ \alpha - \frac{\Theta\Big(\frac{|t_u-w_u|}{\sqrt{c}} + (t_q-w_q)^2 + \frac{(t_u-w_u)^2}{c} \Big)}{\dd{t_u}{w_u}} \\
\geq &~ \alpha - \frac{\Theta(1)}{\sqrt{c}},
\end{align*}
where we used that $\dd{t_u}{w_u} \geq \Omega(|t_u-w_u|)$, and $|t_q - w_q| \leq \min\{|t_u - w_u|, \Theta(\frac{1}{\sqrt{c}})\}$. 

Note that since $\alpha = 1 - \frac{20}{\sqrt{c}}$, this implies that $f(t_q,t_u) \geq \alpha (1 - \frac{\Theta(1)}{\sqrt{c}})$.

{\bf Case 3 ($0.9 w_u < t_u < 1.1 w_u$).}
Using Lemma~\ref{lem:KL_divergence_lower_bound}, in this case we have $\dd{t_u}{w_u} \geq \Omega(\frac{(t_u-w_u)^2}{w_u})$.

Using the same proof as Case 2, we still have Eq.~\eqref{eq:f_tq_tu_case_2} that
\begin{align*}
f(t_q, t_u) \geq &~ \alpha + \frac{(t_q - w_q) \cdot (\frac{1-w_u}{w_q - w_u} - \frac{\alpha}{w_q}) - \frac{(t_u-w_u)(1-w_q)}{w_q - w_u} - 0.6 \frac{\big((t_q-w_q)(1-w_u) - (t_u-w_u)(1-w_q)\big)^2}{(w_q - w_u)^2(1-t_u)}}{\dd{t_u}{w_u}}.
\end{align*}
However, in this case the denominator $\dd{t_u}{w_u}$ could be much smaller than $\Theta(\frac{|t_u-w_u|}{\sqrt{c}})$, and we need to consider cancellations of the first two terms. 

Using the optimality condition of Eq.~\eqref{eq:optimality_condition}, and the fact that $\ln(x) \leq x - 1$ for all $x > 0$, and $\ln(x) \geq (x-1) - 0.6 (x-1)^2$ for all $x$, we have the following two equations:
\begin{align*}
&~ \frac{(t_q-w_q)-(t_u-w_u)}{(w_q-w_u)} - 0.6 \cdot (\frac{(t_q-w_q)-(t_u-w_u)}{(w_q-w_u)})^2 \leq -(1-\alpha) \frac{t_q-w_q}{1-w_q} + \alpha \frac{t_q-w_q}{w_q}, \\
&~ \frac{(t_q-w_q)-(t_u-w_u)}{(w_q-w_u)} \geq -(1-\alpha) \frac{t_q-w_q}{1-w_q} + \alpha \frac{t_q-w_q}{w_q} - 0.6 (1-\alpha) (\frac{t_q-w_q}{1-w_q})^2 - 0.6 \alpha (\frac{t_q-w_q}{w_q})^2.
\end{align*}
These two equations imply that
\begin{align*}
\frac{t_u - w_u}{w_q - w_u} - (t_q - w_q) \cdot \Big(\frac{1}{w_q-w_u} + \frac{1-\alpha}{1-w_q} - \frac{\alpha}{w_q}\Big) \geq &~ - 0.6 \cdot (\frac{(t_q-w_q)-(t_u-w_u)}{(w_q-w_u)})^2, \\
\text{and } \frac{t_u - w_u}{w_q - w_u} - (t_q - w_q) \cdot \Big(\frac{1}{w_q-w_u} + \frac{1-\alpha}{1-w_q} - \frac{\alpha}{w_q}\Big) \leq &~ 0.6 (t_q - w_q)^2 \cdot \Big( \frac{1-\alpha}{(1-w_q)^2} + \frac{\alpha}{w_q^2} \Big).
\end{align*}
Since $\frac{1}{(w_q-w_u)^2} = \Theta(1)$, $\frac{1-\alpha}{(1-w_q)^2} = \frac{20/\sqrt{c}}{100/c} = \Theta(\sqrt{c})$, $\frac{\alpha}{w_q^2} = \Theta(1)$, and $|t_q-w_q| \leq |t_u-w_u|$, we have
\begin{align*}
\Big| \frac{t_u - w_u}{w_q - w_u} - (t_q - w_q) \cdot \Big(\frac{1}{w_q-w_u} + \frac{1-\alpha}{1-w_q} - \frac{\alpha}{w_q}\Big) \Big| \leq \Theta(\sqrt{c} \cdot (t_q-w_q)^2 ).
\end{align*}
Using this, we can bound the first two terms of Eq.~\eqref{eq:f_tq_tu_case_2} as
\begin{align*}
&~ \Big| (t_q - w_q) \cdot (\frac{1-w_u}{w_q - w_u} - \frac{\alpha}{w_q}) - \frac{(t_u-w_u)(1-w_q)}{w_q - w_u} \Big| \\ 
\leq &~ \Big|(t_q - w_q) \cdot \Big( \frac{1-w_u}{w_q-w_u}-\frac{\alpha}{w_q} - (1-w_q) \Big(\frac{1}{w_q-w_u} + \frac{1-\alpha}{1-w_q} - \frac{\alpha}{w_q}\Big) \Big) \Big| + \Theta(\sqrt{c} \cdot (t_q-w_q)^2 (1-w_q)) \\
= &~ \Theta(\sqrt{c} \cdot (t_q-w_q)^2 (1-w_q)) = \Theta((t_q-w_q)^2),
\end{align*}
where in the third step we used that $w_q = 1 - \frac{10}{\sqrt{c}}$.

Plugging the equation above into Eq.~\eqref{eq:f_tq_tu_case_2}, and bounding the third term of Eq.~\eqref{eq:f_tq_tu_case_2} in the same way as Case 2 by $\frac{\big((t_q-w_q)(1-w_u) - (t_u-w_u)(1-w_q)\big)^2}{(w_q - w_u)^2(1-t_u)} \leq \Theta\Big((t_q-w_q)^2 + \frac{(t_u-w_u)^2}{c} \Big)$, we have
\begin{align*}
f(t_q, t_u) \geq &~ \alpha - \frac{\Theta\Big((t_q-w_q)^2 + \frac{(t_u-w_u)^2}{c} \Big)}{\dd{t_u}{w_u}} \\
\geq &~ \alpha - \frac{\Theta\Big((t_q-w_q)^2 + \frac{(t_u-w_u)^2}{c} \Big)}{\frac{(t_u-w_u)^2}{w_u}} \\
\geq &~ \alpha - \Theta(w_u) \\
\geq &~ \alpha (1 - \Theta(1/\sqrt{c})).
\end{align*}
\end{proof}

\subsection{Proof of the Main Lower Bound Theorem}\label{sec:proof_lower_bound}
Combining Theorem~\ref{thm:lower_bound_KL} and Theorem~\ref{thm:simplify_KL}, we have the following corollary.
\begin{corollary}\label{cor:1/sqrt_c_lower_bound}
Assume $n \geq c \geq 500$. Consider any list-of-points data structure for the online OV problem over $n$ points of dimension $d = c \log n$, which has expected space $n^{1 + \rho_u}$ and expected query time $n^{\rho_q}$, and succeeds with probability $\geq 0.995$. Then for we have that either $\rho_u \geq 1$, or
\begin{align*}
\rho_q \geq 1-\frac{\Theta(1)}{\sqrt{c}}.
\end{align*}
\end{corollary}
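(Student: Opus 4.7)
The plan is to directly combine Theorem~\ref{thm:lower_bound_KL} and Theorem~\ref{thm:simplify_KL}, both of which are already proven earlier. Specifically, I would instantiate Theorem~\ref{thm:lower_bound_KL} with the exact parameter choice used in Theorem~\ref{thm:simplify_KL}, namely $w_q = 1 - 10/\sqrt{c}$, $w_u = 1/\sqrt{c}$, and $\alpha = 1 - 20/\sqrt{c}$. Before applying the theorem, I would verify that this choice satisfies all required hypotheses: $w_u(1-w_q) = 10/c > 2/c$; the condition $0 < w_u \le w_q < 1$ holds for $c \ge 500$ since then $1/\sqrt{c} < 1 - 10/\sqrt{c}$; and $\alpha = 1 - 20/\sqrt{c} \in [0,1]$ for $c \ge 500$. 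These are straightforward arithmetic checks.

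With hypotheses verified, Theorem~\ref{thm:lower_bound_KL} yields
\[
\alpha \rho_q + (1-\alpha) \rho_u \;\ge\; \inf_{\substack{t_q, t_u \in [0,1] \\ t_u \neq w_u}} \left( \alpha \frac{\D{T}{P} - \dd{t_q}{w_q}}{\dd{t_u}{w_u}} + (1 - \alpha) \frac{\D{T}{P} - \dd{t_u}{w_u}}{\dd{t_u}{w_u}} \right),
\]
and Theorem~\ref{thm:simplify_KL} immediately lower-bounds the right-hand side by $\alpha (1 - \Theta(1)/\sqrt{c})$. So I obtain the single scalar inequality $\alpha \rho_q + (1-\alpha) \rho_u \ge \alpha(1 - \Theta(1)/\sqrt{c})$.

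The final step is to rearrange this into the dichotomy claimed by the corollary. Dividing through by $\alpha$ (which is bounded away from $0$ when $c \ge 500$) gives
\[
\rho_q \;\ge\; 1 - \frac{\Theta(1)}{\sqrt{c}} - \frac{1-\alpha}{\alpha}\,\rho_u \;=\; 1 - \frac{\Theta(1)}{\sqrt{c}} - \frac{20/\sqrt{c}}{1 - 20/\sqrt{c}}\,\rho_u.
\]
Assuming $\rho_u < 1$ (otherwise we are in the first case of the dichotomy), the last term is at most $\Theta(1)/\sqrt{c}$, and absorbing it into the existing $\Theta(1)/\sqrt{c}$ term yields $\rho_q \ge 1 - \Theta(1)/\sqrt{c}$, as desired.

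Since both ingredients are already fully proven above, there is no genuine obstacle in this corollary: the only mildly nontrivial point is choosing $\alpha = 1 - 20/\sqrt{c}$ carefully so that (i) Theorem~\ref{thm:simplify_KL} applies verbatim, and (ii) the coefficient $(1-\alpha)/\alpha$ is small enough that even a trivially-bounded $\rho_u \le 1$ contribution only perturbs the lower bound by an additional $O(1/\sqrt{c})$. Any attempt to optimize the hidden constant would require revisiting the calculation inside Theorem~\ref{thm:simplify_KL}, but for the qualitative $1 - \Theta(1)/\sqrt{c}$ statement this black-box combination suffices.
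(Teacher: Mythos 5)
Your proposal is correct and matches the paper's approach: instantiate Theorem~\ref{thm:lower_bound_KL} with the parameters $w_q = 1-10/\sqrt{c}$, $w_u = 1/\sqrt{c}$, $\alpha = 1-20/\sqrt{c}$, invoke Theorem~\ref{thm:simplify_KL} to lower-bound the infimum by $\alpha(1-\Theta(1)/\sqrt{c})$, and rearrange. The paper's own final step proceeds by a slightly different case split: it observes that either $(1-\alpha)\rho_u \geq 21\alpha/\sqrt{c}$ or $\alpha\rho_q \geq \alpha(1-(C_0+21)/\sqrt{c})$, and then reads the dichotomy off. Your version---dividing by $\alpha$ (bounded below by a positive constant when $c\geq 500$) and observing that $\rho_u < 1$ makes $\frac{1-\alpha}{\alpha}\rho_u = O(1/\sqrt{c})$---is if anything a cleaner way to carry out the same elementary manipulation, since the paper's inference from $(1-\alpha)\rho_u \geq 21\alpha/\sqrt{c}$ to $\rho_u\geq 1$ only becomes literally tight once $\alpha$ is close to $1$, whereas your bound holds verbatim for all $c\geq 500$ with a somewhat larger hidden constant. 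Both routes are equivalent in substance.
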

\begin{proof}
Combining Theorem~\ref{thm:lower_bound_KL} and Theorem~\ref{thm:simplify_KL}, we have that 
\begin{align*}
\alpha \rho_q + (1 - \alpha) \rho_u \geq \alpha \left(1 - \frac{C_0}{\sqrt{c}}\right),
\end{align*}
where $\alpha = 1 - \frac{20}{\sqrt{c}}$, and $C_0>0$ is a constant. So we have that either $(1-\alpha) \rho_u \geq \frac{21 \alpha}{\sqrt{c}}$, or
\begin{align*}
\alpha \rho_q \geq \alpha \left(1 - \frac{C_0 + 21}{\sqrt{c}}\right).
\end{align*}
And this immediately implies the claim that either $\rho_u \geq 1$ or $\rho_q \geq 1 - \frac{\Theta(1)}{\sqrt{c}}$.
\end{proof}

Now we are ready to prove the main lower bound Theorem~\ref{thm:main_lower_bound}.
\begin{proof}[Proof of Theorem~\ref{thm:main_lower_bound}]
Let $n^{\rho_q}$ and $n^{\rho_u + 1}$ denote the query time and space of the data structure described in the theorem statement. Let $\gamma$ be any number in range $[1, \sqrt{c}]$. We split the the dataset in groups of size $n^{1/\gamma}$ and build a separate data structure for each group. This gives another data structure with space $n^{1+\rho'_u}$ and query time $n^{\rho'_q}$, where
\begin{align*}
\rho'_u = \rho_u / \gamma, 
\rho'_q = 1 -(1-\rho_q) / \gamma.
\end{align*}
Using Corollary~\ref{cor:1/sqrt_c_lower_bound}, we have that either $\rho'_u \geq 1$, or $\rho'_q \geq 1 - \frac{\Theta(1)}{\sqrt{c}}$. And means we have that either $\rho_u \geq \gamma$, or $\rho_q \geq \frac{\Theta(\gamma)}{\sqrt{c}}$.
\end{proof}

\bibliographystyle{alpha}
\bibliography{ref,andoni}

\newpage
\appendix
\section{Tightness of Distributional Communication Protocol for Set Disjointness}\label{sec:tight_communication_protocol}
In this section we show that the communication protocol of Lemma~\ref{lem:w_sparse_subset_query_protocol} is essentially tight for $w = d$. We will convert this protocol to the standard communication model as defined in Definition~\ref{def:standard_communication_model}, and prove a matching lower bound.

\subsection{Upper bound}
In this section we show the upper bound: we show that the  communication protocol of Lemma~\ref{lem:w_sparse_subset_query_protocol} can be converted to the standard communication model as defined in Definition~\ref{def:standard_communication_model}.

\begin{theorem}[Protocol of Lemma~\ref{lem:w_sparse_subset_query_protocol} in standard communication protocol]\label{thm:communication_protocol}
Fixing $\lambda$ and $\rho$ to be two distributions over $\{0,1\}^{d}$. For any $\epsilon \in (0, 1/2)$, and any integers $d, \ell \in \mathbb{N}_+$, there exists a deterministic communication protocol for set disjointness over $\lambda \times \rho$ with $\epsilon$ type II error, and during the protocol
\begin{itemize}
    \item Alice sends $a \leq \ell + \lceil d/\ell \rceil \cdot \log(3e \ell)$ bits,
    \item Bob sends $b \leq 2 \ell \cdot \log(16 \ell/\epsilon))$ bits.
\end{itemize}
\end{theorem}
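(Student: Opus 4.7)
I will obtain Theorem~\ref{thm:communication_protocol} by adapting the Phase~2 protocol $\pi_{\mathrm{sq}}$ from Lemma~\ref{lem:w_sparse_subset_query_protocol} to the standard product-distributional model (Definition~\ref{def:standard_communication_model}) through three modifications: remove Merlin by letting Alice transmit her residual set directly whenever it is small, absorb Carol into a public coin (which Yao's minimax subsequently fixes to obtain determinism), and flip the one-sided error direction from false-positive-for-containment to false-negative-for-disjointness.

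\paragraph{Protocol.}
I maintain an ``active'' subuniverse $U\subseteq [d]$, initialized to $[d]$, and run at most $\ell$ rounds. Each round Alice sends one bit announcing whether $|x\cap U|\le \lceil d/\ell\rceil$. If yes, she transmits $x\cap U$ as an index into the subsets of $U$ of size at most $\lceil d/\ell\rceil$, which takes $\lceil \log\binom{|U|}{\lceil d/\ell\rceil}\rceil \le \lceil d/\ell\rceil\cdot \log(3e\ell)$ bits by the estimate $\binom{n}{k}\le (en/k)^{k}$; Bob then tests $(x\cap U)\cap y=\emptyset$ exactly and the protocol halts. If no, the players use the public coin to sample $t := 16\ell/\epsilon$ independent sets $X_{1},\dots,X_{t}\sim \lambda$ conditioned on $|X\cap U|>\lceil d/\ell\rceil$. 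Bob then sends either the smallest index $i^{*}$ with $X_{i^{*}}\cap y\cap U=\emptyset$ (using $\log(16\ell/\epsilon)$ bits), after which both players update $U\gets U\setminus X_{i^{*}}$, or a sentinel declaring ``no hit'' in which case the protocol outputs $0$ and halts. Summing across rounds yields $a\le \ell + \lceil d/\ell\rceil\log(3e\ell)$ and $b\le 2\ell\log(16\ell/\epsilon)$, where the factor two on Bob absorbs the sentinel and halting bits.

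\paragraph{Correctness.}
The ``disjoint'' answer is output only through Alice's exact reveal, so Type~I error is zero. For Type~II error, suppose $x\cap y=\emptyset$. The update step is valid because $X_{i^{*}}\cap y=\emptyset$ implies that restricting to $U\setminus X_{i^{*}}$ preserves $x\cap y\cap U = x\cap y$; hence the only failure mode is that in some round none of the $t$ samples realizes $X\cap y\cap U=\emptyset$ even though $x$ does. Since $x$ and the $X_{j}$'s share the conditional distribution $\lambda|_{|X\cap U|>\lceil d/\ell\rceil}$, the three-case analysis used for $\pi_{\mathrm{sq}}$ (split on whether the conditional hit probability exceeds $\epsilon/(4\ell)$) bounds each round's failure probability by $O(\epsilon/\ell)$, and a union bound over $\ell$ rounds yields total Type~II error at most $\epsilon$. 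Finally, Yao's minimax fixes the public randomness to a string achieving distributional error at most $\epsilon$ under $\lambda\times\rho$, yielding a deterministic protocol with the claimed bit counts.

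\paragraph{Hardest step.}
The most delicate piece is the flipped one-sided error: unlike Lemma~\ref{lem:w_sparse_subset_query_protocol}, which can afford to declare ``subset'' when sampling fails, here I must upper-bound the probability of reaching the ``no hit'' sentinel by $\epsilon$ whenever $x\cap y=\emptyset$. I expect the three-case coupling to go through essentially verbatim because $x$ and the $X_{j}$'s are drawn from identical conditional distributions, but I anticipate minor tuning of thresholds (and possibly replacing ``exactly disjoint'' by ``almost disjoint with slack $\log(\ell/\epsilon)$'', mirroring the $h$ parameter of Phase~2) to obtain the $16\ell/\epsilon$ sample count cleanly.
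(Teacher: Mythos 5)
Your proposal is correct, and it supplies a construction the paper's own proof does not actually spell out. The paper's proof of Theorem~\ref{thm:communication_protocol} merely cites Lemma~\ref{lem:w_sparse_subset_query_protocol} as a protocol ``with the same communication cost as in this theorem'' and then fixes the randomness by averaging, but Lemma~\ref{lem:w_sparse_subset_query_protocol} gives a Merlin protocol with type~I (false-positive) error and differently-shaped bounds; the remark that Definition~\ref{def:special_communication_model} is ``more restricted'' than Definition~\ref{def:standard_communication_model} is true of the input model (worst-case $y$ versus $y\sim\rho$) but not of the Merlin machinery, so the reduction to a standard protocol is not immediate as stated. What is missing, and what you provide, is the explicit Merlin-free, false-negative variant of the Phase~2 protocol: Alice directly transmits her small residual set, encoded as a subset of $U$ of size at most $\lceil d/\ell\rceil$ in $\lceil d/\ell\rceil\log(3e\ell)$ bits, and failure to locate a sampled $X_{i^*}$ with $X_{i^*}\cap y\cap U=\emptyset$ produces a ``not disjoint'' output, which is exactly the type~II direction the theorem requires. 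Your bit counts, the invariant $x\cap y\subseteq U$, and termination within $\ell$ rounds all check out. The one place to tighten is the error analysis: the three-case/Chernoff argument you import from $\pi_{\mathrm{sq}}$ would need $t = \Theta((\ell/\epsilon)\log(\ell/\epsilon))$ samples to push per-round error down to $O(\epsilon/\ell)$, which is more than $16\ell/\epsilon$; instead, use exchangeability. Conditioned on the active universe $U_{r-1}$, the variables $x, X_1,\dots,X_t$ appearing in round $r$ are i.i.d.\ from $\lambda|_{|X\cap U_{r-1}|>\lceil d/\ell\rceil}$, so the probability that $x$ satisfies $x\cap y\cap U_{r-1}=\emptyset$ while none of the $X_j$ do equals $p(1-p)^t \le \max_{p\in[0,1]} p(1-p)^t < 1/(t+1) \le \epsilon/(16\ell)$ with $t=16\ell/\epsilon$, and union-bounding over at most $\ell$ rounds gives total type~II error below $\epsilon$. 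This also disposes of the $h$-slack device you anticipate needing at the end: that machinery is only required in the worst-case-$y$ variant of the model, not here where $y\sim\rho$.
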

\begin{corollary}
By choosing $\ell = \sqrt{d / \log(1 / \epsilon)} \geq 1$, there is a communication protocol for set disjointness over product distribution which has $\epsilon$ type II error and has communication complexity $O(\sqrt{d \log(1/\epsilon)} \cdot \log d)$.
\end{corollary}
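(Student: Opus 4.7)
The plan is to adapt the protocol of Lemma~\ref{lem:w_sparse_subset_query_protocol}, specialized to the case $w=d$, so that it fits the standard two-party product-distributional model of Definition~\ref{def:standard_communication_model}. First I would reduce set disjointness to the subset query function by having Bob locally replace his input by $y' := \overline{y}$, so that $\mathsf{Disj}(x,y) = f_{\mathrm{sq}}(x,y')$ while $(x,y')$ remains drawn from a product distribution; the one-sided error direction transfers after the obvious relabeling of the output bit. Then I would run the iterative skeleton of $\pi_{\mathrm{sq}}$ with two simplifications. Merlin's role from Phase~3 is replaced by having Alice transmit her (small) residual set directly using the binomial encoding $\log\binom{d}{\lceil d/\ell\rceil}\le \lceil d/\ell\rceil\log(3e\ell)$; Carol's random bits are replaced by standard public randomness, which will be eliminated at the end by averaging.

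Concretely, over at most $\ell$ rounds, Alice first uses an $O(1)$-bit indicator to announce whether her residual set has size $\le \lceil d/\ell\rceil$. If so, she encodes and sends it in the compact $\lceil d/\ell\rceil\log(3e\ell)$-bit format, and Bob decides containment with no further communication. Otherwise, using the shared randomness, both parties sample $t:=\lceil 16\ell/\epsilon\rceil$ candidate sets $X_1,\dots,X_t$ from $\lambda$ conditioned on size $>d/\ell$; Bob searches for the smallest $i^*$ with $X_{i^*}\subseteq y'$ and sends either its index using $\lceil\log t\rceil\le \log(16\ell/\epsilon)+1$ bits or a constant-size ``no candidate'' flag; both parties then strip $X_{i^*}$ and iterate. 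If no round succeeds, the protocol outputs ``not disjoint.'' Summing costs, Alice sends at most $\ell$ indicator bits plus one compact set-message of $\lceil d/\ell\rceil\log(3e\ell)$ bits, and Bob sends at most $2\ell\log(16\ell/\epsilon)$ bits, as claimed.

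The correctness argument mirrors the ``error under the unique special advice'' analysis of Lemma~\ref{lem:w_sparse_subset_query_protocol}. When $x\not\subseteq y'$, the protocol never mistakenly outputs ``disjoint'': the small-set branch lets Bob verify exactly, and stripping cannot manufacture containment. When $x\subseteq y'$, a failure occurs only if some round hits the ``no candidate'' branch; splitting on whether the conditional mass of $\{X : X\subseteq y'\}$ exceeds $\epsilon/(2\ell)$, a Chernoff bound on $t$ samples handles the ``heavy'' case, while in the ``light'' case Alice's own set $x$ itself falls in the small-set branch except on an $\epsilon/(2\ell)$ fraction of inputs. A union bound over the $\ell$ rounds yields total type~II error $\epsilon$ in expectation over the public randomness. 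Finally I would derandomize by averaging: since the error is at most $\epsilon$ over $(x,y,R)\sim\lambda\times\rho\times\mathcal{U}$, there is a fixed $R^*$ keeping the $\lambda\times\rho$-error at most $\epsilon$, and hardcoding $R^*$ produces the claimed deterministic protocol. The main obstacle I expect is calibrating $t$ and the size threshold so that the per-round failure probability is exactly $\epsilon/\ell$ and so that the small-set transmission fits in $\lceil d/\ell\rceil\log(3e\ell)$ bits (rather than the naïve $\lceil d/\ell\rceil\log d$), which requires the compact $\log\binom{d}{s}$ encoding rather than coordinate-by-coordinate indexing.
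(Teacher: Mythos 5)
Your proof takes a genuinely different, and arguably more careful, route than the paper's. The paper's proof of Theorem~\ref{thm:communication_protocol} (from which this corollary is just a substitution of $\ell=\sqrt{d/\log(1/\epsilon)}$) asserts that the protocol of Lemma~\ref{lem:w_sparse_subset_query_protocol} lifts ``directly'' to the standard model of Definition~\ref{def:standard_communication_model} because the special model is ``more restricted,'' and then fixes the public randomness by averaging. But that lemma's protocol uses Merlin, who cannot be simulated by either party alone (Merlin knows both $x$ and $y$), and---more importantly---that lemma's protocol has \emph{type~I} (false-positive) error, whereas Theorem~\ref{thm:communication_protocol} and this corollary claim \emph{type~II} error; under the complement reduction $\mathsf{Disj}(x,y)=f_{\mathrm{sq}}(x,\overline{y})$ the error type is preserved, so the citation does not deliver the stated guarantee. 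What the theorem actually needs is the \emph{unmodified} Phase~2 protocol sketched in Section~\ref{sec:tech_overview_protocols}: output ``not disjoint'' when sampling fails (which is type~II by construction), and replace Merlin's subset-relative encoding by Alice sending her residual set directly at cost $\log\binom{d}{\lceil d/\ell\rceil}\le \lceil d/\ell\rceil\log(3e\ell)$, which is exactly the origin of the extra $\log d$ factor. You construct precisely that protocol, so your proposal is effectively the proof the paper should have written out.

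One concrete fix: your $t=\lceil 16\ell/\epsilon\rceil$ is too small. In the heavy case ($\Pr[X\subseteq y']\ge \epsilon/(2\ell)$) the failure probability over $t$ samples is only $(1-\epsilon/(2\ell))^t\le e^{-8}$, a constant, whereas you need it below $\epsilon/(2\ell)$ to union-bound over the $\ell$ rounds; you must take $t=\Theta\bigl(\tfrac{\ell}{\epsilon}\log\tfrac{\ell}{\epsilon}\bigr)$, which still satisfies $\lceil\log t\rceil\le 2\log(16\ell/\epsilon)$ and so fits Bob's budget. You flagged exactly this calibration as the main obstacle, and it is the only real gap; the remaining steps---the complement reduction preserving product structure, stripping preserving the invariant $x\subseteq y'$, the light-case bound, and the averaging derandomization---are sound and match the paper's derandomization step.
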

\begin{proof}[Proof of Theorem~\ref{thm:communication_protocol}]
In Lemma~\ref{lem:w_sparse_subset_query_protocol} we showed a protocol in the special communication model (Definition~\ref{def:special_communication_model}) with the same communication cost as in this theorem. Since the special communication model (Definition~\ref{def:special_communication_model}) is a more restricted model than the standard communication model (Definition~\ref{def:standard_communication_model}), we directly have a protocol in the standard communication model that works for any distribution $\lambda \times \rho$. This resulting protocol uses public randomness, and the only remaining issue is to make it deterministic.

\paragraph{Fixing public randomness.}
Fixing distributions $\lambda$ and $\rho$. We have shown that there exists a protocol $\pi$ which uses public randomness $R$ and satisfies that
\[
\Pr_{x \sim \lambda, y \sim \rho, R}[\pi_R(x, y) \neq f(x, y)] \leq \epsilon.
\]
Thus there must exist a fixed public randomness $R_0$ such that 
\[
\Pr_{x \sim \lambda, y \sim \rho}[\pi_{R_0}(x, y) \neq f(x, y)] \leq \epsilon.
\]
Fixing $R = R_0$, we get the desired deterministic protocol.
\end{proof}

\subsection{Lower bound}
In this section we show a matching lower bound in Theorem~\ref{thm:lower_bound}. We prove it in a similar way as Theorem~11 of \cite{mnsw98} using a richness lemma (Lemma~5 of \cite{mnsw98}).

Before presenting our main theorem, we first prove a tool.
\begin{fact}[Disjoint probability]\label{fac:disjoint_prob}
Let $\epsilon \in (0, 1/2)$. Let $k \leq l < d/3$ be two integers that satisfy $k \cdot l < d \ln(1/\epsilon) / 3$. Let $\lambda$ be the uniform distribution over $x \subseteq [d]$ of size $k$, and let $\rho$ be the uniform distribution over $y \subseteq [d]$ of size $l$. We have
\[
\Pr_{x \sim \lambda, y \sim \rho}[x \cap y = \emptyset] \geq \epsilon.
\]
\end{fact}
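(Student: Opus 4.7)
The plan is to compute the disjointness probability exactly by a hypergeometric-style expression, and then bound each factor using the standard inequality $1-x \ge e^{-2x}$ (valid on $[0,1/2]$). First I would fix any $y$ of size $l$ and observe that by symmetry
\[
\Pr_{x \sim \lambda}[x \cap y = \emptyset] \;=\; \frac{\binom{d-l}{k}}{\binom{d}{k}} \;=\; \prod_{i=0}^{k-1} \frac{d-l-i}{d-i} \;=\; \prod_{i=0}^{k-1}\Bigl(1 - \frac{l}{d-i}\Bigr),
\]
so the probability does not depend on $y$ and we only need to lower bound this product.

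Next I would verify the hypotheses make the elementary inequality $1-x \ge e^{-2x}$ (for $x \in [0,1/2]$) applicable to each factor. Since $k \le l < d/3$, for every $i \in \{0, \dots, k-1\}$ we have $d - i > 2d/3$, hence $l/(d-i) < 1/2$, so
\[
\prod_{i=0}^{k-1}\Bigl(1 - \frac{l}{d-i}\Bigr) \;\ge\; \exp\!\Bigl(-2 l \sum_{i=0}^{k-1} \frac{1}{d-i}\Bigr).
\]
I would then bound the harmonic-type sum crudely by its largest term: $\sum_{i=0}^{k-1} \tfrac{1}{d-i} \le \tfrac{k}{d-k+1} \le \tfrac{3k}{2d}$, again using $k < d/3$. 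Plugging this in gives a lower bound of $\exp(-3kl/d)$, and the assumption $kl < d \ln(1/\epsilon)/3$ makes this at least $\exp(-\ln(1/\epsilon)) = \epsilon$, which is exactly the claim.

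There is no real obstacle here; the proof is a short routine calculation. The only mildly delicate point is choosing the right ``$1-x \ge e^{-cx}$'' constant so that the assumption $kl < d\ln(1/\epsilon)/3$ is sufficient, which is why I use $c = 2$ together with the bound $\sum 1/(d-i) \le 3k/(2d)$ (yielding the factor $3$ in the exponent that matches the hypothesis). If one wanted a tighter constant, one could instead use $\ln(1-x) \ge -x/(1-x)$ and the stronger bound $\sum 1/(d-i) \le k/(d-k)$, but the version above already suffices for the stated inequality.
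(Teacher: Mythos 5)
Your proof is correct and follows essentially the same approach as the paper: both compute the exact hypergeometric probability $\Pr[x\cap y=\emptyset]$ as a ratio of binomials, lower bound it by a product of factors of the form $1-\Theta(k l/d)$, and convert to an exponential via a standard $1-x\ge e^{-cx}$ estimate, using $k,l<d/3$ to control the denominator. The only cosmetic difference is that the paper writes the probability as $\binom{d-k}{l}/\binom{d}{l}$ and applies $(1-x)^{1/x-1}\ge e^{-1}$ once, whereas you use the symmetric form $\binom{d-l}{k}/\binom{d}{k}$ and apply $1-x\ge e^{-2x}$ factor by factor before summing; both land at $e^{-3kl/d}\ge\epsilon$.
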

\begin{proof}
\begin{align*}
    \Pr_{x \sim \lambda, y \sim \rho}[x \cap y = \emptyset] = \frac{\binom{d-k}{l}}{\binom{d}{l}} \geq (1 - \frac{k}{d - l})^l \geq e^{-kl/(d-k-l)} \geq e^{-(d/3) \ln(1/\epsilon) / (d/3)} = \epsilon,
\end{align*}
where the second step follows from $\frac{d-k}{d} \geq \frac{d-k-1}{d-1} \geq \cdots \geq \frac{d-k-l}{d-l}$, the third step follows from $(1-\frac{k}{d-l})^{((d-l)/k) - 1} \geq e^{-1}$, the fourth step follows from $k,l < d/3$ and $kl < d \ln(1/\epsilon) / 3$.
\end{proof}

We include the richness lemma (Lemma~5 of \cite{mnsw98}) here for completeness.

For any communication problem $f: X \times Y \to \{0,1\}$ where $X, Y \subseteq \{0,1\}^d$, we identify $f$ with the matrix $M$ where $M_{x,y} = f(x,y)$. We say $f$ is a $(u, v)$-rich problem if $M$ has at least $v$ columns each containing at least $u$ number of 1's.
\begin{lemma}[Richness lemma, Lemma~5 of \cite{mnsw98}]\label{lem:richness}
Let $f: X \times Y \to \{0,1\}$ be a $(u, v)$-rich communication problem. If there exists a deterministic protocol $\pi$ for $f$ where Alice sends $a$ bits and Bob sends $b$ bits, then $f$ contains a submatrix of dimension at least $(u/2^a) \times (v/2^{a+b})$ that only contains 1's.
\end{lemma}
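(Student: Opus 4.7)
The plan is to apply the standard rectangle decomposition for deterministic protocols, combined with a two-step pigeonhole that exploits the asymmetry between Alice's and Bob's communication. Recall that any deterministic $(a,b)$-protocol $\pi$ induces a partition of $X \times Y$ into combinatorial rectangles $\{A_\tau \times B_\tau\}_\tau$ indexed by transcripts $\tau \in \{0,1\}^{a+b}$, and each rectangle is monochromatic with respect to the protocol's output. Since $\pi$ correctly computes $f$, every rectangle on which $\pi$ outputs $1$ is contained in $f^{-1}(1)$.

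The key observation I would use first is an asymmetry: when Bob's input $y$ is held fixed, the transcript of $\pi$ on $(x,y)$ is determined by $x$ alone, because Bob's $b$ reactive bits are functions of $y$ together with Alice's preceding messages. Hence, ranging over $x$ with $y$ fixed, the transcript takes at most $2^a$ distinct values, not $2^{a+b}$. This is the sole non-textbook ingredient and is exactly what will cause the two bounds to differ by a factor of $2^b$.

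Using this, I would run a two-step pigeonhole. For each of the $v$ rich columns $y$, among the $u$ rows $x$ with $f(x,y)=1$, by the observation above these $u$ rows fall into at most $2^a$ transcript classes, so some transcript $\tau^*(y)$ collects at least $u/2^a$ of them. Next, as $y$ varies over the rich columns, the value $\tau^*(y)$ lies in the set of $\le 2^{a+b}$ possible transcripts, so by pigeonhole there exists a single transcript $\tau^*$ with $\tau^*(y) = \tau^*$ for a set $B^*$ of at least $v/2^{a+b}$ rich columns.

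Finally I would set $A^* = A_{\tau^*}$. By construction $B^* \subseteq B_{\tau^*}$, and for each $y \in B^*$ there are at least $u/2^a$ rows $x \in A_{\tau^*}$ with $(x,y)$ following transcript $\tau^*$, whence $|A^*| \geq u/2^a$ (using that $A_{\tau^*}$ does not depend on $y$). Because $A^* \times B^* \subseteq A_{\tau^*} \times B_{\tau^*}$ and the latter rectangle contains a $1$-entry (from step one) and is monochromatic, the submatrix $A^* \times B^*$ consists entirely of $1$'s, giving the claimed $(u/2^a) \times (v/2^{a+b})$ $1$-rectangle. No substantial obstacle is anticipated; the only care needed is in the first step, where one must distinguish the ``per-$y$'' transcript count $2^a$ from the global count $2^{a+b}$.
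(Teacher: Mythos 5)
The paper does not prove this lemma; it simply imports it as Lemma~5 of \cite{mnsw98} without giving an argument, so there is no in-paper proof to compare against. Your proof is a correct, self-contained derivation, and it is the standard argument: the crucial observation that, with $y$ fixed, the protocol tree branches only at Alice nodes (Bob's moves being determined by $y$ and the history) so that at most $2^a$ transcripts are reachable, is exactly what yields the asymmetric $u/2^a$ factor; the second pigeonhole over the at most $2^{a+b}$ transcripts then gives $v/2^{a+b}$; and monochromaticity of the protocol rectangle $A_{\tau^*}\times B_{\tau^*}$ (which is a $1$-rectangle because it contains entries with $f=1$) closes the argument. One could slightly sharpen the exposition by noting explicitly that $\tau^*(y)$ always lies among the $1$-transcripts, but this does not affect the bound. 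The proof is sound.
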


Now we are ready to prove our main theorem.
\begin{theorem}[Lower bound]\label{thm:lower_bound}
Let $d$ be the dimension of the set disjointness problem. Let $\epsilon \in (0, 1/2)$ be the error parameter.

Let $k \leq l < d/3$ be any two integers that satisfy $k \cdot l < d \ln(1/(3\epsilon)) / 3$. Let $\lambda$ be the uniform distribution over $x \subseteq [d]$ of size $k$, and let $\rho$ be the uniform distribution over $y \subseteq [d]$ of size $l$.

For any deterministic protocol $\pi$ over the product distribution $\lambda \times \rho$ with $\epsilon$ error type II error, we have that either Alice sends $a = \Omega(k)$ bits, or Bob sends $b = \Omega(l)$ bits.
\end{theorem}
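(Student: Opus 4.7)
The plan is to follow the proof strategy of Theorem~11 in~\cite{mnsw98}, combining the richness lemma (Lemma~\ref{lem:richness}) with a structural bound on cross-disjoint rectangles. First I would invoke Fact~\ref{fac:disjoint_prob} with parameter $3\epsilon$ in place of $\epsilon$: the hypothesis $kl < d\ln(1/(3\epsilon))/3$ immediately yields $\Pr_{\lambda \times \rho}[x \cap y = \emptyset] \geq 3\epsilon$. Let $M_\pi$ denote the set of input pairs on which $\pi$ outputs $1$; since $\pi$ has only type~II error, bounded by $\epsilon$, the set $M_\pi$ is contained in the set of disjoint pairs and has $(\lambda\times\rho)$-measure at least $3\epsilon - \epsilon = 2\epsilon$. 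A standard double-averaging argument then shows that the $\{0,1\}$-matrix of $M_\pi$ is $(\epsilon\binom{d}{k},\,\epsilon\binom{d}{l})$-rich.

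Next, I would apply Lemma~\ref{lem:richness} to the deterministic protocol $\pi$, obtaining a combinatorial rectangle $A\times B \subseteq M_\pi$ with $|A| \geq \epsilon\binom{d}{k}/2^a$ and $|B| \geq \epsilon\binom{d}{l}/2^{a+b}$, every pair of which is cross-disjoint. The crucial structural observation is that any such rectangle lives on disjoint supports: setting $P = \bigcup_{x \in A} x$ and $Q = [d]\setminus P$, every $y \in B$ must satisfy $y \cap P = \emptyset$ (else some $x\in A$ would intersect it), so $A\subseteq \binom{P}{k}$ and $B\subseteq \binom{Q}{l}$ with $|P|+|Q| = d$. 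Using the elementary inequality $\binom{s}{k}/\binom{d}{k} \leq (s/d)^k$ on both supports gives the clean bound
\begin{equation*}
\bigl(|A|/\tbinom{d}{k}\bigr)^{1/k} + \bigl(|B|/\tbinom{d}{l}\bigr)^{1/l} \;\leq\; |P|/d + |Q|/d \;\leq\; 1.
\end{equation*}

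Substituting the richness lower bounds yields $(\epsilon/2^a)^{1/k} + (\epsilon/2^{a+b})^{1/l} \leq 1$, so by pigeonhole at least one summand is at most $1/2$. The first case forces $\epsilon/2^a \leq 2^{-k}$, i.e., $a \geq k - \log(1/\epsilon)$; the second case forces $a + b \geq l - \log(1/\epsilon)$. A short case analysis using the hypothesis $k \leq l$ then concludes: whenever the second case applies and $l$ is not too small compared to $\log(1/\epsilon)$, the bound $a+b \geq l/2$ forces either $b \geq l/4 = \Omega(l)$ or $a \geq l/4 \geq k/4 = \Omega(k)$; the first case is handled analogously with $k$ in place of $l$.

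The main obstacle I anticipate is the edge regime in which $\log(1/\epsilon)$ is comparable to $k$ or $l$, since then the pigeonhole form $a \geq k - \log(1/\epsilon)$ becomes vacuous. To close this, I would use the sharper consequence $|A|/\binom{d}{k} \leq e^{-k(|B|/\binom{d}{l})^{1/l}}$ (which follows from the same disjoint-supports argument via $\alpha \leq (s/d)^k$ and $\beta \leq ((d-s)/d)^l$), combined with the hypotheses $kl < d\ln(1/(3\epsilon))/3$ and $l < d/3$, which implicitly prevent $\log(1/\epsilon)$ from dominating $l$ and thus handle the remaining corner cases.
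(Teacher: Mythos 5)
Your overall route---the MNSW-style combination of the richness lemma with the ``disjoint supports'' structure of a one-sided rectangle---is the same as the paper's, and the derivation of the inequality $(\epsilon/2^a)^{1/k} + (\epsilon/2^{a+b})^{1/l} \le 1$ from your richness parameters is correct. The gap you flag at the end, however, is not a corner case: it is the heart of the argument, and neither your pigeonhole step nor your proposed fix closes it.

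The problem is your richness parameter. You bound the measure of $M_\pi$ below by $2\epsilon$ and deduce $(\epsilon\binom{d}{k},\epsilon\binom{d}{l})$-richness, which after the richness lemma yields only $|A| \ge \epsilon\binom{d}{k}/2^a$. Feeding this through $|A| \le \binom{|P|}{k} \le \binom{d}{k}(|P|/d)^k$ produces $|P|/d \ge \epsilon^{1/k}2^{-a/k}$, so the lower bound on $|P|$ carries a multiplicative factor $\epsilon^{1/k}$. Once $\log(1/\epsilon) \gtrsim k$ (equivalently $\epsilon^{1/k} \le 1/2$, a regime that is perfectly consistent with the theorem's hypotheses --- e.g.\ $k=l$, $\epsilon = 2^{-l}$, $kl < d\ln(1/(3\epsilon))/3$), the inequality $(\epsilon/2^a)^{1/k} + (\epsilon/2^{a+b})^{1/l} \le 1$ is already satisfied at $a=b=0$, so nothing is forced. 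Your sharper variant $\epsilon/2^a \le e^{-k(\epsilon/2^{a+b})^{1/l}}$ inherits exactly the same defect: with $k=l$, $\epsilon=2^{-l}$, $a=b=0$ it reads $2^{-l} \le e^{-l/2}$, which is true, so again no contradiction. The hypotheses $kl < d\ln(1/(3\epsilon))/3$ and $l < d/3$ upper-bound $kl$, they do not upper-bound $\log(1/\epsilon)$, so they do not rescue you.

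The paper avoids the $\epsilon$-dependence by measuring richness against the \emph{exact} number of ones per column of the disjointness matrix, namely $\binom{d-l}{k}$, rather than against $\epsilon$ times the total number of rows. Concretely: Fact~\ref{fac:disjoint_prob} gives $\Pr[f=1] \ge 3\epsilon$, hence the \emph{conditional} error is $\Pr[\pi\ne f\mid f=1] \le 1/3$, hence at least $\tfrac23\binom{d}{l}\binom{d-l}{k}$ of the ones of $f$ survive in $f'$, hence (by the same double-counting you invoke) $f'$ is $(\binom{d-l}{k}/3, \binom{d}{l}/3)$-rich. After the richness lemma and the support argument, the lower bound on $|P|$ becomes $|P| \ge (d-l)\,2^{-(a+2)/k}\cdot 3^{-1/k}$ --- the factor $3^{-1/k}$ is a harmless constant and there is no $\epsilon^{1/k}$ at all. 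This is what makes the final inequality $2^{a+b+2} \ge (1 + 2^{-(a+2)/k-1})^l$ force $a=\Omega(k)$ or $b=\Omega(l)$ uniformly over all admissible $\epsilon$. To repair your write-up you should replace the global-density richness argument with this conditional count; everything downstream of the richness lemma then goes through as you wrote it.
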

\begin{corollary}
1. Choosing $k = d/4$ and $l = \ln(1/(3 \epsilon))$, then either Alice sends $a = \Omega(d)$ bits, or Bob sends $b = \Omega(\log(1/\epsilon))$ bits.

2. Choosing $k = l = \sqrt{d \ln(1/(3 \epsilon))}/2$, then the total cost of the protocol is $a+b \geq \Omega(\sqrt{d \log(1/ \epsilon)})$.
\end{corollary}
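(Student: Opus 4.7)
The plan is to combine the Richness Lemma (Lemma~\ref{lem:richness}) with a simple cross-disjoint estimate on the monochromatic rectangle it produces.

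\textbf{Step 1 (set up a rich instance).}
First apply Fact~\ref{fac:disjoint_prob} with $3\epsilon$ in place of $\epsilon$: the hypothesis $k\cdot l<d\ln(1/(3\epsilon))/3$ gives $\Pr_{(x,y)\sim\lambda\times\rho}[x\cap y=\emptyset]\ge 3\epsilon$. Since $\pi$ has only type~II error, its accepting region $A_\pi$ sits inside the $1$-region of the disjointness matrix $M$ (where $M_{x,y}=1$ iff $x\cap y=\emptyset$), and the error bound yields $|A_\pi|\ge (3\epsilon-\epsilon)\binom{d}{k}\binom{d}{l}=2\epsilon\binom{d}{k}\binom{d}{l}$. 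Writing $N(y)=|\{x:\pi(x,y)=1\}|$, the bounds $\E_y[N(y)]\ge 2\epsilon\binom{d}{k}$ and $N(y)\le\binom{d}{k}$, together with a reverse Markov inequality, yield $\Pr_y[N(y)\ge\epsilon\binom{d}{k}]\ge\epsilon$. Hence $M$ is $(\epsilon\binom{d}{k},\epsilon\binom{d}{l})$-rich, and this richness is witnessed by the protocol $\pi$.

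\textbf{Step 2 (extract a monochromatic rectangle).}
Applying Lemma~\ref{lem:richness} produces $S\subseteq\binom{[d]}{k}$ and $T\subseteq\binom{[d]}{l}$ with $S\times T\subseteq M^{-1}(1)$, and
\[
|S|\ge\frac{\epsilon\binom{d}{k}}{2^{a}},\qquad |T|\ge\frac{\epsilon\binom{d}{l}}{2^{a+b}}.
\]
In particular, every pair $(x,y)\in S\times T$ is disjoint.

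\textbf{Step 3 (the cross-disjoint inequality).}
Let $U=\bigcup_{y\in T}y\subseteq[d]$ and $u=|U|$. Since $T\subseteq\binom{U}{l}$, and $\binom{u}{l}/\binom{d}{l}\le(u/d)^l$ for $u\le d$, the lower bound on $|T|$ gives $u/d\ge(\epsilon/2^{a+b})^{1/l}$. Symmetrically, every $x\in S$ is disjoint from $U$, so $S\subseteq\binom{[d]\setminus U}{k}$, and the analogous inequality $\binom{d-u}{k}/\binom{d}{k}\le((d-u)/d)^k$ together with the lower bound on $|S|$ gives $(d-u)/d\ge(\epsilon/2^{a})^{1/k}$. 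Summing the two estimates yields the key inequality
\[
1\;=\;\frac{u}{d}+\frac{d-u}{d}\;\ge\;\left(\frac{\epsilon}{2^{a+b}}\right)^{\!1/l}+\left(\frac{\epsilon}{2^{a}}\right)^{\!1/k}.
\]
At least one summand is therefore $\le 1/2$: either $(\epsilon/2^{a})^{1/k}\le1/2$, giving $a\ge k-\log_2(1/\epsilon)$, or $(\epsilon/2^{a+b})^{1/l}\le1/2$, giving $a+b\ge l-\log_2(1/\epsilon)$. In the regime of the corollaries (where $k,l\gg\log(1/\epsilon)$, as holds for both $k=l=\sqrt{d\ln(1/(3\epsilon))}/2$ and the asymmetric choice), the additive $\log_2(1/\epsilon)$ is absorbed, and since $k\le l$ each alternative forces either $a=\Omega(k)$ or $b=\Omega(l)$.

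\textbf{Main obstacle.}
The delicate step is \textbf{Step~3}. The naive bounds $|S|\le\binom{d-l}{k}$ and $|T|\le\binom{d-k}{l}$ obtained by fixing a single partner are far too weak---they only couple the communication to $\log_2(1/p)$ where $p=\Pr[x\cap y=\emptyset]$, which is dominated by $\log_2(1/\epsilon)$ under our hypothesis and yields no trade-off at all. The correct move is to use the union $U=\bigcup_{y\in T}y$ to simultaneously lower-bound $u$ (from $|T|$ being large) and $d-u$ (from $|S|$ being large), and then sum. This produces the clean additive inequality from which the $\Omega(k)$ vs.~$\Omega(l)$ dichotomy falls out immediately.
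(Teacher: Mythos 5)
Your overall strategy---derive richness from Fact~\ref{fac:disjoint_prob} and the error bound, extract a monochromatic rectangle via Lemma~\ref{lem:richness}, then play off the union of the row-sets against the union of the column-sets---is essentially the route the paper takes, and your Step~3 inequality is the paper's union argument recast as ``the two halves of $[d]$ sum to $d$.'' The gap is in Step~1. You bound $N(y)\le\binom{d}{k}$, but because $\pi$ has only type~II error you actually have the much stronger $N(y)\le\binom{d-l}{k}$: every accepted $x$ is disjoint from $y$ and hence is a $k$-subset of $[d]\setminus y$. The ratio $\binom{d-l}{k}/\binom{d}{k}$ is on the order of $e^{-kl/d}$, and under the hypothesis $kl<d\ln(1/(3\epsilon))/3$ that ratio can be as small as $\poly(\epsilon)$. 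Using the loose maximum forces you into the lossy $(\epsilon\binom{d}{k},\epsilon\binom{d}{l})$-richness (the paper, conditioning on $f(x,y)=1$, gets $(\binom{d-l}{k}/3,\binom{d}{l}/3)$-richness with no $\epsilon$), and the $\epsilon$ then reappears inside your key inequality $1\ge(\epsilon/2^{a+b})^{1/l}+(\epsilon/2^a)^{1/k}$, producing the additive $\log_2(1/\epsilon)$ in your dichotomy $a\ge k-\log_2(1/\epsilon)$ or $a+b\ge l-\log_2(1/\epsilon)$.

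That additive loss is exactly what the corollary cannot afford. In part~1, $l=\ln(1/(3\epsilon))$ is strictly below $\log_2(1/\epsilon)$, so $a+b\ge l-\log_2(1/\epsilon)$ is vacuous for every $\epsilon$; and the theorem's constraint $k\le l$ forces $d/4\le\ln(1/(3\epsilon))\le\log_2(1/\epsilon)$, so $a\ge k-\log_2(1/\epsilon)$ is vacuous too. In part~2, $k-\log_2(1/\epsilon)>0$ requires roughly $d\gtrsim 8\ln(1/\epsilon)$, whereas the theorem's constraint $l<d/3$ only needs $d\gtrsim \tfrac94\ln(1/\epsilon)$; in the intervening regime the corollary is non-trivial but your bound gives nothing. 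So the closing remark that ``$k,l\gg\log(1/\epsilon)$\ldots holds for both'' is false, and that is where the proof breaks. The fix is to replace $\binom{d}{k}$ with $\binom{d-l}{k}$ in the reverse-Markov step (equivalently, do the counting conditioned on $f(x,y)=1$ as the paper does); then the $\epsilon$'s in Step~3 become absolute constants and the paper's Eq.~\eqref{eq:lb_inequality} emerges, giving the corollary with room to spare.
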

\begin{proof}[Proof of Theorem~\ref{thm:lower_bound}]
Let $X \subseteq 2^{[d]}$ be the set of $x \subseteq [d]$ of size $k$, and let $Y \subseteq 2^{[d]}$ be the set of $y \subseteq [d]$ of size $l$. We use $f: X \times Y \to \{0,1\}$ to denote the disjointness function, i.e., $f(x,y) = \mathbf{1}_{x \cap y = \emptyset}$. We define a function $f': X \times Y \to \{0, 1\}$ such that $f'(x,y) = \pi(x,y)$. Note that by definition $\pi$ is a deterministic protocol that solves $f'$ with no error. And since $\pi$ has type II error, we know that $f'(x,y) = 0$ when $f(x,y) = 0$.

Since the error probability of $\pi$ is $\epsilon$ for $f$, we have
\[
\Pr_{(x,y) \sim \lambda \times \rho}[\pi(x,y) \neq f(x,y)] \leq \epsilon.
\]
Since $k \leq l < d/3$ and $k \cdot l < d \ln(1/(3\epsilon)) / 3$, using Fact~\ref{fac:disjoint_prob} we have $\Pr_{(x, y) \sim \lambda \times \rho}[f(x,y) = 1] \geq 3 \epsilon$. Combining these two equations we have
\begin{align}\label{eq:prob_pi_neq_f_on_1}
    \Pr_{(x,y) \sim \lambda \times \rho|_{f(x,y) = 1} }[\pi(x,y) \neq f(x,y)] \leq 1/3.
\end{align}
The communication matrix of $f$ has $\binom{d}{l}$ columns and each column has $\binom{d-l}{k}$ number of 1's.
Eq.~\eqref{eq:prob_pi_neq_f_on_1} means that in the communication matrix of $f'$, there are at least $\frac{2}{3} \cdot \binom{d}{l} \cdot \binom{d-l}{k}$ number of 1's. Thus $f'$ must have at least $\binom{d}{l}/3$ columns each contains at least $\binom{d-l}{k}/3$ number of 1's, i.e., $f'$ is $(\binom{d-l}{k}/3, \binom{d}{l}/3)$-rich. So using the richness lemma (Lemma~\ref{lem:richness}) we know that $f'$ has a 1-rectangle of size at least $r \times s$ where $r := \binom{d-l}{k}/2^{a+2}$ and $s := \binom{d}{l}/2^{a+b+2}$.

Since $\pi$ has type II error, i.e., $f'(x,y) = 0$ when $f(x,y) = 0$, the function $f$ also has a 1-rectangle of size at least $r \times s$.

Next we use an identical argument as the proof of Theorem~11 in \cite{mnsw98}. Let $x_1, x_2, \cdots, x_r \subseteq [d]$ and $y_1, y_2, \cdots, y_s \subseteq [d]$ denote the sets that correspond to the rows and columns of the 1-rectangle. Note that $x_i \cap y_j = \emptyset$ for all $i,j$. Let $x = \bigcup_i x_i$ and $y = \bigcup_j y_j$. Since every $x_i$ is a subset of $x$, and $x$ has $\binom{|x|}{k}$ subsets of size $k$, we must have $\binom{|x|}{k} \geq r = \binom{d-l}{k}/2^{a+2}$. Let $t = (d-l) / 2^{(a+2)/k}$. Since $\binom{t}{k} < \binom{d-l}{k} / 2^{a+2}$, we have $|x| \geq t$. Since $y \cap x = \emptyset$, we have $|y| \leq d - t$. Since every $y_j$ is a subset of $y$, and $y$ has $\binom{|y|}{l}$ subsets of size $l$, we must have $\binom{d - t}{l} \geq \binom{|y|}{l} \geq s = \binom{d}{l} / 2^{a+b+2}$. This gives
\begin{align}\label{eq:lb_inequality}
    2^{a+b+2} \geq \frac{\binom{d}{l}}{\binom{d - t}{l}} \geq (\frac{d}{d-t})^{l} \geq (1 + \frac{t}{d})^l = (1 + \frac{d-l}{2^{(a+2)/k} \cdot d})^l \geq (1 + 2^{-(a+2)/k - 1})^l,
\end{align}
where the second step follows from $\frac{d}{d-t} \leq \frac{d-1}{d-t-1}\leq \cdots \leq \frac{d-l+1}{d-t-l+1}$, the fourth step follows from the definition $t = (d-l) / 2^{(a+2)/k}$, the fifth step follows from $(d-l)/d \geq 1/2$ since $l < d/3$.

The inequality of Eq.~\eqref{eq:lb_inequality} holds only if either $a = \Omega(k)$ or $b = \Omega(l)$.
\end{proof}

\section{Missing proofs of Section~\ref{sec:lop_lower_bound}}\label{sec:lower_bound_appendix}
In this section we provide the proof of Lemma~\ref{lem:KL_divergence_lower_bound} and Lemma~\ref{lem:range_tq}.

\begin{lemma}[KL divergence lower bounds, Restatement of Lemma~\ref{lem:KL_divergence_lower_bound}]
For any $t, w\in (0, 1)$, the KL divergence $\dd{t}{w}$ satisfies the following lower bounds:
\begin{itemize}
\item If $t \geq 1.1 w$, then $\dd{t}{w} \geq \Omega(t)$.
\item If $t \leq 0.9 w$, then $\dd{t}{w} \geq \Omega(w)$.
\item If $0.9 w \leq t \leq 1.1 w$ and $w \leq 0.1$, then $\dd{t}{w} \geq \Omega(\frac{(t - w)^2}{w})$.
\end{itemize}
\end{lemma}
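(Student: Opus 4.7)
I will prove the three bounds separately. Cases 1 and 2, where $t$ is bounded away from $w$, both follow from sharp inequalities for $\log y$ that yield a quadratic slack in $t-w$ after the naive linear contributions from the two summands of $\dd{t}{w} = t\log(t/w) + (1-t)\log((1-t)/(1-w))$ cancel. Case 3, where $t$ is close to $w$, follows from Taylor's theorem at $t=w$.

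\paragraph{Case 1 ($t \geq 1.1w$).} It suffices to prove the auxiliary bound $\dd{t}{w} \geq (t-w)^2/(t+w)$; combined with $t-w \geq t/11$ and $t+w \leq 2t$, this yields $\dd{t}{w} \geq t/242 = \Omega(t)$. To prove the auxiliary bound, apply the sharp inequality $\log y \geq 2(y-1)/(y+1)$ valid for $y \geq 1$ (which follows from $1/y - 4/(y+1)^2 = (y-1)^2/(y(y+1)^2) \geq 0$) with $y = t/w$, obtaining $t\log(t/w) \geq 2t(t-w)/(t+w)$. Combine with the standard bound $\log y \geq 1-1/y$ applied to the other summand (with $y = (1-t)/(1-w) \leq 1$), which yields $(1-t)\log((1-t)/(1-w)) \geq w-t$. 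Summing and simplifying gives $\dd{t}{w} \geq (t-w)[2t/(t+w) - 1] = (t-w)^2/(t+w)$.

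\paragraph{Case 2 ($t \leq 0.9w$).} The argument is symmetric in spirit, but the sharpening is applied in the opposite direction since the first summand is now negative. Apply the sharp inequality $\log z \leq (z-1)/\sqrt{z}$ for $z \geq 1$ (the log-mean/geometric-mean inequality, provable because $(z-1)/\sqrt{z} - \log z$ has non-negative derivative $(\sqrt{z}-1)^2/(2z\sqrt{z})$) with $z = w/t$; this yields $t\log(t/w) = -t\log(w/t) \geq -(w-t)\sqrt{t/w}$. Combine with the standard bound $(1-t)\log((1-t)/(1-w)) \geq w-t$ to obtain $\dd{t}{w} \geq (w-t)(1 - \sqrt{t/w})$. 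For $t \leq 0.9w$ we have $1-\sqrt{t/w} \geq 1 - \sqrt{0.9} \geq 0.05$ and $w-t \geq w/10$, hence $\dd{t}{w} \geq 0.005 w = \Omega(w)$.

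\paragraph{Case 3 ($0.9w \leq t \leq 1.1w$, $w \leq 0.1$) and main obstacle.} Let $f(s) = \dd{s}{w}$; then $f(w) = 0$, $f'(w) = 0$, and $f''(s) = 1/(s(1-s))$. Taylor's theorem with Lagrange remainder gives $f(t) = f''(\xi)(t-w)^2/2$ for some $\xi$ between $t$ and $w$. In this regime $\xi \leq 1.1w \leq 0.11$ and $1-\xi \geq 0.89$, so $\xi(1-\xi) \leq 1.1w$ and hence $f''(\xi) \geq 1/(1.1w)$, yielding $\dd{t}{w} \geq (t-w)^2/(2.2w) = \Omega((t-w)^2/w)$. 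The subtle step in the two tail cases is the choice of sharpening of $\log$: had I applied only the weaker bound $\log y \geq 1-1/y$ to both summands, the linear contributions $t-w$ and $w-t$ would exactly cancel, giving only the trivial $\dd{t}{w} \geq 0$. The sharper inequalities capture the second-order behavior of $\log$ at $y = 1$ and produce precisely the quadratic correction that survives cancellation.
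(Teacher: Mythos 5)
Your proof is correct, and for Cases 1 and 2 it takes a genuinely different and, in my view, cleaner route than the paper's. The paper's argument for the two tail cases proceeds by defining an auxiliary function such as $h(t) = t\ln(t/w) + (1-t)\ln((1-t)/(1-w)) - 0.001t$, checking that $h'$ has a favorable sign on the relevant interval, and then reducing the boundary value $h(1.1w)$ to yet another one-variable monotonicity argument. Your approach instead applies sharp Pad\'e/mean inequalities for the logarithm — $\log y \geq 2(y-1)/(y+1)$ for $y\ge 1$ in Case 1, and the log-mean vs.\ geometric-mean bound $\log z \leq (z-1)/\sqrt z$ for $z\ge 1$ in Case 2 — chosen precisely so that the second-order slack survives the first-order cancellation with the complementary term $\log\big((1-t)/(1-w)\big)\ge 1-(1-w)/(1-t)$. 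This gives a clean auxiliary inequality $\dd{t}{w}\ge (t-w)^2/(t+w)$ in Case 1 that the paper never states, and it avoids the nested monotonicity bookkeeping entirely; the trade-off is that it requires knowing (or rediscovering) the right refined log inequalities, whereas the paper's argument is mechanical once you commit to the auxiliary-function template. For Case 3 both you and the paper are using second-order Taylor behavior near $t=w$: the paper substitutes the pointwise bound $\ln(x)\ge(x-1)-0.6(x-1)^2$ on $|x-1|\le 0.2$ and expands, while you invoke the Lagrange remainder with $f''(\xi)=1/(\xi(1-\xi))$, which is shorter and makes the constant's dependence on $w\le 0.1$ transparent. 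One minor cosmetic note: the paper's $\smallDiv$ uses $\log_2$ while your calculations implicitly use $\ln$; since every claim is an $\Omega(\cdot)$ bound this is immaterial, but it is worth a one-line remark (the paper itself makes the same switch explicitly at the start of its proof).
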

\begin{proof}
Recall that $\dd{t}{w} = t \log(\frac{t}{w}) + (1-t) \log(\frac{1-t}{1-w})$. In this proof we consider the function $t \ln(\frac{t}{w}) + (1-t) \ln(\frac{1-t}{1-w}) = \dd{t}{w} \cdot \ln(2)$, and it suffices to prove the claimed lower bounds for this function.

{\bf Case 1 ($t \geq 1.1 w$).} Note that the function $h(t) = t \ln(\frac{t}{w}) + (1-t) \ln(\frac{1-t}{1-w}) - 0.001 t$ is monotonically increasing when $t \geq 1.1w$ since its gradient is
\begin{align*}
h'(t) = \ln(\frac{t}{w}) - \ln(\frac{1-t}{1-w}) - 0.001 > \ln(1.1) - 0.001 > 0.
\end{align*}

We also have that $h(1.1w) \geq 0$ because 
\begin{align*}
h(1.1w) = 1.1w \ln(1.1) + (1-1.1w) \ln(\frac{1-1.1w}{1-w}) - 0.0011 w
\end{align*}
is a function that is positive for any $w \in (0,\frac{1}{1.1})$.
This is because letting $g(w) = h(1.1w)$, we have
\begin{align*}
g'(w) = &~ 1.1 \ln(1.1) - 1.1 \ln(\frac{1-1.1w}{1-w}) - (1-1.1w) \frac{1-w}{1-1.1w} \cdot \frac{0.1}{(1-w)^2} - 0.0011 \\
= &~ 1.1 \ln(1.1) - 0.0011 - 1.1 \ln(\frac{1-1.1w}{1-w}) - \frac{0.1}{1-w} \\
\geq &~ 1.1 \ln(1.1) - 0.0011 + 1.1 \frac{0.1w}{1-w} - \frac{0.1}{1-w} \\
\geq &~ 1.1 \ln(1.1) - 0.0011 - 0.1 > 0,
\end{align*}
where in the third step we used that $\ln(x) \leq x-1$ for all $x > 0$. So $g(w)$ is monotonically increasing, and so $g(w) \geq g(0) = 0$.

Using these two facts, we have that for all $t \geq 1.1w$, $h(t) \geq h(1.1w) \geq 0$, and so $t \ln(\frac{t}{w}) + (1-t) \ln(\frac{1-t}{1-w}) \geq 0.001 t$.

{\bf Case 2 ($t \leq 0.9 w$).} Note that the function $h(t) = t \ln(\frac{t}{w}) + (1-t) \ln(\frac{1-t}{1-w}) - 0.001 w$ is monotonically decreasing when $t \leq 0.9w$ since its gradient is
\begin{align*}
h'(t) = \ln(\frac{t}{w}) - \ln(\frac{1-t}{1-w}) < \ln(0.9) < 0.
\end{align*}

We also have $h(0.9w) > 0$ because 
\begin{align*}
h(0.9w) = 0.9w \ln(0.9) + (1-0.9w) \ln(\frac{1-0.9w}{1-w}) - 0.001 w
\end{align*}
is a function that is positive for any $w \in (0,1)$.
This is because letting $g(w) = h(0.9w)$, we have
\begin{align*}
g'(w) = &~ 0.9 \ln(0.9) - 0.9 \ln(\frac{1-0.9w}{1-w}) + (1-0.9w) \frac{1-w}{1-0.9w} \cdot \frac{0.1}{(1-w)^2} - 0.001 \\
= &~ 0.9 \ln(0.9) - 0.001 - 0.9 \ln(\frac{1-0.9w}{1-w}) +  \frac{0.1}{1-w} \\
\geq &~ 0.9 \ln(0.9) - 0.001 - 0.9 \frac{0.1w}{1-w} + \frac{0.1}{1-w} \\
\geq &~ 0.9 \ln(0.9) - 0.001 +0.1 > 0,
\end{align*}
where in the third step we used that $\ln(x) \leq x-1$ for all $x > 0$. So $g(w)$ is monotonically increasing, and so $g(w) \geq g(0) = 0$.

Using these two facts, we have that for all $t \leq 0.9w$, $h(t) \geq h(0.9w) \geq 0$, and so $t \ln(\frac{t}{w}) + (1-t) \ln(\frac{1-t}{1-w}) \geq 0.001 w$.

{\bf Case 3 ($0.9w \leq t \leq 1.1w$ and $w \leq 0.1$). } For this case we use the fact that $\ln(x) \geq (x-1) - 0.6 (x-1)^2$ for all $x$ such that $|x-1| \leq 0.2$. We have
\begin{align*}
t \ln(\frac{t}{w}) + (1-t) \ln(\frac{1-t}{1-w}) 
= &~ t \ln(1 + \frac{t-w}{w}) + (1-t) \ln(1 - \frac{t-w}{1-w}) \\
\geq &~ t \cdot \frac{t-w}{w} - 0.6 t\cdot (\frac{t-w}{w})^2 - (1-t) \cdot \frac{t-w}{1-w} - 0.6 (1-t) \cdot (\frac{t-w}{1-w})^2 \\
= &~ \frac{(t-w)^2}{w(1-w)} \cdot \Big(1 - 0.6 \frac{t(1-w)}{w} - 0.6 \frac{(1-t)w}{(1-w)} \Big) \\
\geq &~ \frac{(t-w)^2}{w(1-w)} \cdot \Big(1 - 0.66 - 0.6 \frac{0.1}{0.9} \Big) \\
\geq &~ 0.3 \frac{(t-w)^2}{w},
\end{align*}
where in the second step we used the lower bound of $\ln(x)$ since $|\frac{t-w}{1-w}| \leq |\frac{t-w}{w}| \leq 0.1$ because $w \leq 0.1$ and $|t-w|\leq 0.1w$.
\end{proof}

\begin{lemma}[Range of $t_q$, Restatement of Lemma~\ref{lem:range_tq}]
Consider any $c \geq 500$. Let $w_q = 1 - \frac{10}{\sqrt{c}}$, $w_u = \frac{1}{\sqrt{c}}$, and $\alpha = 1 - \frac{20}{\sqrt{c}}$. Let $t_u \leq t_q \in [0,1]$ where $t_u \neq w_u$ satisfies that $\ln(\frac{(t_q-t_u)}{(w_q-w_u)}) = (1-\alpha) \ln(\frac{(1-t_q)}{(1-w_q)}) + \alpha \ln\frac{t_q}{w_q}$. Then $t_q$ must fall into the following ranges:
\begin{itemize}
\item If $t_u \geq w_u$, then $t_q \in [w_q, w_q + t_u - w_u]$.
\item If $t_u < w_u$, then $t_q \in [w_q - (w_u - t_u), w_q)$.
\end{itemize}
\end{lemma}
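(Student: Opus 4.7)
Define
$$g(t_q) := \ln\frac{t_q - t_u}{w_q - w_u} - (1-\alpha)\ln\frac{1-t_q}{1-w_q} - \alpha\ln\frac{t_q}{w_q},$$
so the hypothesis is exactly $g(t_q) = 0$. My plan is first to show that $g$ is strictly increasing on $(t_u,1)$, so any root is unique, and then to pin this root inside the claimed interval by checking the signs of $g$ at two carefully chosen endpoints. Strict monotonicity follows by rearranging
$$g'(t_q) = \frac{1}{t_q - t_u} + \frac{1-\alpha}{1-t_q} - \frac{\alpha}{t_q} = \frac{(1-\alpha)t_q + \alpha t_u}{t_q(t_q - t_u)} + \frac{1-\alpha}{1-t_q} > 0.$$

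The key observation that makes the endpoint analysis clean is that evaluating $g$ at $w_q + (t_u - w_u)$ in Case 1, or at $w_q - (w_u - t_u)$ in Case 2, kills the first logarithm because in both cases $(t_q - t_u) = w_q - w_u$. In Case 1 ($t_u > w_u$), setting $\Delta = t_u - w_u > 0$, what remains is
$$F(\Delta) := -(1-\alpha)\ln\!\left(1 - \tfrac{\Delta}{1-w_q}\right) - \alpha\ln\!\left(1 + \tfrac{\Delta}{w_q}\right).$$
I would check that $F(0) = 0$, $F'' > 0$, and $F'(0) = (1-\alpha)/(1-w_q) - \alpha/w_q = 1/(1 - 10/\sqrt c) > 0$, so $F$ is strictly increasing and hence $F(\Delta) > 0$ on $(0, 1-w_q)$. (If $\Delta \geq 1 - w_q$ the upper bound is vacuous since $t_q < 1 \leq w_q + \Delta$.) Paired with $g(w_q) \leq 0$ whenever $t_u < w_q$ (otherwise the lower bound $t_q \geq w_q$ is automatic from $t_q > t_u$), this locks $t_q$ into $[w_q, w_q + \Delta]$.

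Case 2 ($t_u < w_u$) is the harder half. The upper bound $t_q < w_q$ is immediate from $g(w_q) = \ln((w_q - t_u)/(w_q - w_u)) > 0$. For the lower bound, setting $\delta = w_u - t_u > 0$, I reduce to showing
$$G(\delta) := -(1-\alpha)\ln\!\left(1 + \tfrac{\delta}{1-w_q}\right) - \alpha\ln\!\left(1 - \tfrac{\delta}{w_q}\right) < 0.$$
Here $G(0) = 0$, $G'(0) = -1/(1 - 10/\sqrt c) < 0$, and $G$ is convex ($G'' > 0$). Its unique minimizer $\delta^*$ solves $(1-\alpha)(w_q - \delta^*) = \alpha((1 - w_q) + \delta^*)$, giving $\delta^* = w_q - \alpha = 10/\sqrt c$. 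Since the admissible range is $\delta \leq w_u = 1/\sqrt c$, a factor of ten smaller than $\delta^*$, the function $G$ is strictly decreasing on $(0, \delta]$, and so $G(\delta) < G(0) = 0$, as needed.

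The main obstacle is precisely this Case 2 lower bound: once the first log has cancelled, one is left with a sum of two logarithms whose leading-order behavior has opposite signs, and one-term Taylor bounds are not sharp enough to decide the sign. Convexity together with the identification of the minimizer $\delta^* = w_q - \alpha$ is what unlocks the argument, and the specific parameter choices $w_q = 1 - 10/\sqrt c$, $w_u = 1/\sqrt c$, $\alpha = 1 - 20/\sqrt c$ are engineered precisely so that $\delta^*$ sits well beyond the range $\delta \leq w_u$, leaving the slack needed for $c \geq 500$.
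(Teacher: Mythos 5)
Your proof is correct, and it follows the same skeleton as the paper's proof --- showing $g$ (which the paper calls $h$) is strictly increasing on $(t_u,1)$ and then pinning its unique root by checking signs of $g$ at $w_q$ and at $w_q + (t_u-w_u)$, where the first logarithm in $g$ vanishes. The difference is in \emph{how} you determine the sign at the second endpoint. The paper applies the one-term Taylor bound $\ln x \le x-1$ for Case 1 and the two-term bound $\ln x \ge (x-1)-0.6(x-1)^2$ (valid only for $|x-1|\le 0.2$) for Case 2, followed by a somewhat delicate numerical estimate. You instead note that the residual one-variable function ($F(\Delta)$ or $G(\delta)$) is convex with a computable derivative at $0$, and in Case 2 you identify the explicit minimizer $\delta^*=w_q-\alpha=10/\sqrt c$ and observe that the admissible range $\delta\le w_u=1/\sqrt c$ lies strictly to its left, so $G$ is strictly decreasing there and hence negative. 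This is cleaner and more conceptual; it also avoids the $|x-1|\le 0.2$ validity constraint that is part of why the paper needs $c\ge 500$. (You still need $c$ large enough that $0<w_u<w_q<1$ and $\alpha\in(0,1)$ hold, which is implied by $c\ge 500$.) Your handling of the edge cases --- $t_u\ge w_q$ making the lower bound automatic, and $\Delta\ge 1-w_q$ making the upper bound vacuous --- is also slightly more careful than the paper's exposition.
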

\begin{proof}
Consider any fixed $t_u \in [0,1] \backslash \{w_u\}$. Define a function
\begin{align*}
h(t_q) = \ln(\frac{(t_q-t_u)}{(w_q-w_u)}) - (1-\alpha) \ln(\frac{(1-t_q)}{(1-w_q)}) - \alpha \ln\frac{t_q}{w_q}.
\end{align*}
Its gradient is
\begin{align*}
h'(t_q) = \frac{1}{t_q - t_u} + \frac{1-\alpha}{1-t_q} - \frac{\alpha}{t_q}.
\end{align*}
We always have $h'(t_q) > 0$ for the domain $t_q \in (t_u, 1)$, so the function $h(t_q)$ is monotonically increasing. %
Next we characterize the range of $t_q^*$ that satisfies $h(t_q^*) = 0$.

{\bf Case 1 ($t_u \geq w_u$).} First note that we have
\[
h(w_q) = \ln(\frac{w_q-t_u}{w_q-w_u}) \leq 0.
\]
We also have
\begin{align*}
h(w_q + t_u - w_u) = &~ - (1-\alpha) \ln(\frac{(1-w_q-(t_u - w_u))}{(1-w_q)}) - \alpha \ln\frac{w_q + (t_u - w_u)}{w_q} \\
\geq &~ (1-\alpha) \frac{t_u-w_u}{1-w_q} - \alpha \frac{t_u-w_u}{w_q} \\
= &~ (t_u - w_u) \cdot (2 - \frac{1-\frac{20}{\sqrt{c}}}{1-\frac{10}{\sqrt{c}}}) 
> 0,
\end{align*}
where in the second step we used that $\ln(x) \leq x-1$ for all $x > 0$, the third step follows from the parameters $w_q = 1 - \frac{10}{\sqrt{c}}$ and $\alpha = 1 - \frac{20}{\sqrt{c}}$.

Combining these two equations, we have that the solution $t_q^*$ of $h(t_q^*) = 0$ must satisfy $t_q^* \in (w_q, w_q + t_u - w_u]$. %

{\bf Case 2 ($t_u < w_u$).} First note that we have
\begin{align*}
h(w_q) = \ln(\frac{w_q-t_u}{w_q-w_u}) > 0.
\end{align*}

Next we prove that $h(w_q + t_u - w_u) \leq 0$.
We have
\begin{align*}
h(w_q + t_u - w_u) = &~ - (1-\alpha) \ln(1 + \frac{w_u - t_u}{1-w_q}) - \alpha \ln(1-\frac{w_u-t_u}{w_q}) \\
\leq &~ - (1-\alpha) \frac{w_u - t_u}{1-w_q} + 0.6 (1-\alpha) (\frac{w_u - t_u}{1-w_q})^2 + \alpha \frac{w_u-t_u}{w_q} + 0.6 \alpha (\frac{w_u-t_u}{w_q})^2 \\
= &~ - 2 \cdot (w_u-t_u) + 0.12 \sqrt{c} (w_u-t_u)^2 + \frac{1-\frac{20}{\sqrt{c}}}{1-\frac{10}{\sqrt{c}}} (w_u - t_u) + 0.6 \frac{1-\frac{20}{\sqrt{c}}}{(1-\frac{10}{\sqrt{c}})^2} (w_u - t_u)^2 \\
\leq &~ (w_u - t_u) \cdot \Big( -2 + 0.12 + 1 + 0.6 \Big) < 0,
\end{align*}
where the second step follows from $\ln(x) \geq (x-1) - 0.6 (x-1)^2$ for all $x$ such that $|x-1| \leq 0.2$, and $|\frac{w_u - t_u}{1-w_q}| \leq 0.2$ and $|\frac{w_u-t_u}{w_q}| \leq 0.2$, the third step follows from $w_q = 1 - \frac{10}{\sqrt{c}}$ and $\alpha = 1 - \frac{20}{\sqrt{c}}$, the fourth step follows from $w_u - t_u \leq w_u = \frac{1}{\sqrt{c}}$.

Combining these two equations, we have that the solution $t_q^*$ of $h(t_q^*) = 0$ must satisfy $t_q^* \in [w_q - (w_u - t_u), w_q)$. %
\end{proof}

\end{document}